\documentclass[11pt,letterpaper]{article}
\usepackage[margin=1in]{geometry}
\usepackage[english]{babel}
\usepackage[ruled,linesnumbered,vlined]{algorithm2e}
\usepackage{algpseudocode}
\usepackage{amsmath}
\usepackage{amssymb}
\usepackage{amsthm}
\usepackage{array} % 提供 p{} 功能
\usepackage{booktabs} % 可选，提供更美观的表格线
\usepackage{framed}
\usepackage{float}
\usepackage{graphicx}
\usepackage{makecell}
\usepackage{mathtools}
\usepackage{multirow}
\usepackage[numbers]{natbib}
\usepackage{subcaption}
\usepackage{tikz}
\usepackage[table]{xcolor} % Required for colored table lines and rows
\usepackage[colorlinks=true, allcolors=blue]{hyperref}

\usetikzlibrary{arrows.meta,patterns,positioning}

\usepackage{pgfplots}
\usepackage{pgfplotstable}
\pgfplotsset{
    compat=1.17,
    name nodes near coords/.style={
        every node near coord/.append style={
            name=#1-\coordindex,
            alias=#1-last,
        },
    },
    name nodes near coords/.default=coordnode,
}

\definecolor{shadecolor}{gray}{0.9}
\newtheorem{theorem}{Theorem}[section]
\newtheorem{definition}[theorem]{Definition}

\newtheorem{corollary}[theorem]{Corollary}
\newtheorem{lemma}[theorem]{Lemma}
\newtheorem{invariant}[theorem]{Invariant}
\newtheorem{example}[theorem]{Example}
\newtheorem{observation}[theorem]{Observation}
\newtheorem{proposition}[theorem]{Proposition}

\newcommand{\bX}{\mathbf{X}}
\newcommand{\bB}{\mathbf{B}}
\newcommand{\bP}{\mathbf{P}}
\newcommand{\bA}{\mathbf{A}}
\newcommand{\bc}{\mathbf{c}}

\newcommand{\MMS}{\mathsf{MMS}}

\newcommand{\Comp}{\mathsf{Comp}}
\newcommand{\PROP}{\mathsf{PROP}}
\newcommand{\Pre}{\mathsf{Pre}}
\newcommand{\Suf}{\mathsf{Suf}}

\DeclareMathOperator*{\argmax}{argmax}

\title{Comparison-Based Fair Division of Indivisible Chores}
\author{%
  \begin{tabular}{@{}c@{\qquad\qquad}c@{}}
    Zehan Lin & Shengxin Liu \\
    University of Macau & Harbin Institute of Technology, Shenzhen \\
    \texttt{yc47490@um.edu.mo} & \texttt{sxliu@hit.edu.cn} \\
    \noalign{\vskip 1.0em}
    Biaoshuai Tao & Shengwei Zhou \\
    Shanghai Jiao Tong University & Nanyang Technological University \\
    \texttt{bstao@sjtu.edu.cn} & \texttt{s.arthur.zhou@gmail.com}
  \end{tabular}%
}
\date{}
\begin{document}

\maketitle

\begin{abstract}
We investigate the query complexity of fairly allocating $m$ indivisible chores among $n$ agents with additive cost functions. We depart from the standard cardinal model and assume only comparison access: an algorithm may ask an agent which of two bundles is less costly, but never observes numerical costs.
Our first results concern \emph{proportionality up to one item} (PROP1). We design comparison-based algorithms that compute PROP1 allocations using $O(n^3\log m)$ comparison queries. When the chores are arranged in a fixed order and allocations are required to be contiguous, we compute a contiguous PROP1 allocation using $O(n^3 \log^2 m)$ comparison queries. 
Our main result concerns the \emph{maximin share} (MMS)  guarantee. We show that for any fixed number of agents $n$ and constant $\varepsilon>0$, a $\left(13/11 +\varepsilon\right)$-MMS allocation can be computed with a comparison complexity logarithmic in $m$. 
Remarkably, comparison access suffices to match the state-of-the-art $13/11$ cardinal-access guarantee of Huang and Segal-Halevi up to an arbitrarily small loss.
Furthermore, our result implies that the MMS distortion of comparison access (i.e., the worst-case multiplicative loss in MMS fairness incurred by observing only comparisons rather than numerical costs) is at most $13/11$. 
Finally, we show that, for three agents, an allocation satisfying \emph{envy-freeness up to one item} (EF1) can be computed using $O(\log m)$ comparison queries.
\end{abstract}

\section{Introduction}\label{sec:introduction}
Fair division, originating from the work of Steinhaus~\cite{steinhaus1948problem}, is a fundamental problem in economics and computer science. It asks how to allocate a set $M$ of $m$ indivisible items among a set $N$ of $n$ agents with heterogeneous preferences. Items that agents value positively are modeled as goods, whereas items that impose disutility are modeled as chores. In this paper, we focus on indivisible chores: each agent $i$ has a nonnegative additive cost function $c_i$ and prefers bundles of smaller cost.

Indivisibility makes exact fairness guarantees difficult to satisfy.
One of the most prominent fairness notions is envy-freeness (EF)~\cite{foley1966resource}, which requires every agent to weakly prefer her own bundle to every other agent's bundle.
EF allocations, however, may fail to exist for indivisible items.
This has motivated relaxations such as envy-freeness up to one item (EF1)~\cite{conf/sigecom/LiptonMMS04}: for chores, any envy of agent $i$ toward agent $j$ should disappear after removing a single chore from $i$'s own bundle.
Another central family of criteria is based on share guarantees.
Proportionality (PROP) requires each agent to incur cost at most her proportional share, namely $c_i(M)/n$, while proportionality up to one item (PROP1)~\cite{conitzer2017fair} allows one chore to be removed from the agent's own bundle.
The maximin share (MMS), introduced by Budish~\cite{conf/bqgt/Budish10}, is another widely studied benchmark.
For chores, the MMS value $\mu_i$ of agent $i$ is the minimum, over all $n$-partitions $\bP=(P_1,\dots,P_n)$ of the chores, of the maximum bundle cost $\max_{j\in[n]} c_i(P_j)$.
An allocation is $\alpha$-MMS, for $\alpha\ge 1$, if each agent $i$ receives a bundle of cost at most $\alpha\mu_i$.

% A rich body of literature focuses on computing fair allocations under cardinal information. 
% Under this model, a standard assumption is that algorithms have full cardinal access to agent preferences, which means they can always query the exact cost of a bundle. 
% However, collecting such precise cardinal information is often unrealistic in the real world. 
% For instance, agents may find it difficult to assign reliable numerical costs to complicated tasks. 
% This limitation naturally motivates the study of allocation models operating only on ordinal information. 
% Furthermore, from a privacy perspective, rather than requiring the entire preference profile as static input, it is preferable to utilize an oracle-based query framework that learns preferences dynamically through a sequence of queries.

Most algorithmic work on fair division assumes cardinal access to preferences~\cite{journals/ai/AmanatidisABFLMVW23,Suksompong21,liu2024mixed}: the algorithm is given, or can query, exact numerical costs. Such access is a strong requirement, especially for chores. The disutility of a workload may depend on subjective and context-dependent factors such as effort, inconvenience, expertise, and time constraints, making it difficult for agents to assign stable numerical costs to bundles. At the same time, agents may still be able to answer comparison queries reliably, for example by deciding which of two proposed workloads is less costly. Moreover, cardinal reports can reveal sensitive information about agents' abilities, constraints, or opportunity costs. These considerations motivate ordinal elicitation models in which agents reveal only pairwise comparisons between bundles.

% For indivisible goods, Oh et al.~\cite{journals/siamdm/OhPS21} studied the value-query model, where each query asks an agent for the numerical value of a bundle.
% They showed that EF1 can be computed with $O(\log m)$ value queries for two agents with monotone utilities and for three agents with additive utilities, while stronger notions such as EF and EFX require linear query complexity even for two agents with identical additive utilities.
% Bu et al.~\cite{conf/wine/BuLLST24} introduced the more restrictive comparison-query model, where each query only asks an agent to compare two bundles.
% For additive goods, they obtained logarithmic comparison-query algorithms for PROP1 together with $1/2$-MMS, for EF1 under identical valuations, and for EF1 with three agents.
% They also proved matching $\Omega(\log m)$ lower bounds for PROP1, EF1, and any positive approximation to MMS when the number of agents is fixed.

We study the comparison-query model, introduced for indivisible goods by Bu et al.~\cite{conf/wine/BuLLST24}. In this model, a query asks an agent to compare two bundles and report which one is preferred. We consider the analogous model for indivisible chores: a query to agent $i$ consists of two bundles, and the answer reveals which bundle has smaller cost according to $c_i$. Thus, the algorithm observes only the relative order of the queried bundles and never sees numerical costs.
This restriction rules out many standard algorithmic tools. In particular, a comparison-based algorithm cannot directly compute proportional shares, MMS values, or individual item costs, all of which are heavily used by cardinal fair allocation algorithms. The resulting challenge is to understand which fairness guarantees can still be achieved, and with how many comparisons.

% In this paper, we study the comparison-based query model introduced by Bu et al.~\cite{conf/wine/BuLLST24} for indivisible chores.
% In this model, a query simply asks an agent to compare two bundles and reveal which one incurs a lower cost, without disclosing any numerical values.
% Consequently, the algorithm only observes the relative ordering of the queried bundles for each agent.
% As a result, it is blocked from directly computing proportional shares, MMS values, or individual item costs, which are heavily relied upon by traditional fair allocation algorithms.
% This fundamental information restriction naturally leads to our first question:
\begin{center}\begin{minipage}{0.97\linewidth}\begin{shaded}\noindent\textbf{Question 1. } %\textit{How many ordinal bundle comparisons are necessary and sufficient to compute fair allocations of chores?}
\textit{Which fairness guarantees for indivisible chores can be achieved under comparison access, and with how many queries?}
\end{shaded}
\end{minipage}
\end{center}

% If we ignore query complexity, we can recover the full ordinal ranking by comparing all pairs of bundles and then use traditional algorithms (like envy-cycle elimination) to find an EF1 allocation. 
% However, these methods handle chores one by one and therefore require at least linearly many queries in $m$.
% Such a linear dependence on $m$ becomes undesirable in large instances, where the number of chores may far exceed the number of agents.
% This leads to our second question:

A trivial approach is to elicit enough comparisons to reconstruct large parts of each agent's preference order, or to simulate algorithms that allocate chores one by one. Such approaches require at least linear, and in some cases much larger, dependence on the number $m$ of chores. This is undesirable in large instances, where $m$ may be much larger than the number of agents. %We therefore ask whether meaningful fairness guarantees can be obtained with only a very small number of comparisons.
On the other hand, in most practical applications of fair division, the number of agents $n$ is small.
For example, on the fair division website Spliddit.org~\cite{goldman2015spliddit}, the average number of agents is very close to $3$, and the largest number of agents recorded is $10$~\cite{journals/teco/CaragiannisKMPS19}.
The number of items in practical scenarios, however, is large; for example, the largest instance recorded by Spliddit.org contains roughly $1400$ items and $9$ agents~\cite{journals/teco/CaragiannisKMPS19}.
This motivates the focus on those instances with small numbers of agents and large numbers of items.

Furthermore, in many applications with large $m$ and small $n$, it is desirable to avoid checking all the items one-by-one and instead make the time complexity of the algorithm only (poly-)logarithmic with respect to the number of items.
One typical scenario is the discretizations of \emph{cake-cutting} instances.
The cake-cutting problem considers divisions of divisible resources typically modeled as the interval $[0,1]$, and each agent's valuation function can be an arbitrary real function on $[0,1]$.
As remarked in the previous literature (see, e.g., Bu et al.~\cite{BU2023103904}), to resolve the encoding issues of real functions, a line of research on the cake-cutting problem uses piecewise-constant functions to approximate the actual valuation functions.
The cake is then discretized into a large number of small ``segments'', which are viewed as indivisible items later on.
It is particularly unappealing to design an algorithm that processes these segments one-by-one; instead, an algorithm with a (poly-)logarithmic complexity (on the number of segments) enables much finer discretizations (and thus much better approximations) when requiring a reasonable time complexity.

Motivated by these, we study the following problem.
\begin{center}\begin{minipage}{0.97\linewidth}\begin{shaded}\noindent\textbf{Question 2. }
%\textit{Can fair allocations of indivisible chores be computed in the comparison model using sublinear (or even logarithmic) query complexity with respect to $m$?}
%\textit{Can fair allocations of indivisible chores be computed using only logarithmically many comparison queries in the number of chores?}
\textit{For constant numbers of agents, can fair allocations of indivisible chores be computed using only (poly-)logarithmically many comparison queries in the number of chores?}
\end{shaded}
\end{minipage}
\end{center}

\subsection{Our Contributions and Techniques}
To answer these questions, we initiate a systematic study of the comparison-query complexity of fair allocation of indivisible chores. A basic benchmark in this model is logarithmic dependence on the number of chores: for fixed $n$, logarithmically many comparisons are unavoidable for several standard fairness guarantees, including PROP1, MMS, and EF1 (see Appendix~\ref{app:common-lower-bound} for details). Despite the absence of cardinal cost information, our results show that this benchmark is essentially attainable for chores. In the following, we organize our contributions around three notions: PROP1, MMS, and EF1.

\paragraph{PROP1 via a Top-Trading Envy Graph.}
Our first result concerns the computation of PROP1 allocations. The main difficulty is that, under comparison access, an algorithm cannot directly test whether a bundle satisfies PROP or PROP1, since both notions are defined in terms of the agent's proportional share, a cardinal quantity that is never revealed. For indivisible goods, Bu et al.~\cite{conf/wine/BuLLST24} overcame this obstacle by designing an overlap oracle that separates bundles that are non-PROP from bundles that are PROP1. Their algorithm then allocates bundles that are sufficiently good for their recipients while ensuring that these bundles are no longer useful for the remaining agents. This oracle-based strategy, however, is inherently asymmetric and does not carry over to chores: the analogous oracle is too weak to support the same allocation dynamics. We discuss this obstruction in Appendix~\ref{app:prop-oracles-chores}.
We adapt the top-trading envy-graph approach of Bhaskar et al.~\cite{conf/approx/BhaskarSV21}, who used it to compute EF1 allocations of chores. Our approach avoids testing PROP1 bundles directly. Instead, we maintain a partial allocation in which every allocated bundle remains proportional for its current owner, and we gradually absorb the remaining chores into these bundles. In each phase, we partition the remaining chores into balanced blocks and pair these blocks with the current partial bundles. We then construct a \emph{top-trading envy graph}: each agent points to the bundle that she finds least costly among the available paired bundles. Constructing this graph requires only $O(n)$ comparison queries per agent. Since every vertex has out-degree one, the graph contains a directed cycle. Trading along such a cycle reallocates the corresponding paired bundles so that every participating agent receives her least-costly option on the cycle, thereby preserving the proportionality invariant of the partial allocation.
The key progress analysis shows that the number of unallocated chores drops below $n$ after $O(n\log m)$ phases in total. At that point, assigning at most one remaining chore to each agent converts the proportional partial allocation into a PROP1 allocation. Since each phase uses $O(n^2)$ comparison queries, this yields the following result.

% For indivisible goods, Bu et al.~\cite{conf/wine/BuLLST24} addressed this by introducing an (overlap) oracle that distinguishes between non-PROP and PROP1 bundles. 
% Based on this oracle, they designed an algorithm ensuring that allocated bundles are no longer considered proportional by the remaining agents, yet remain PROP1 for the recipients. 
% However, this weaker overlap oracle does not extend to the chores setting (see Appendix~\ref{app:prop-oracles-chores} for details).
% Instead of relying on a specific oracle, our approach maintains a partial allocation of proportional bundles. 
% In each round, we partition the remaining chores into balanced blocks and pair them with the current partial bundles. 
% Based on this partition, we construct a top-trading envy graph where each agent points to their most preferred bundle, which requires only $O(n)$ queries per agent.
% We show that a directed cycle always exists in this graph. 
% Trading along this cycle ensures that every updated agent receives her least expensive bundle, thus preserving the PROP guarantee for the partial allocation.
% This process eliminates a constant fraction of the remaining chores within $O(n\log m)$ rounds. 
% Once fewer than $n$ chores remain, assigning at most one additional chore to each agent successfully upgrades the partial PROP allocation to a PROP1 allocation.

\medskip
\noindent
{\bf Result 1} (Theorem~\ref{theorem: PROP1}){\bf .}
{\em For instances with additive cost functions, PROP1 allocations can be computed using $O(n^3\log m)$ comparisons.}
\medskip

This result shows that PROP1 allocations for chores can be found without cardinal costs and without a PROP/PROP1-testing oracle. For fixed $n$, the logarithmic dependence on $m$ matches the known $\Omega(\log m)$ lower bound. We also study the contiguous variant, where the chores are arranged in a fixed linear order and each agent must receive a connected interval of items. Contiguity constraints have been widely studied in fair division, especially for EF1 under the cardinal model~\cite{conf/aaai/Igarashi23, journals/geb/BiloCFIMPVZ22, conf/aaai/GolzIMS26}\footnote{These existing results are purely existential.}. We prove that comparison queries suffice to guarantee contiguous PROP1 with only a polylogarithmic dependence on $m$ (see Theorem~\ref{thm:contiguous_prop1}). This extension shows that our approach is robust to additional structural constraints on the allocation.

%Our result highlights the limitations of comparison-based oracles in the chores setting and motivates a novel approach from the perspective of envy graphs. 
%This framework successfully bridges the gap between the goods and chores settings and matches the $\Omega(\log m)$ lower bound for fixed $n$.
% We also investigate contiguous allocations, where chores are arranged in a linear order and each agent must receive a connected interval of items, which have been widely studied in previous work for EF1 under the cardinal model~\cite{conf/aaai/Igarashi23, journals/geb/BiloCFIMPVZ22, conf/aaai/GolzIMS26}\footnote{Note that these existing results are purely existential.}. 
% We prove that comparison queries suffice to guarantee PROP1 with only a polylogarithmic dependence on $m$ (Theorem~\ref{thm:contiguous_prop1}). 
% Together, these results demonstrate that PROP1 for chores requires neither the oracle design nor any numerical cost information.

\paragraph{Approximate MMS via Chore Compression.}
Our second result concerns the MMS guarantee. The main challenge is again the lack of cardinal information, but the obstruction is more severe for MMS. Existing algorithms for approximate MMS allocations of chores, including the state-of-the-art $13/11$ approximation of Huang and Segal-Halevi~\cite{conf/sigecom/HuangS23}, rely on cardinal costs and on reductions to structured instances such as identical-ordering (IDO) instances. Implementing such reductions directly in the comparison model would require sorting the chores for each agent, which already costs $\Omega(nm\log m)$ comparisons and rules out logarithmic dependence on $m$.
To bypass this barrier, we introduce a compression technique for chores. Given the original instance $\mathcal I$, we construct a compressed instance $\mathcal I'$ consisting of only $f(n,\varepsilon)$ ``compressed'' chores, independent of $m$. Each compressed chore represents a carefully chosen block of original chores and has controlled cost for every agent. The compression preserves the relevant MMS structure up to a small loss: an approximate MMS allocation of the compressed instance can be lifted back to the original instance while losing only an additive $\varepsilon$ in the approximation factor.
Once the instance has been compressed to constant size, we can implement the necessary steps of the Huang--Segal-Halevi $13/11$-MMS algorithm using comparison queries over bundles of compressed chores. The computation on $\mathcal I'$ contributes only a constant depending on $n$ and $\varepsilon$, so the overall query complexity is dominated by the compression procedure. This gives the following result.

% Our second result explores the MMS guarantee under the comparison-based model. 
% Most existing works on guaranteeing (approximate) MMS allocations rely on the identical ordering (IDO) reduction\footnote{Informally, this reduction transforms any instance into a canonical one where all agents share the same ranking of items.} to analyze worst-case scenarios. 
% However, directly applying this technique requires sorting all chores for every agent, which takes $O(nm\log m)$ comparison queries and is too expensive.
% To overcome this challenge, we introduce a technique called \emph{Chores Compression}. 
% The high-level idea is to compress the original instance $\mathcal{I}$ into a smaller instance $\mathcal{I'}$ consisting of only $f(n)$ ``compressed'' chores (and is thus independent of $m$), such that each compressed chore incurs a bounded cost for every agent.
% By adapting the 13/11-MMS algorithm~\cite{conf/sigecom/HuangS23} to our comparison model, we can find a 13/11-MMS allocation for the compressed instance $\mathcal{I'}$. 
% Finally, we prove that this allocation can be mapped back to the original instance $\mathcal{I}$, with only a tiny loss in the approximation ratio depending on the compression size.
% Therefore, the main query complexity lies in chores compression and computation of $13/11$-MMS allocation for $\mathcal{I'}$, leading to the following result.

\medskip
\noindent
{\bf Result 2} (Theorem~\ref{theorem: 13_11_mms}){\bf .}
{\em For any fixed $n$ and constant $\varepsilon>0$, 
$\left(13/11+\varepsilon\right)$-MMS allocations can be computed with
query complexity logarithmic in $m$.}
\medskip
 
% Our result achieves the state-of-the-art approximation ratio of MMS allocations established by Huang and Segal-Halevi~\cite{conf/sigecom/HuangS23} in $O(\log m)$ while preventing item-by-item operation without any numerical value.
% We further study the distortion of the comparison model to quantify how the loss of numerical information affects the performance. 
% This distortion arises because the algorithm must handle distinct cost instances that are indistinguishable under ordinal comparisons. 
% Our result directly implies a $13/11$ upper bound on the distortion, and we complement this with an example establishing a $44/43$ lower bound.
Thus, comparison access suffices to match, up to an arbitrarily small loss, the best known MMS approximation guarantee under full cardinal access. The algorithm avoids both item-by-item processing and direct access to item costs, proportional shares, or MMS values.
We also formulate and study the corresponding distortion question. The MMS distortion of comparison access measures the worst-case multiplicative loss in MMS fairness forced by observing only ordinal comparisons rather than numerical costs. Equivalently, a comparison-based algorithm must perform well simultaneously on all cardinal cost instances that induce the same comparison answers. Result 2 implies a $13/11$ upper bound on this distortion. We complement this with a lower-bound construction: two cost instances are indistinguishable under comparison access but require different MMS-optimal behavior, yielding distortion at least $44/43$.

\paragraph{EF1 via a Monotone Staircase.}
%Our third result concerns EF1 for three agents. With full preference information, EF1 allocations can be obtained by classical procedures such as round-robin or envy-cycle elimination. These procedures, however, operate at the level of individual items and therefore lead to at least linear dependence on $m$ when implemented through queries. Even for the more informative value-query model, logarithmic-query EF1 algorithms are known only for a small number of agents. Achieving EF1 with logarithmically many comparison queries for chores is therefore a delicate task.
Our third result establishes the tight comparison-query complexity of EF1 in the canonical three-agent setting.
With full preference information, EF1 allocations can be obtained by classical item-by-item procedures such as round-robin or envy-cycle elimination. 
The challenge here is therefore not existence, but elicitation: can one find enough structure to certify EF1 without inspecting the chores individually? 
A direct simulation of these classical procedures would require at least linear dependence on $m$, whereas $\Omega(\log m)$ comparisons are unavoidable for EF1. 
We show that, for three agents, this logarithmic lower bound is tight in the comparison model. 
The difficulty is twofold: comparison queries reveal strictly less information than value queries, and existing goods-based logarithmic methods do not transfer directly to chores.
At a high level, we use agent $1$ as an anchor and compute a contiguous EF1 partition for the identical-cost instance induced by $c_1$.
If agents $2$ and $3$ can be assigned distinct least-costly blocks, the allocation is immediately EF1.
Otherwise, both agents prefer the same block, and we move along a \emph{monotone staircase} of nearby partitions.
This staircase preserves EF1 for agent $1$ and allows us to locate, by binary search, a preference transition that yields an EF1 allocation using only a constant number of additional comparisons.

% Our high-level idea is to use agent $1$ as an anchor. 
% We first compute a contiguous EF1 partition for agent $1$ under identical costs.  
% If agents $2$ and $3$ can be assigned distinct least-costly blocks, the allocation is immediate.  
% Otherwise, they share the same least-costly block, and we gradually move chores from the two residual blocks into this common block along a monotone staircase of safe states.  
% Safety preserves EF1 for agent $1$, while monotonicity allows us to locate, by binary search, the first preference transition of agents $2$ and $3$.  
% This transition gives two neighboring partitions that can be converted into an EF1 allocation using only constant additional comparisons.

\medskip
\noindent
{\bf Result 3} (Theorem~\ref{theorem: EF1}){\bf .}
{\em For three agents with additive costs, EF1 allocations can be computed using $O(\log m)$ comparison queries.}
\medskip

% This result provides an $O(\log m)$ query EF1 algorithm for chores, matching the $\Omega(\log m)$ lower bound. 
% It also shows a structural difference between goods and chores: while Bu et al.~\cite{conf/wine/BuLLST24} achieved a similar three-agent EF1 result for goods using ideas from cake cutting, the nature of costs prevents a direct extension. 
% Instead, the monotone staircase provides the additional structure needed to achieve logarithmic complexity in the chores setting.

For fixed $n=3$, this matches the known $\Omega(\log m)$ lower bound. The result also highlights a structural difference between goods and chores. Bu et al.~\cite{conf/wine/BuLLST24} obtained an analogous three-agent EF1 result for goods using ideas inspired by cake cutting, but the direction of envy elimination is different for chores: envy is eliminated by removing a chore from the envious agent's own bundle. This prevents a direct transfer of the goods argument. The monotone staircase supplies the missing structure and enables logarithmic query complexity in the chores setting.

\subsection{Related Work}
Due to the vast literature on the fair allocation problem, in the following, we only review the results highly related to our work.
For a comprehensive overview of other related works, please refer to the recent surveys by \cite{journals/ai/AmanatidisABFLMVW23,Suksompong21,liu2024mixed}.
\paragraph{Query Complexity for Indivisible Items.}
Prior work has extensively studied the value-based query complexity of fair division for indivisible goods. 
In this model, the algorithm is allowed to query an agent's value for any item bundle.
Plaut and Roughgarden~\cite{journals/siamdm/PlautR20} showed that computing an envy-freeness up to any item (EFX) allocation\footnote{For the setting of goods, an allocation is EFX if the envy can be eliminated by removing any single item from the envied agent's bundle.} requires exponential queries in $m$, even for two agents with identical and monotone valuations. 
Oh et al.~\cite{journals/siamdm/OhPS21} extended this study to other fairness notions. 
For additive valuations, they established a tight $\Theta(m)$ query bound for EFX when $n = 2$, and a $\Theta(\log m)$ bound for EF1 with $n \le 3$, which becomes $O(nm)$ for $n > 3$ via the classic envy-graph procedure. 
More recently, Li et al.~\cite{journals/corr/abs-2602-06361} studied a noisy value query model for indivisible goods where values are observed via noisy samples, establishing both upper and lower bounds for finding EF allocations.
For the comparison-based model, Bu et al.~\cite{conf/wine/BuLLST24} showed that an $O(\log m)$ query bound suffices for a fixed number of agents to compute PROP1 (and $1/2$-MMS) allocations, as well as three-agent EF1 allocations.
Subsequently, Yamagata and Sumita~\cite{journals/corr/YamagataS26} improved the dependence on $n$ in the query complexity of computing PROP1 allocations.
Li et al.~\cite{journals/iandc/LiMSS25} characterized the complexity of the round-robin algorithm and provided a worst-case upper bound of $O(nm \log(m/n))$ and a lower bound of $\Omega(nm + m \log m)$.
Our problem also relates to existing frameworks for allocating items that apply ordinal or partial preference information~\cite{conf/ijcai/0002021, conf/atal/0001LXZ23, journals/ior/BenadeHP25, conf/aaai/Arunachaleswaran19} or operate under the value oracle model (e.g., \cite{conf/esa/BarmanBKS20, conf/wine/BarmanV21, conf/atal/BarmanNV23}).

\paragraph{Communication Complexity.}
Communication complexity studies the number of bits exchanged during an allocation protocol, without restricting the interaction to a fixed oracle format.
Since Yao's seminal work~\cite{yao1979some}, it has served as a central paradigm for quantifying the information inherently required to compute a joint outcome. This line of research has yielded powerful modern tools, such as information complexity~\cite{journals/tit/BravermanR14}, query-to-communication lifting~\cite{journals/siamcomp/GoosPW20}, and rank-based lower bounds~\cite{journals/jacm/ChattopadhyayMS20}.
In the field of fair division, this perspective has been studied in both divisible settings, such as cake cutting~\cite{conf/ec/BranzeiN19}, and indivisible settings~\cite{journals/siamcomp/PlautR20, conf/sigecom/Feige25}.
Plaut and Roughgarden~\cite{journals/siamcomp/PlautR20} introduced communication complexity to discrete fair division and established upper and lower bounds for approximate envy-freeness and proportionality beyond additive valuations.
More recently, Feige~\cite{conf/sigecom/Feige25} designed low-communication protocols for additive goods and showed that allocations satisfying both PROP1 and $\frac{n}{2n-1}$-TPS\footnote{Here, TPS stands for truncated proportional share, a proportional-share benchmark that caps the contribution of very high-value individual goods, which is stronger than MMS.} can be found with randomized communication complexity $O(n\log m)$ and deterministic communication complexity $O(n\log m\log n)$.

\paragraph{Query Complexity for Divisible Items.}
Query complexity has long been central in fair division of divisible resources, where the standard framework is the Robertson--Webb (RW) model with evaluation and cut queries~\cite{books/daglib/RobertsonW98,conf/nips/BranzeiN22}. 
For proportional cake cutting with connected pieces, classical moving-knife protocols use $O(n^2)$ RW queries~\cite{journals/amm/DubinsS61}, Even and Paz~\cite{journals/dam/EvenP84} achieved $O(n\log n)$ queries, and matching $\Omega(n\log n)$ lower bounds are known~\cite{journals/talg/EdmondsP11}.
Envy-freeness is harder: every general RW protocol needs $\Omega(n^2)$ queries~\cite{conf/ijcai/Procaccia09}, connected envy-free allocations cannot always be found by finite protocols for three or more agents~\cite{journals/combinatorics/Stromquist08}, and the bounded protocol of Aziz and Mackenzie~\cite{conf/stoc/AzizM16} allows disconnected pieces but has very large complexity.
The literature on divisible chores, commonly referred to as bad-cake cutting or chore-cutting, presents a closer parallel to our problem. 
Notably, prior studies on envy-free and connected chore division underscore that traditional goods-based trimming techniques are inadequate, thereby necessitating the development of distinct structural insights~\cite{journals/mathmag/PetersonS02,journals/tcs/HeydrichS15}.

% \subsection{Organization}
% The remainder of this paper is organized as follows. 
% Section~\ref{sec:preliminary} introduces the model, notation, and foundational fairness concepts used throughout our work. 
% Section~\ref{sec:prop1} investigates PROP1 allocations across both general and contiguous settings. 
% Next, Section~\ref{sec:mms} develops the compression framework for approximate MMS allocations and analyzes the resulting distortion under comparison queries. 
% Following this, Section~\ref{sec:ef1} presents our logarithmic-query EF1 algorithm for three agents. Finally, Section~\ref{sec:conclusion} concludes with a discussion of open problems.

\paragraph{(Approximate) MMS Allocations for Chores.}
Under the cardinal model, exact MMS allocations can be found for two agents by divide-and-choose.
However, subsequent results established that an exact MMS allocation need not exist for three agents~\cite{conf/wine/FeigeST21, conf/aaai/AzizRSW17}, where the current state-of-the-art lower bound on the approximation ratio is $44/43$ introduced by Feige et al.~\cite{conf/wine/FeigeST21}.
Consequently, a series of literature has shifted its focus toward exploring approximation guarantees~\cite{conf/aaai/AzizRSW17,journals/mp/AzizLW24,conf/sigecom/FeigeH23,journals/teco/BarmanK20,conf/sigecom/HuangL21}.
The current best approximation ratio for chores is $13/11$~\cite{conf/sigecom/HuangS23}, while no algorithm can guarantee a ratio better than $44/43$ even for three agents~\cite{conf/wine/FeigeST21}.
When only ordinal information is available, the attainable guarantees are inherently weaker, with the currently known ratios being $2-1/n$~\cite{conf/aaai/AzizRSW17}, $5/3$~\cite{journals/mp/AzizLW24}, and $8/5$~\cite{conf/sigecom/FeigeH23}.

\section{Preliminaries}\label{sec:preliminary}
We study the problem of fairly allocating a set of $m$ indivisible items (chores), denoted by $M$, to a group of $n$ agents $N$.
We refer to a subset of items $B \subseteq M$ as a bundle.
Each agent $i \in N$ is associated with an additive cost function $c_i: 2^M \to \mathbb{R}^+ \cup \{0\}$, which assigns a cost to every bundle of items.
We use $\bc = (c_1, \dots, c_n)$ to denote the cost functions of all agents.
For any integer $q \geq 1$ and any $M' \subseteq M$, a $q$-partition $\bB = (B_1, \dots, B_q)$ is a collection of disjoint subsets of $M'$ such that $B_r \cap B_s = \emptyset$ for all $r \neq s$ and $\bigcup_{r \in [q]} B_r = M'$.
We use $\prod_q(M')$ to denote the set of all $q$-partitions of $M'$.
An allocation $\bX = (X_1, \dots, X_n)$ is an ordered $n$-partition of the item set $M$, where agent $i$ receives bundle $X_i$.
More generally, a partial allocation is an $n$-tuple $\bX = (X_1,\dots,X_n)$ of pairwise disjoint bundles whose union may be a proper subset of $M$.
Given a partial allocation $\bX$, we use $R = M \setminus \bigcup_{i \in N} X_i$ to denote the residual set of unallocated items.
In this paper, we distinguish the concepts of partition and allocation.
A $q$-partition consists of $q$ unordered bundles whose assignments have not been decided.
On the other hand, in an allocation, the bundles are ordered and indexed by the agents receiving the bundles.
When items are arranged in a fixed linear order, we say that a partition or allocation is \emph{contiguous} if every bundle is an interval in this order, where empty bundles are allowed.
For any integer $t \geq 1$, we use $[t]$ to denote the set $\{1, \dots, t\}$.
Moreover, for an ordered sequence $S=(s_1,\ldots,s_t)$ and $x\in\{0,\ldots,t\}$, we define $\Pre(S,x)=(s_1,\ldots,s_x)$ and $\Suf(S,x)=(s_{x+1},\ldots,s_t)$.

% \begin{definition}[Cardinal Information]
%     Cardinal information specifies that an algorithm has numerical access to the agents' cost functions, which return the exact cost $c_i(X)$ for any agent $i \in N$ and bundle $X \subseteq M$.
% \end{definition}

% \begin{definition}[Ordinal Information]
%     Ordinal information restricts access solely to the relative ranking of bundles. 
%     For an agent $i \in N$, the induced preference relation $\preceq_i$ satisfies
%     $$
%         X \preceq_i Y \quad \Longleftrightarrow \quad c_i(X) \leq c_i(Y),
%     $$
%     which indicates that agent $i$ weakly prefers $X$ to $Y$.
% \end{definition}

In this paper, to systematically compute fair allocations, we operate within a comparison-based framework, which gives query access to ordinal information.
\begin{definition}[Comparison Model]
    We assume algorithms access agent preferences exclusively via the following query mechanism:
    \begin{itemize}
        \item $\Comp_i(X, Y)$: Given two bundles $X,Y \subseteq M$, the query asks agent $i$ to compare their respective costs, $c_i(X)$ and $c_i(Y)$. It returns the bundle with the lower cost, breaking ties in favor of $X$ (i.e., if $c_i(X) \leq c_i(Y)$, the query returns $X$).
    \end{itemize}
\end{definition}
Each such comparison constitutes a single query.
Specifically, we assume that $n$ is a constant number.
The primary objective of this work is to compute fair allocations for all agents while minimizing the overall query complexity.

Next, we define the fairness notions.

\begin{definition}[EF]
    An allocation $\bX$ is envy-free (EF) if for any agents $i, j \in N$, we have $c_i(X_i) \leq c_i(X_j)$.
\end{definition}

\begin{definition}[EF1]
    An allocation $\bX$ is envy-free up to one item (EF1) if for any agents $i, j \in N$, either $X_i = \emptyset$, or there exists an item $e \in X_i$ such that $c_i(X_i \setminus \{e\}) \leq c_i(X_j)$.
\end{definition}

\begin{definition}[PROP]
    Given an instance $I = (M, N, c)$, the proportional share of each agent $i \in N$ is defined as $\PROP_i = c_i(M)/n$.
    An allocation $\bX$ is called proportional (PROP) if each agent receives a bundle with $c_i(X_i) \leq \PROP_i$.
\end{definition}

\begin{definition}[PROP1]
    An allocation $\bX$ is proportional up to one item (PROP1) if for any agent $i \in N$, either $X_i = \emptyset$, or there exists an item $e \in X_i$ such that $c_i(X_i \setminus \{e\}) \leq \PROP_i$.
\end{definition}

\begin{definition}[MMS]
    Given a set of items $M' \subseteq M$, for any $i \in N$, her maximin share (MMS) value of items $M'$ is defined as
    \[
        \MMS_i(M') := \min_{\bX \in \prod_n(M')} \max_{j \in N} \left\{c_i(X_j)\right\},
    \]
    where $\prod_n(M')$ denotes the set of all $n$-partitions of $M'$.
\end{definition}

For convenience, we use
$\text{MMS}_i$ to denote $\text{MMS}_i(M)$ when $M$ is clear from the context.

\begin{definition}[$\alpha$-MMS]
    An allocation $\bX$ is said to satisfy the $\alpha$-approximate maximin share guarantee for some $\alpha \geq 1$ if, for every agent $i \in N$, $c_i(X_i) \leq \alpha \cdot \MMS_i$.
    Specifically, when $\alpha = 1$, the allocation is an \emph{MMS} allocation.
\end{definition}

Note that the query complexity of computing a PROP1/EF1/$\alpha$-MMS allocation, for any fixed nontrivial $1\leq \alpha<n$, in the chores setting is $\Omega(\log (m/n))$, which simplifies to $\Omega(\log m)$ when $n$ is constant (see Appendix~\ref{app:common-lower-bound} for details).

\section{Query Complexity of PROP1}\label{sec:prop1}
In this section, we investigate the query complexity of computing PROP1 allocations. 
We first present an algorithm that computes a PROP1 allocation within $O(n^3\log m)$ queries (Theorem~\ref{theorem: PROP1}). 
We further explore the computation of contiguous PROP1 allocations under the comparison model and demonstrate that such allocations can be computed using $O(n^3\log^2 m)$ queries (Theorem~\ref{thm:contiguous_prop1}).

\subsection{Computation of PROP1 Allocations}
We first focus on the computation of PROP1 allocations.
Before detailing our algorithm, we discuss prior work for goods~\cite{conf/wine/BuLLST24} and the inherent challenges of computing PROP1 for chores under the comparison query model.

First, it is impossible to directly determine whether a bundle satisfies PROP or PROP1, due to the uncomputability of the proportional shares under the comparison model.
For the goods setting, Bu et al.~\cite{conf/wine/BuLLST24} addressed this challenge by designing an oracle that returns ``yes'' if a bundle satisfies PROP1 and ``no'' if it is strictly not PROP.~\footnote{We note that this is an overlap oracle, that is, a bundle can satisfy PROP1 but violate PROP at the same time. Under such cases, their oracle may return any answer among ``yes'' and ``no''.}
This allows for the construction of a bipartite graph, enabling a classical Hall's matching framework to assign PROP1 bundles to a subset of agents while leaving enough items behind for the remaining agents. 
However, such an oracle is also unavailable in the chores setting, due to the counterexample presented in Appendix~\ref{app:prop-oracles-chores}.
Consequently, standard techniques relying on absolute thresholds (e.g., moving-knife procedures) or standard oracles cannot be applied, motivating the development of a fundamentally different algorithmic framework.

Now we begin to introduce our algorithm for computing PROP1 allocations under the comparison model (see Algorithm~\ref{alg: PROP1} for details).
Generally, the algorithm consists of two phases.

\paragraph{Phase 1: Compute a partial PROP allocation.}
We say that a partial allocation is a partial PROP if $c_i(X_i) \leq \PROP_i$ for every agent $i \in N$\footnote{Here, $\PROP_i$ is defined with respect to the full input set $M$ and remains fixed throughout the algorithm.}.
In Phase 1, we maintain a partial PROP allocation round by round until the number of remaining items is strictly smaller than $n$.
During each round, we partition the remaining chores into $n$ balanced bundles $\bP = (P_1, \dots, P_n)$ and combine these bundles with the current partial allocation to form tentative bundles $\textbf{T}$.
We then construct a variant of the top-trading envy graph of Bhaskar et al.~\cite{conf/approx/BhaskarSV21}, denoted by $G$, where each agent $i$ points to an agent $f(i)$ whose tentative bundle $T_{f(i)}$ is least costly for her.\footnote{We break ties arbitrarily to ensure that the out-degree of each agent is exactly 1.}
Then we select any directed cycle, and simultaneously assign each agent $i$ on the cycle the tentative bundle $T_{f(i)}$ to which she points.  
Since every selected agent receives a least-costly tentative bundle, the proportional-share invariant is preserved.

\paragraph{Phase 2: Allocate the remaining items.}
Based on the construction of Phase $1$, the partial allocation $\mathbf{X}$ satisfies PROP with at most $(n-1)$ items left unallocated. 
Therefore, allocating at most one remaining item to each agent suffices to achieve the PROP1 guarantee for all agents.

\noindent\textit{Remark.}
Our algorithm builds on the top-trading envy-graph approach of Bhaskar et al.~\cite{conf/approx/BhaskarSV21} for computing EF1 allocations of chores.
Rather than applying cycle exchanges to the currently allocated bundles, we construct the graph on tentative bundles $T_i=X_i\cup P_i$, where $(P_1,\ldots,P_n)$ is a balanced partition of the remaining chores.
This allows us to maintain a partial PROP allocation and compute a PROP1 allocation using $O(n^3\log m)$ comparison queries.

\medskip

\begin{algorithm}[!ht]
\caption{Computation of PROP1 Allocations}
\label{alg: PROP1}
\KwIn{Instance $\mathcal{I}=(N,M,\bc)$}
Initialize $X_i \gets \emptyset$ for $i \in N$; \\
$R \gets M$; \\
\tcp{Phase 1: Compute a partial PROP allocation}
\While{$|R| \geq n$}{
    Compute a partition $\bP = (P_1, \dots, P_n)$ of $R$ where $\big| |P_i| - |P_j| \big| \leq 1$ for all $i, j \in N$; \\
    \For{$i \in N$}{
        $T_i \gets X_i \cup P_i$; \tcp{Construct a tentative allocation}
    }
    \For{$i \in N$}{
        $f(i) \gets \arg \min_{j \in N} c_i(T_j)$; \\
        \tcp{Index of $i$'s favorite bundle, breaking ties arbitrarily}
    }
    Construct a directed graph $G = (N, E)$ where $E = \{(i, f(i)) \mid i \in N\}$; \\
    Pick any directed cycle $C = (i_1, i_2, \dots, i_k)$ in $G$ \tcp{By Lemma~\ref{lemma: cycle}}
    \tcp{For a self-loop at $i$, the update assigns $T_i$ to agent $i$}
    \For{$j \in C$}{
        $X_j \gets T_{f(j)}$; \\
        $R \gets R \setminus P_{f(j)}$;
    }
}
\tcp{Phase 2: Allocate the remaining items}
Let the remaining items be $R = \{e_1, \dots, e_r\}$; \\
\For{$i$ from $1$ to $r$}{
    $X_i \gets X_i \cup \{e_i\}$;
}
\KwOut{A PROP1 allocation $\mathbf{X} = (X_1, \dots, X_n)$}
\end{algorithm}

Next, we move to analyze the correctness of Algorithm~\ref{alg: PROP1}.
Since Phase $2$ is simple, our analysis will mainly focus on Phase $1$.
We first show that the top-trading envy graph $G$ always has a cycle.
\begin{lemma}\label{lemma: cycle}
    In every iteration of Phase $1$, the graph $G$ contains at least one directed cycle. 
\end{lemma}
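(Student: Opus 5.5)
The plan is to use only the structure of $G$: every vertex has out-degree exactly one. In each iteration of Phase~1, each agent $i \in N$ is assigned a single index $f(i)\in N$, namely a minimizer of $c_i(T_j)$ over $j\in N$. This minimizer exists because $N$ is finite and nonempty, and ties are broken arbitrarily but deterministically. So the edge set $E=\{(i,f(i)) : i\in N\}$ defines a functional graph on the finite vertex set $N$. The argument will be purely combinatorial and will not depend on the costs, the partition $\bP$, or the current partial allocation.

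Concretely, I will start from any agent $i_0\in N$ and follow out-edges: $i_1=f(i_0)$, $i_2=f(i_1)$, and so on. This produces an infinite walk $i_0,i_1,i_2,\dots$ in $N$. Since $|N|=n$, the pigeonhole principle applied to $i_0,\dots,i_n$ gives indices $0\le s<t\le n$ with $i_s=i_t$. Choose such a pair with $t$ minimal. Then $i_s,\dots,i_{t-1}$ are pairwise distinct, and $(i_s,i_{s+1},\dots,i_{t-1})$ together with the edge $(i_{t-1},f(i_{t-1}))=(i_{t-1},i_s)$ forms a directed cycle in $G$.

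One point needs care: the case $t=s+1$, where $f(i_s)=i_s$ is a self-loop. This occurs when agent $i$'s own tentative bundle $T_i$ is her least costly one. I will note explicitly that a self-loop counts as a directed cycle of length one, consistent with the comment in Algorithm~\ref{alg: PROP1} that a self-loop assigns $T_i$ to agent $i$. No step is a real obstacle. The only thing to check is that out-degree is exactly one, i.e., that $f$ is a well-defined total function on $N$. This holds by the arg-min definition and the stated tie-breaking convention. The loop condition $|R|\ge n$ is not needed for this lemma; it matters only for the progress analysis later.
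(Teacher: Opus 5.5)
Your proposal is correct and matches the paper's argument. Every agent has out-degree exactly one, so following out-edges from any agent must, by pigeonhole, revisit an agent within $n$ steps and close a directed cycle; self-loops count as length-one cycles, as the paper also remarks. Your choice of the minimal repeat index $t$, which guarantees the cycle's vertices are distinct, is a small piece of rigor the paper leaves implicit.
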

\begin{proof}
    Since every agent has an out-degree of exactly $1$, starting from any agent and following the unique outgoing edges creates a deterministic path.
    With only $n$ agents in $G$, the Pigeonhole Principle guarantees that an agent will be revisited within $n$ steps, which forms a cycle.
    % This traversal explores at most $n$ edges, yielding an $O(n)$ time complexity.
\end{proof}

We remark that self-loops count as directed cycles of length one in Lemma~\ref{lemma: cycle}.
Next, we construct the following invariants for each round.
\begin{invariant} \label{invariant: prop_invariant}
Throughout Phase~1, the following two properties hold:
\begin{itemize}

\item [(1)] $(X_1,\dots,X_n)$ forms a partition of $M \setminus R$.
    
\item [(2)] For every agent $i\in N$, we have $c_i(X_i)\leq \PROP_i = c_i(M)/n$.
\end{itemize}
\end{invariant}

\begin{proof}
    We prove both properties by induction over the iterations of Phase $1$.
    Initially, $X_i=\emptyset$ for every $i\in N$ and $R=M$. 
    Hence, $(X_1,\dots,X_n)$ forms a partition of $M\setminus R$. 
    Moreover, for every agent $i\in N$, we have $c_i(X_i)=c_i(\emptyset)=0\leq \PROP_i$.
    
    Now, assume that both properties hold at the beginning of a given iteration. Let $P=(P_1,\dots,P_n)$ denote the balanced partition of $R$, and let $T_i = X_i \cup P_i$ be the corresponding tentative bundle for each agent $i\in N$. 
    By the induction hypothesis, $(X_1,\dots,X_n)$ is a partition of $M\setminus R$. 
    Since $(P_1,\dots,P_n)$ is a partition of $R$, it immediately follows that $(T_1,\dots,T_n)$ constitutes a valid partition of $M$.
    
    Let $C$ be the directed cycle selected by the algorithm. 
    Because $C$ is a cycle in the graph induced by $f$, the mapping $f$ permutes the vertices of $C$. 
    During the simultaneous update, every agent $j\in C$ is allocated $T_{f(j)}$, while agents outside $C$ retain their previous bundles. 
    Therefore, the updated bundles are:
    $$
        X'_j =
        \begin{cases}
            T_{f(j)}, & \text{if } j\in C,\\
            X_j, & \text{if } j\notin C.
        \end{cases}
    $$
    The remaining items are updated to $R' = R \setminus \bigcup_{j\in C} P_{f(j)}$.
    Since $f$ is a permutation on $C$, this can be equivalently written as
    $R' = R \setminus \bigcup_{j\in C} P_j$.
    Consequently, the update assigns exactly the tentative subsets corresponding to the indices in $C$ and removes precisely these subsets from the remaining set $R$. All other previously allocated bundles and the unallocated portions of $R$ remain unchanged. Therefore, $(X'_1,\dots,X'_n)$ remains a valid partition of $M \setminus R'$.
    
    It remains to verify the proportional-share bound. For any agent $j\notin C$, we have $X'_j=X_j$, so the bound holds trivially by the induction hypothesis, i.e., $c_j(X'_j)=c_j(X_j)\leq \PROP_j$.
    
    Next, consider an agent $j\in C$. By the definition of $f(j)$, agent $j$ weakly prefers the tentative bundle $T_{f(j)}$ over all other tentative bundles in terms of cost. That is,
    $$
        c_j(T_{f(j)}) \leq c_j(T_i) \qquad \text{for every } i\in N.
    $$
    This implies that the cost of $T_{f(j)}$ is bounded by the average cost of all tentative bundles:
    $$
        c_j(X'_j) = c_j(T_{f(j)}) \leq \frac{1}{n}\sum_{i\in N} c_j(T_i).
    $$
    Because $(T_1,\dots,T_n)$ forms a partition of $M$, the additivity of the cost function $c_j$ yields $\sum_{h\in N} c_j(T_h)=c_j(M)$. Substituting this into the inequality, we obtain:
    $$
        c_j(X'_j) \leq \frac{c_j(M)}{n} = \PROP_j.
    $$
    Hence, the proportional-share bound is preserved after the update. 
    Consequently, both properties hold throughout Phase $1$ by induction.
\end{proof}

Then we turn to analyze the query complexity of Phase $1$.
\begin{lemma} \label{lemma: phase1}
    Phase $1$ terminates after $O(n \log m)$ iterations.
\end{lemma}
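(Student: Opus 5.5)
Each iteration removes from $R$ exactly the blocks $P_j$ with $j\in C$. Since $C$ has at least one vertex (possibly a self-loop, which Lemma~\ref{lemma: cycle} allows), at least one block is removed. The plan is to show that in every iteration $|R|$ shrinks by a constant factor depending on $n$, namely $|R'|\le (1-\tfrac{1}{2n})|R|$. Iterating this while $|R|\ge n$ then gives the bound.

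\textbf{Step 1 (per-iteration progress).} Fix an iteration with $|R|=r\ge n$. The partition $\bP$ is balanced, so every block satisfies $|P_j|\ge \lfloor r/n\rfloor \ge 1$. Because $r\ge n$, we have $\lfloor r/n\rfloor \ge r/n - (n-1)/n$, and more usefully $\lfloor r/n\rfloor \ge r/(2n)$. To check the latter: if $r/n\ge 1$, write $r/n=q+\theta$ with $q\ge1$ and $0\le\theta<1$; then $q\ge (q+\theta)/2$. By Invariant~\ref{invariant: prop_invariant}, $R$ loses exactly $\bigcup_{j\in C}P_j$, which contains at least one full block. Hence $|R'|\le r - \lfloor r/n\rfloor \le (1-\tfrac{1}{2n})\,r$.

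\textbf{Step 2 (counting iterations).} Let $r_0=m$. After $t$ iterations, $|R|\le m(1-\tfrac{1}{2n})^t\le m\,e^{-t/(2n)}$. The loop continues only while $|R|\ge n\ge 1$, so it must stop once $m e^{-t/(2n)}<1$, which happens for $t> 2n\ln m$. Thus the number of iterations is at most $2n\ln m+1=O(n\log m)$.

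The only point needing care is Step~1: we must confirm that a self-loop still removes one whole block (the update sets $X_i\gets T_i$ and removes $P_i$), and that the floor bound $\lfloor r/n\rfloor\ge r/(2n)$ uses the loop condition $r\ge n$. Once the constant-fraction decrease is established, the remaining counting is routine.
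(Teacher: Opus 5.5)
Your proof is correct and follows the paper's argument essentially step for step. Each iteration removes at least one balanced block of size $\lfloor r/n\rfloor \ge r/(2n)$, which gives geometric decay by a factor $(1-\tfrac{1}{2n})$ and hence $O(n\log m)$ iterations; stopping at $|R|<1$ rather than the paper's $|R|<n$ is slightly cruder but yields the same bound.
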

\begin{proof}
    Consider an arbitrary iteration of Phase 1, and let $r = |R|$ be the number of remaining items at the beginning of this iteration. The loop condition implies that $r \geq n$. The algorithm partitions $R$ into $n$ balanced bundles $(P_1, \dots, P_n)$, where each bundle contains at least $\lfloor r/n \rfloor$ items.
    Let $C$ be the directed cycle selected in this iteration according to Lemma~\ref{lemma: cycle}. Since $|C| \geq 1$, removing all bundles in $C$ from $R$ eliminates at least $\lfloor r/n \rfloor$ items. 
    Utilizing the fact that $\lfloor x \rfloor \geq x/2$ for any $x \geq 1$, the number of removed items is bounded below by $\frac{r}{2n}$. Consequently, the number of remaining items after this iteration, denoted as $r'$, satisfies:
    $$
    r' \leq r - \frac{r}{2n} = \left(1 - \frac{1}{2n}\right) r.
    $$
    Starting with an initial size of $|R| = m$, the number of remaining items after $q$ iterations is at most $m \left(1 - \frac{1}{2n}\right)^q$. Given that the loop terminates once $|R| < n$, and applying the standard inequality $1 - x \leq e^{-x}$, we upper bound the remaining items by:
    $$
    |R| \leq m \cdot \exp\left(-\frac{q}{2n}\right).
    $$
    To guarantee that $|R| < n$, it suffices to set $q = O(n \log(m/n)) = O(n \log m)$. Therefore, Phase 1 terminates within $O(n \log m)$ iterations.
\end{proof}

By combining all the lemmas, we arrive at the following theorem.
\begin{theorem} \label{theorem: PROP1}
    For instances with additive cost functions, Algorithm~\ref{alg: PROP1} returns PROP1 allocations using $O(n^3 \log m)$ comparison queries. 
\end{theorem}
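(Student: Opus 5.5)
The proof combines the invariant and the iteration bound already established, and then counts queries. It has two parts: correctness (the output is PROP1) and query complexity ($O(n^3\log m)$).

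\textbf{Correctness.}
By Lemma~\ref{lemma: phase1}, Phase~1 terminates. When it does, Invariant~\ref{invariant: prop_invariant} gives two facts:
\begin{itemize}
\item $(X_1,\dots,X_n)$ is a partition of $M\setminus R$;
\item $c_i(X_i)\le \PROP_i$ for every $i\in N$.
\end{itemize}
The loop exits only when $|R|=r\le n-1$. Phase~2 gives item $e_i$ to agent $i$ for $i\le r$, so every item is assigned and the result is a complete allocation. Fix an agent $i$.
\begin{itemize}
\item If $i\le r$, then removing $e_i$ from her final bundle gives back her Phase-1 bundle, whose cost is at most $\PROP_i$.
\item If $i>r$, her bundle is unchanged. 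Then either it is empty, which is allowed by the definition, or it is nonempty. In the nonempty case, removing any item $e$ gives $c_i(X_i\setminus\{e\})\le c_i(X_i)\le\PROP_i$, because costs are nonnegative and additive.
\end{itemize}
Hence the allocation is PROP1. Note that $\PROP_i$ is defined with respect to the full set $M$ throughout, so the bound from the invariant applies directly.

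\textbf{Query complexity.}
Within one iteration, only the computation of $f(i)=\arg\min_j c_i(T_j)$ uses queries. For each agent $i$, we scan the $n$ tentative bundles and keep a running minimum using $\Comp_i$, which costs $n-1$ comparisons. Over all agents this is $O(n^2)$ queries per iteration.

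The remaining steps use no queries:
\begin{itemize}
\item the balanced partition depends only on cardinalities;
\item building $G$ and finding a cycle (Lemma~\ref{lemma: cycle}) is pure computation;
\item the trade along the cycle and Phase~2 are pure computation.
\end{itemize}
Multiplying by the $O(n\log m)$ iterations from Lemma~\ref{lemma: phase1} gives $O(n^3\log m)$ queries in total.

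There is no real obstacle here, since Invariant~\ref{invariant: prop_invariant} and Lemma~\ref{lemma: phase1} carry the substance. The points needing care are the edge cases: agents who receive no Phase-2 item (including those with empty bundles), and checking that no step other than computing $f$ consumes queries.
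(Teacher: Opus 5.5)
Your proposal is correct and follows the same route as the paper: correctness comes from Invariant~\ref{invariant: prop_invariant} together with the fact that Phase~2 adds at most one chore per agent. The query bound is the $n(n-1)$ comparisons per iteration for computing $f$, multiplied by the $O(n\log m)$ iterations from Lemma~\ref{lemma: phase1}. Your explicit treatment of agents who receive no Phase-2 chore is a small clarification of the same argument.
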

\begin{proof}
    We begin by showing that the returned allocation $\bX$ is PROP1.
    By Invariant~\ref{invariant: prop_invariant}, when Phase 1 terminates, the partial allocation $\bX$ satisfies $c_i(X_i) \leq \PROP_i$ for any $i \in N$.
    Since the while-loop terminates when the number of unallocated items is strictly less than $n$ (i.e., $|R| < n$), Phase 2 assigns these remaining items to distinct agents. 
    Consequently, each agent receives at most one additional item. 
    By the definition of PROP1, this ensures that the final allocation $\mathbf{X}$ is a valid PROP1 allocation.    

    Then, we analyze the overall comparison query complexity in Algorithm~\ref{alg: PROP1}, which appears only in Phase $1$.
    In each iteration, the only comparison queries are used to compute $f(i)$ for each agent $i \in N$.
    Fix an agent $i$. 
    To compute $f(i)$, the algorithm must find a minimum-cost bundle among the $n$ tentative bundles $T_1, \dots, T_n$ according to $c_i$. 
    This can be done by a standard tournament scan: start with $T_1$ as the current minimum, compare it with $T_2$, keep the lower-cost bundle, compare the current minimum with $T_3$, and so on.
    This requires exactly $n - 1$ comparison queries for agent $i$. 
    Since there are $n$ agents, computing all values $f(i)$ requires $n(n-1)$ comparison queries.
    Constructing the graph $G$, finding a directed cycle, updating the bundles, and updating $R$ require no additional comparison queries. 
    Therefore, each Phase 1 iteration uses at most $n(n-1)$ queries.
    By Lemma~\ref{lemma: phase1}, Phase 1 has $O(n \log m)$ iterations. 
    Hence, the total number of comparison queries in Phase 1 is $O(n^3 \log m)$.
    % This proves the theorem.
\end{proof}

\subsection{Computation of Contiguous PROP1 Allocations}
In this subsection, we consider the contiguous version of the PROP1 allocation problem. 
In this setting, chores are assumed to be arranged in a fixed linear order
$M = (e_1, e_2, \dots, e_m)$, and each agent must receive a contiguous subsequence (possibly empty) of this order. 
In the following, we present Algorithm~\ref{alg: contiguousPROP1} and show that contiguous PROP1 allocations can be computed using $O(n^3 \log^2 m)$ comparison queries.

To describe our algorithm, we use the prefix and suffix notation from Section~\ref{sec:preliminary}.
Consistent with the partition notation in Section~\ref{sec:preliminary}, let $\prod_k(S)$ denote the set of all partitions of $S$ into $k$ contiguous subsequences, where empty subsequences are allowed.  
For agent $i$, we further define
$$
    \mu_i^k(S)=
    \max_{\mathbf{P}\in\prod_k(S)} \min_{j\in[k]} c_i(P_j).
$$
Equivalently, $\mu_i^k(S)$ is the largest threshold such that $S$ can be divided into $k$ contiguous parts, each having a cost of at least $\mu_i^k(S)$ for agent $i$.
By definition, we have $\mu_i^n(M) \leq \PROP_i$.

In the following, we first provide an overview of our algorithm.

\paragraph{Algorithm Overview.}
Algorithm~\ref{alg: contiguousPROP1} consists of four distinct modules. 
Fix an agent $i \in N$ and a chore set $S$. 
Our high-level approach is to substitute the standard proportional share with the benchmark $\mu_i^k(S)$, which is then utilized to allocate the chores among the remaining $k$ agents. 
First, $\textsc{Cutting}_i(S, T, k)$ determines whether $S$ can be partitioned into $k$ contiguous parts such that each part has a cost of at least $c_i(T)$ to agent $i$ (Lemma~\ref{lemma: contiguous_check}). Second, $\textsc{ComputePrefix}_i(S, k)$ identifies the longest prefix of $S$ whose cost does not exceed $\mu_i^k(S)$ (Lemma~\ref{lemma: Prefix}). Applying these two components, we can successfully compute the threshold $\mu_i^k(S)$ as a prefix. Next, $\textsc{PersonalPartition}(i)$ applies this prefix routine to construct a personal contiguous PROP1 partition for agent $i$ (Lemma~\ref{lemma: personal_prop1}). 
Finally, the main routine, referred to as \emph{Rightmost Merge}, aggregates the personal partitions of all agents into a unified contiguous allocation (Lemma~\ref{lemma: rightmost_merge}).

% \shengxin{The overview should explain how the different modules work together to achieve the final goal, rather than merely presenting them separately. This applies to all algorithm overviews throughout the paper.}

\begin{algorithm}[!ht]
\caption{Computation of Contiguous PROP1 Allocations}
\label{alg: contiguousPROP1}
\KwIn{Instance $\mathcal{I}=(N,M,\bc)$ with ordered chores $M=(e_1,\ldots,e_m)$}
\KwOut{A contiguous PROP1 allocation $\bX$}
\SetKwProg{Fn}{Function}{}{}

\Fn{$\textsc{Cutting}_i(S,T,k)$}{
    $R\gets S$\;
    \For{$r=1,\ldots,k-1$}{
        \If{$\Comp_i(T,R)\neq T$}{
            \Return \textbf{false}\;
        }
        Binary search for the smallest $\ell\in\{0,\ldots,|R|\}$ such that
        $\Comp_i(T,\Pre(R,\ell))=T$\;
        $R\gets \Suf(R,\ell)$\;
    }
    \Return $\Comp_i(T,R)=T$\;
}

\Fn{$\textsc{ComputePrefix}_i(S,k)$}{
    Binary search for the largest $\ell\in\{0,\ldots,|S|\}$ such that
    $\textsc{Cutting}_i(S,\Pre(S,\ell),k)=\textbf{true}$\;
    \Return $\ell$\;
}

\Fn{$\textsc{PersonalPartition}(i)$}{
    $R_n\gets M$\;
    \For{$k=n,n-1,\ldots,2$}{
        \If{$\Comp_i(R_k,\emptyset)=R_k$}{
            Complete $R_k$ arbitrarily as zero-cost contiguous blocks
            $B_k,\ldots,B_1$\;
            \Return $(B_n,\ldots,B_1)$\;
        }
        $\ell_k\gets \textsc{ComputePrefix}_i(R_k,k)$\;
        $B_k\gets \Pre(R_k,\ell_k+1)$\;
        $R_{k-1}\gets \Suf(R_k,\ell_k+1)$\;
    }
    $B_1\gets R_1$\;
    \Return $(B_n,\ldots,B_1)$\;
}
\tcp{Rightmost Merge}
Initialize $L \gets 0$ and $X_i \gets \emptyset$ for all $i \in N$\;
\For{$i \in N$}{
    $\bP^i = (P^i_1, \dots, P^i_n) \gets \textsc{PersonalPartition}(i)$\;
}
\While{$N \neq \emptyset$ \textbf{and} $L < m$}{
    $S_L \gets (e_{L+1}, \dots, e_m)$;\\
    \For{$i \in N$}{
        Let $B_i = (e_{l}, \dots, e_{r})$ be the leftmost block of $\bP^i$ intersecting $S_L$\;
        $r_i \gets r$\;
    }
    $i^* \gets \argmax_{i \in N} r_i$\;
    $X_{i^*} \gets (e_{L+1}, \dots, e_{r_{i^*}})$\;
    $L \gets r_{i^*}$\;
    $N \gets N \setminus \{i^*\}$\;
}
\Return $\bX$\;
\end{algorithm}

Below, we introduce each module and establish its correctness. 
For the analysis, we fix an agent $i \in N$, an ordered sequence $S$ with $|S| \leq m$, and an integer $k$ such that $1 \leq k \leq n$. 
Consequently, the query complexity can be bounded in terms of $n$ and $m$.

\paragraph{Greedy Cutting.}
For a bundle $T \subseteq M$, define $\textsc{Cutting}_i(S, T, k)$ to be \texttt{true} if and only if $S$ can be partitioned into $k$ contiguous parts such that each part has a cost of at least $c_i(T)$ for agent $i$.
This condition can be evaluated via a greedy approach: starting from the left end of the current sequence, we repeatedly extract the shortest prefix whose cost is at least $c_i(T)$. 
After isolating $k-1$ such prefixes, the condition holds if and only if the remaining suffix also has a cost of at least $c_i(T)$.
Each shortest prefix can be efficiently located via binary search by the monotonicity of prefix costs.

Next, we present a lemma to establish the formal guarantees of the greedy cutting.

\begin{lemma}\label{lemma: contiguous_check}
For any bundle $T \subseteq M$, $\textsc{Cutting}_i(S, T, k)$ returns \texttt{true} if and only if $c_i(T) \le \mu_i^k(S)$. 
Furthermore, the procedure requires at most $O(n \log m)$ comparison queries.
\end{lemma}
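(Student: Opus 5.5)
Write $\tau=c_i(T)$. The plan is to show that the greedy procedure in $\textsc{Cutting}_i$ is exactly the greedy algorithm for deciding whether $S$ admits a contiguous $k$-partition with every part of cost at least $\tau$. That condition is equivalent to $\tau\le\mu_i^k(S)$ by definition, because the maximum in $\mu_i^k(S)$ ranges over a finite set. We then bound the queries used by the binary searches.

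\emph{Soundness.} Suppose the procedure returns \texttt{true}. It has cut off $k-1$ prefixes $\Pre(R,\ell)$, and each was chosen so that $\Comp_i(T,\Pre(R,\ell))=T$, that is, $c_i(\Pre(R,\ell))\ge\tau$. Its final check guarantees that the leftover suffix also has cost at least $\tau$. These $k$ pieces form a contiguous partition of $S$ with minimum cost at least $\tau$, so $\tau\le\mu_i^k(S)$.

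\emph{Completeness.} Suppose $\tau\le\mu_i^k(S)$, and fix a contiguous partition $(Q_1,\dots,Q_k)$ of $S$ with $c_i(Q_j)\ge\tau$ for every $j$. Let $g_r$ be the right endpoint (as an index in $S$) of the $r$-th greedy prefix, and $q_r$ the right endpoint of $Q_1\cup\dots\cup Q_r$. I will prove by induction that $g_r\le q_r$ for all $r\le k-1$. This is the standard exchange argument and the main point to get right. For $r=1$, the prefix $Q_1$ already has cost at least $\tau$, so the shortest prefix with this property ends no later than $q_1$. For the inductive step, assume $g_{r-1}\le q_{r-1}$. The segment of $S$ from position $g_{r-1}+1$ to $q_r$ contains $Q_r$. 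Costs are nonnegative and additive, so this segment has cost at least $c_i(Q_r)\ge\tau$. Hence the greedy choice satisfies $g_r\le q_r$.

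The same monotonicity settles the two checks in the procedure. At round $r$, the current remainder $R$ has cost at least $c_i(Q_r\cup\dots\cup Q_k)\ge\tau$, so the test $\Comp_i(T,R)=T$ never fails and the binary search always has a valid answer. At the end, the final suffix contains $Q_k$, so its cost is at least $\tau$, and the procedure returns \texttt{true}. We also need the binary search to be valid. The predicate $c_i(\Pre(R,\ell))\ge\tau$ is monotone in $\ell$ because costs are nonnegative. The query $\Comp_i(T,\Pre(R,\ell))$ returns $T$ exactly when $c_i(T)\le c_i(\Pre(R,\ell))$, since ties are broken toward $T$, so it evaluates this predicate correctly. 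The edge case $\tau=0$ is handled by $\ell=0$, which yields empty parts; these are allowed.

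\emph{Query count.} There are at most $k-1\le n-1$ rounds. Each round uses one comparison for the feasibility check plus a binary search over $\{0,\dots,|R|\}$ with $|R|\le m$, which costs $O(\log m)$ comparisons. One more comparison is used at the end. The total is $O(k\log m)=O(n\log m)$.
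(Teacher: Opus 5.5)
Your proposal is correct and follows the paper's proof essentially step for step. Soundness comes from the greedy pieces themselves, completeness from the inductive exchange argument $g_r \le q_r$, and the query bound from $O(\log m)$ comparisons per round over at most $k-1 \le n-1$ rounds. Your extra checks are details the paper leaves implicit: that the test $\Comp_i(T,R)=T$ never fails when a feasible partition exists, and that tie-breaking makes the binary-search predicate exactly $c_i(\Pre(R,\ell))\ge c_i(T)$.
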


\begin{proof}
First, to prove sufficiency, suppose $\text{Cutting}_i(S, T, k)$ returns true. 
Then the $k - 1$ prefixes cut by the greedy procedure, together with the final suffix, form a contiguous $k$-partition of $S$ in which every part has a cost of at least $c_i(T)$. 
Hence, $c_i(T)$ is a feasible threshold, which directly implies $c_i(T) \le \mu_i^k(S)$. 

Conversely, to prove necessity, suppose $c_i(T) \le \mu_i^k(S)$, meaning there exists a feasible contiguous $k$-partition of $S$ with cut positions $0 = q_0 \le q_1 \le \dots \le q_k = |S|$ such that every part has a cost of at least $c_i(T)$. 
Let $0 = p_0 \le p_1 \le \dots \le p_{k-1}$ be the cut positions produced by the greedy routine. 
We prove by induction that $p_r \le q_r$ for every $r \in [k - 1]$. 
For the base case $r = 1$, the prefix ending at $q_1$ has a cost of at least $c_i(T)$, and since the greedy routine chooses the shortest prefix reaching $c_i(T)$, its first cut occurs no later than $q_1$, yielding $p_1 \le q_1$. 
Assuming the inductive hypothesis $p_{r-1} \le q_{r-1}$ holds, the interval from position $p_{r-1} + 1$ to $q_r$ fully contains the $r$-th part of the feasible partition. 
By non-negative additivity, this interval has a cost of at least $c_i(T)$, ensuring that a prefix reaching the threshold $c_i(T)$ exists by position $q_r$. Since the greedy routine always chooses the shortest such prefix, it follows that $p_r \le q_r$, which inductively establishes $p_{k-1} \le q_{k-1}$. 
Consequently, the remaining suffix after the first $k - 1$ greedy cuts completely contains the last part of the feasible partition. By non-negative additivity, this final suffix also has a cost of at least $c_i(T)$, causing the routine to successfully accept and return true.

Finally, regarding the query complexity, each of the first $k - 1$ rounds requires $O(\log m)$ comparisons to locate the shortest threshold-reaching prefix via binary search. 
Consequently, the total comparison query complexity is bounded by $O(n \log m)$.
\end{proof}

Next, we formally define the maximum feasible prefix length $\ell_i(S,k)$ as follows:
$$   
\ell_i(S,k)= \max\{x\in\{0,\ldots,|S|\}: c_i(\Pre(S,x))\leq \mu_i^k(S)\}.
$$
We subsequently show that this value can be computed by $\textsc{ComputePrefix}_i(S, k)$.

\paragraph{Longest Feasible Prefix.}
To compute $\ell_i(S, k)$, we apply Lemma~\ref{lemma: contiguous_check} to check whether a candidate prefix satisfies the threshold condition. 
Specifically, for any $x \in \{0, \dots, |S|\}$, $\textsc{Cutting}_i(S, \Pre(S, x), k)$ returns \texttt{true} if and only if $c_i(\Pre(S, x)) \leq \mu_i^k(S)$. 
Since prefix costs are non-decreasing, the feasible values of $x$ form a contiguous interval starting from $0$. This structural property allows us to find the maximum feasible $x$ by binary search.

\begin{lemma}\label{lemma: Prefix}
    $\textsc{ComputePrefix}_i(S, k)$ outputs $\ell_i(S, k)$ in $O(n \log^2 m)$ comparison queries.
\end{lemma}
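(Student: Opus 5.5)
The plan is to reduce correctness to Lemma~\ref{lemma: contiguous_check} through a monotonicity argument, and then multiply the cost of one feasibility test by the number of binary-search steps.

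\emph{Monotone predicate.} For $x\in\{0,\ldots,|S|\}$, define
\[
\phi(x)=\textsc{Cutting}_i(S,\Pre(S,x),k).
\]
Applying Lemma~\ref{lemma: contiguous_check} with $T=\Pre(S,x)$ gives $\phi(x)=\texttt{true}$ if and only if $c_i(\Pre(S,x))\le \mu_i^k(S)$. Costs are nonnegative and additive, so $x\mapsto c_i(\Pre(S,x))$ is non-decreasing in $x$. Hence the set of $x$ with $\phi(x)=\texttt{true}$ is downward closed, i.e.\ an initial segment $\{0,\ldots,x^*\}$ of $\{0,\ldots,|S|\}$. By the definition of $\ell_i(S,k)$, we have $x^*=\ell_i(S,k)$.

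\emph{Nonemptiness.} The segment always contains $x=0$, since $c_i(\emptyset)=0\le\mu_i^k(S)$. This holds because $\mu_i^k(S)\ge 0$ trivially, and empty parts are allowed in the contiguous partitions. Consequently the binary search for the largest true index is well defined and returns exactly $x^*$. The search keeps the invariant ``$\phi(\mathrm{lo})=\texttt{true}$ and every index above $\mathrm{hi}$ is false'', which monotonicity preserves. I will also note that ties in $\Comp_i$ are broken in favor of the first argument, so the predicate is exactly the weak inequality stated in Lemma~\ref{lemma: contiguous_check}.

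\emph{Query count.} The search ranges over $|S|+1\le m+1$ values, so it makes $O(\log m)$ evaluations of $\phi$. Each evaluation is one call to $\textsc{Cutting}_i$, which costs $O(n\log m)$ queries by Lemma~\ref{lemma: contiguous_check} (recall $k\le n$). The total is therefore $O(n\log^2 m)$.

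\emph{Main subtlety.} The only real obstacle is justifying that $\phi$ is monotone even though $T=\Pre(S,x)$ is itself a subsequence of $S$. I need to check that Lemma~\ref{lemma: contiguous_check} places no disjointness requirement on $T$ and $S$. It does not: the lemma is stated for an arbitrary bundle $T\subseteq M$, and only the scalar $c_i(T)$ enters the comparisons. So the reduction goes through directly.
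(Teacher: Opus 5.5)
Your proposal is correct and follows essentially the paper's proof: both apply Lemma~\ref{lemma: contiguous_check} with $T=\Pre(S,x)$ to turn the \textsc{Cutting} predicate into the threshold test $c_i(\Pre(S,x))\le\mu_i^k(S)$, use monotonicity of prefix costs to justify a binary search, and multiply the $O(\log m)$ search steps by the $O(n\log m)$ cost of each call. Your extra checks, that $x=0$ is always feasible and that the lemma imposes no disjointness between $T$ and $S$, are details the paper leaves implicit and do not change the argument.
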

\begin{proof}
By Lemma~\ref{lemma: contiguous_check}, for every $x \in \{0, \dots, |S|\}$, the predicate $\textsc{Cutting}_i(S, \Pre(S, x), k) = \texttt{true}$ is equivalent to $c_i(\Pre(S, x)) \leq \mu_i^k(S)$. 
Due to the monotonicity of prefix costs, once this predicate becomes \texttt{false}, it remains \texttt{false} for all larger $x$. 
A binary search over $x$ therefore yields the maximum feasible index, $\ell_i(S, k)$. 
This binary search performs $O(\log m)$ iterations. 
Because $\textsc{Cutting}_i$ is executed once for each iteration and requires $O(n \log m)$ queries by Lemma~\ref{lemma: contiguous_check}, the total query complexity is $O(n \log^2 m)$.
\end{proof}

\paragraph{Personal Contiguous PROP1 Partition.}
We construct a contiguous partition of the item set $M$ from the perspective of $i$'s preferences by simulating a process with $n$ identical copies of this agent. 
Initially, set the residual set $R_n = M$. 
For $k = n, n-1, \dots, 2$, if $c_i(R_k) = 0$ (or equivalently, $\Comp(R_k, \emptyset) = R_k$), we can arbitrarily divide $R_k$ into $k$ contiguous blocks and terminate. 
Otherwise, we compute $\ell_k = \ell_i(R_k, k)$ and let $e_k$ be the $(\ell_k+1)$-th item in $R_k$. 
The algorithm then updates the bundle $B_k = \Pre(R_k, \ell_k+1)$, and updates the remaining items as $R_{k-1} = \Suf(R_k, \ell_k+1)$. 
Finally, the algorithm outputs the remaining block $B_1 = R_1$.

\begin{lemma}\label{lemma: personal_prop1}
    The procedure $\textsc{PersonalPartition}(i)$ constructs a contiguous PROP1 partition for agent $i$ using $O(n^2\log^2 m)$ comparison queries.
\end{lemma}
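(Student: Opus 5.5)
The plan is to check the PROP1 guarantee block by block. For each block we bound its cost, after removing its last item, by a contiguous maximin threshold. We then show that these thresholds never increase as the procedure runs, so all of them are at most $\mu_i^n(M)\le \PROP_i$.

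\textbf{Step 1: the procedure is well defined.} First we check that $e_k$ exists, i.e.\ $\ell_k<|R_k|$, whenever we reach the \textsc{ComputePrefix} call. At that point $c_i(R_k)>0$, since the zero-cost test failed. Any contiguous $k$-partition of $R_k$ has minimum part cost at most $c_i(R_k)/k$, so $\mu_i^k(R_k)\le c_i(R_k)/k<c_i(R_k)$ because $k\ge 2$. Hence the full prefix $\Pre(R_k,|R_k|)$ is infeasible, and $\ell_k\le |R_k|-1$. By Lemma~\ref{lemma: Prefix}, $\ell_k=\ell_i(R_k,k)$. It follows that:
\begin{itemize}
\item $B_k$ is a nonempty interval;
\item $c_i(B_k\setminus\{e_k\})=c_i(\Pre(R_k,\ell_k))\le \mu_i^k(R_k)$;
\item $c_i(B_k)>\mu_i^k(R_k)$, by maximality of $\ell_k$.
\end{itemize}
Contiguity and the partition property are immediate, because each $B_k$ is a prefix of $R_k$ and $R_{k-1}$ is the complementary suffix.

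\textbf{Step 2: monotonicity $\mu_i^{k-1}(R_{k-1})\le \mu_i^k(R_k)$.} This is the key step. Suppose for contradiction that some contiguous $(k-1)$-partition of $R_{k-1}$ has every part of cost at least $t>\mu_i^k(R_k)$. Prepending $B_k$ to it gives a contiguous $k$-partition of $R_k$. Its minimum part cost is at least $\min(t,c_i(B_k))$, and this exceeds $\mu_i^k(R_k)$ by Step 1. That contradicts the definition of $\mu_i^k(R_k)$.

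Chaining the inequality gives, for every $k$ reached,
\[
\mu_i^k(R_k)\le \mu_i^n(M)\le \PROP_i .
\]
So each $B_k$ with $k\ge 2$ is PROP1 for agent $i$, witnessed by removing $e_k$.

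For the final block, $c_i(B_1)=c_i(R_1)=\mu_i^1(R_1)\le \mu_i^2(R_2)\le \PROP_i$, so $B_1$ is even PROP. In the early-termination case, the blocks produced before termination are handled as above, and the remaining blocks have cost $0$, so they are trivially PROP1.

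\textbf{Step 3: query count.} The loop runs at most $n-1$ times. Each iteration uses one zero-cost comparison plus one \textsc{ComputePrefix} call, which costs $O(n\log^2 m)$ by Lemma~\ref{lemma: Prefix}. The total is $O(n^2\log^2 m)$.

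The main obstacle I expect is Step 2: one must use the strict inequality $c_i(B_k)>\mu_i^k(R_k)$, which in turn relies on the nonemptiness argument from Step 1.
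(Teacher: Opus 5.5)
Your proof is correct and follows essentially the same route as the paper. Both arguments show that $e_k$ is well defined via $\mu_i^k(R_k)\le c_i(R_k)/k<c_i(R_k)$, both prove $\mu_i^{k-1}(R_{k-1})\le\mu_i^k(R_k)$ by prepending $B_k$ and using the strict inequality $c_i(B_k)>\mu_i^k(R_k)$, and both count queries per iteration; the paper bounds iteration $k$ by $O(k\log^2 m)$ instead of $O(n\log^2 m)$, which gives the same $O(n^2\log^2 m)$ total.
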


\begin{proof}
Without loss of generality, consider an iteration where $c_i(R_k) > 0$\footnote{Otherwise, the algorithm terminates early because the remaining suffix has zero cost, all subsequent blocks inherit a cost of zero and are thus trivially proportional. }, and let $\mu_k = \mu_i^k(R_k)$. 
Since any contiguous $k$-partition must contain at least one bundle whose cost does not exceed the average cost, we have $\mu_k \leq c_i(R_k)/k < c_i(R_k)$.
This implies $\ell_k < |R_k|$, ensuring that the crossing item $e_k$ is well-defined. 
By the definition of $\ell_k$, it follows that
$$    
    c_i(B_k \setminus \{e_k\}) = c_i(\Pre(R_k, \ell_k)) \leq \mu_k,
    \qquad \text{and} \qquad
    c_i(B_k) > \mu_k.
$$
We next show that the thresholds are monotonically non-increasing as the algorithm proceeds backwards, i.e., $\mu_{k-1} \leq \mu_k$. 
Suppose for contradiction that $\mu_{k-1} > \mu_k$. 
Then, $R_{k-1}$ must admit a contiguous $(k-1)$-partition where every bundle has a cost strictly greater than $\mu_k$. 
Since $c_i(B_k) > \mu_k$, prepending $B_k$ to this partition would yield a contiguous $k$-partition of $R_k$ in which every bundle has a cost strictly greater than $\mu_k$, directly contradicting the definition of $\mu_k$.
Finally, observe that $\mu_n = \mu_i^n(M) \leq c_i(M)/n = \PROP_i$. 
For each non-final block $B_k$ with its associated crossing item $e_k$, we have
$$    
c_i(B_k \setminus \{e_k\}) \leq \mu_k \leq \mu_n \leq \PROP_i.
$$
For the final block, $c_i(B_1) = \mu_i^1(R_1) = \mu_1 \leq \mu_n \leq \PROP_i$, meaning that $B_1$ satisfies the standard proportionality guarantee. 

Consequently, every allocated block is guaranteed to be PROP1 for agent $i$. 
The total query complexity is obtained by summing the $O(k \log^2 m)$ cost of each iteration over $k = 2, \dots, n$, which yields $O(n^2 \log^2 m)$.
\end{proof}

% \begin{lemma}\label{lemma: prop1_subset}
% Fix an agent $i$.  If a bundle $B$ is PROP1 for $i$ and $A\subseteq B$, then $A$ is
% PROP1 for $i$.
% \end{lemma}
% \begin{proof}
% If $A=\emptyset$, the claim is immediate.  Otherwise, since $B$ is PROP1, there is an
% item $g\in B$ such that $c_i(B\setminus\{g\})\leq \PROP_i$.  If $g\in A$, then
% $c_i(A\setminus\{g\})\leq c_i(B\setminus\{g\})\leq\PROP_i$.  If $g\notin A$, then
% $A\subseteq B\setminus\{g\}$, so $c_i(A)\leq \PROP_i$; removing any item of $A$
% preserves the inequality.
% \end{proof}

\paragraph{Rightmost Merge.}
For each agent $i \in N$, we first compute a personal contiguous PROP1 partition $\bP^i = (P^i_1, \dots, P^i_n)$ according to Lemma~\ref{lemma: personal_prop1}. 
Since these partitions generally have different cut points, we aggregate them using a greedy rightmost-merge procedure. 
We maintain the current unallocated suffix of items $S_L = (e_{L+1}, \dots, e_m)$ and the set of unassigned agents $N$. 
Initially, we set $L = 0$ and $N = [n]$.
The algorithm proceeds in rounds. 
In each round, for each unassigned agent $i \in N$, let $B_i$ be the leftmost block of $\bP^i$ that intersects $S_L$, and let $r_i$ be its right endpoint in the global ordering of items. 
We select an agent $i^* \in \arg \max_{i \in N} r_i$ who maximizes this right endpoint and allocate the bundle $(e_{L+1}, \dots, e_{r_{i^*}})$ to her, and repeat the process for the remaining agents and items.
\begin{lemma}\label{lemma: rightmost_merge}
For every agent $i \in N$, either $X_i = \emptyset$ or $X_i = (e_l, \dots, e_r)$, where 
$$c_i(X_i \setminus \{e_r\}) \leq \PROP_i.$$
\end{lemma}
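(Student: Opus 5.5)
The plan is to show that whenever an agent $i^*$ is selected in a round of Rightmost Merge, her bundle is a subset of one block of her own personal partition $\bP^{i^*}$, and that the two share the same right endpoint. The PROP1 guarantee from Lemma~\ref{lemma: personal_prop1}, in its stronger "remove the last item" form, then passes to the bundle by monotonicity of nonnegative additive costs. Agents that are never selected keep $X_i=\emptyset$, which the statement allows, so only selected agents need attention. (Whether every item gets allocated is a separate matter and is not needed here.)

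\emph{Step 1: a sharper form of Lemma~\ref{lemma: personal_prop1}.} I would first extract from the proof of Lemma~\ref{lemma: personal_prop1} that every block of $\bP^i$ is PROP1 with respect to its \emph{last} item. There are three kinds of blocks.
\begin{itemize}
\item A non-final block is $B_k=\Pre(R_k,\ell_k+1)$. Its crossing item $e_k$ is the $(\ell_k+1)$-th item of $R_k$, which is exactly the rightmost item of $B_k$, and the proof gives $c_i(B_k\setminus\{e_k\})\le \mu_k\le \PROP_i$.
\item The final block $B_1$ satisfies $c_i(B_1)\le\PROP_i$, so removing any item, in particular its last one, keeps the cost at most $\PROP_i$.
\item The blocks produced by the zero-cost early-termination branch have cost $0$.
\end{itemize}
Hence for every nonempty block $B=(e_l,\dots,e_r)$ of $\bP^i$ we get $c_i(B\setminus\{e_r\})\le\PROP_i$.

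\emph{Step 2: containment.} Consider a round with current value $L<m$, so that $S_L$ is nonempty. For each remaining agent $i$, the blocks of $\bP^i$ are contiguous and cover $M$, so there is a well-defined leftmost block $B_i=(e_l,\dots,e_{r_i})$ intersecting $S_L$, and it is nonempty. Since $B_i$ intersects $S_L=(e_{L+1},\dots,e_m)$ and is the leftmost such block, we have $l\le L+1\le r_i$: every block of $\bP^i$ lying strictly to its left ends at or before position $L$. For the selected agent $i^*$ this gives
\[
X_{i^*}=(e_{L+1},\dots,e_{r_{i^*}})\subseteq B_{i^*},
\]
with the common right endpoint $e_{r_{i^*}}$.

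\emph{Step 3: conclude.} From Step 2, $X_{i^*}\setminus\{e_{r_{i^*}}\}\subseteq B_{i^*}\setminus\{e_{r_{i^*}}\}$. Nonnegativity and additivity of $c_{i^*}$ together with Step 1 then give
\[
c_{i^*}(X_{i^*}\setminus\{e_{r_{i^*}}\})\le c_{i^*}(B_{i^*}\setminus\{e_{r_{i^*}}\})\le \PROP_{i^*}.
\]
Since each agent is assigned at most once, this covers every agent.

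The main obstacle I expect is Step 1. One has to check carefully that the crossing item really is the last element of $B_k$ rather than an arbitrary element, and that the degenerate zero-cost branch and the final block also fit the "last item" form. The containment in Step 2 is routine once it is noted that the leftmost intersecting block starts at or before $e_{L+1}$.
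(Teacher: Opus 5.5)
Your proposal is correct and follows essentially the paper's argument: the selected bundle $(e_{L+1},\dots,e_{r_{i^*}})$ is a suffix of the leftmost personal block $B_{i^*}$ and shares its right endpoint, and monotonicity transfers the last-item bound. You make explicit that the witness is the \emph{last} item of each block, which the paper only uses implicitly from Lemma~\ref{lemma: personal_prop1}. The one difference is that the paper's proof also maintains the invariant that $S_L$ meets at most $k$ blocks of each remaining agent's partition, in order to show that every chore is allocated. As you note, that invariant is not needed for the stated bound, but Theorem~\ref{thm:contiguous_prop1} does rely on it.
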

\begin{proof}
We maintain the invariant that at the beginning of a round with $k$ unassigned agents, the current suffix $S_L$ intersects at most $k$ blocks of $\bP^i$ for every remaining agent $i$.

Initially, $L=0$ and $k=n$, so the initial suffix $S_0 = M$ intersects exactly $n$ blocks of each personal partition, and the invariant trivially holds.
In a general round with $k$ unassigned agents, let $B_i$ be the leftmost block of $\bP^i$ intersecting $S_L$ for each remaining agent $i$, and let $r_i$ be its right endpoint.
Since $B_i$ is the leftmost intersecting block, we have $B_i \cap S_L = (e_{L+1}, \dots, e_{r_i})$.
The algorithm then selects an agent $i^* \in \arg \max_{i} r_i$ and allocates the bundle $X_{i^*} = (e_{L+1}, \dots, e_{r_{i^*}}) = B_{i^*} \cap S_L$.
Because $X_{i^*}$ is a suffix of the personal block $B_{i^*}$ and $B_{i^*}$ is a PROP1 bundle for agent $i^*$ by Lemma~\ref{lemma: personal_prop1}, it satisfies $c_{i^*}(B_{i^*} \setminus \{e_{r_{i^*}}\}) \le \PROP_{i^*}$.
By monotonicity, $X_{i^*} \setminus \{e_{r_{i^*}}\} \subseteq B_{i^*} \setminus \{e_{r_{i^*}}\}$ implies that
$$
c_{i^*}(X_{i^*} \setminus \{e_{r_{i^*}}\}) \le c_{i^*}(B_{i^*} \setminus \{e_{r_{i^*}}\}) \le \PROP_{i^*},
$$
which means $X_{i^*}$ is PROP1 for $i^*$.
To show the invariant is preserved, consider any unassigned agent $j \neq i^*$.
Since $r_j \le r_{i^*}$, updating the pointer to $L \leftarrow r_{i^*}$ completely removes the remaining part of $B_j$ from the suffix.
This ensures that the new suffix intersects at least one fewer block of $\bP^j$, so it intersects at most $k-1$ blocks, successfully preserving the invariant.

Finally, when only one agent remains ($k=1$) and the suffix is nonempty, the invariant ensures that the suffix intersects at most one block of that agent's personal partition, so the final allocation completely exhausts all remaining chores.
If the suffix becomes empty earlier, all remaining agents receive empty bundles.
Thus, the merge phase returns a complete contiguous allocation where every bundle is either empty or satisfies PROP1 for its receiver.
\end{proof}

Together, these lemmas establish the following theorem.
\begin{theorem}\label{thm:contiguous_prop1}
For instances with additive cost functions, contiguous PROP1 allocations can be
computed using $O(n^3\log^2 m)$ comparison queries.
\end{theorem}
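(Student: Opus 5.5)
The theorem follows by assembling the lemmas already established for Algorithm~\ref{alg: contiguousPROP1}. There are two things to check: that the output is a complete contiguous PROP1 allocation, and that the total number of comparison queries is $O(n^3\log^2 m)$.

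\emph{Correctness.} First, we run $\textsc{PersonalPartition}(i)$ for every agent $i$. By Lemma~\ref{lemma: personal_prop1}, each output $\bP^i$ is a contiguous $n$-partition of $M$ in which every block is PROP1 for agent $i$. We should also note the early-termination branch, where zero-cost blocks are trivially PROP.

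Second, the Rightmost Merge loop produces intervals $X_{i^*}=(e_{L+1},\dots,e_{r_{i^*}})$ that are consecutive and pairwise disjoint. Contiguity is therefore immediate.

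Third, Lemma~\ref{lemma: rightmost_merge} provides the invariant that, with $k$ agents left, the suffix $S_L$ meets at most $k$ blocks of each remaining $\bP^i$. Hence, when one agent remains, her leftmost intersecting block covers all of $S_L$, so the loop exhausts $M$ and the allocation is complete. Agents never reached receive $\emptyset$, which is allowed.

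Fourth, for each nonempty $X_i=(e_l,\dots,e_r)$, Lemma~\ref{lemma: rightmost_merge} gives $c_i(X_i\setminus\{e_r\})\le \PROP_i$, which is exactly PROP1 with witness item $e_r$.

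The step needing the most care is the application of Lemma~\ref{lemma: rightmost_merge}. It uses that the PROP1 witness of each personal block $B_k$ is its right endpoint $e_k$, so that a suffix of $B_k$ keeps the same witness. This is exactly how Lemma~\ref{lemma: personal_prop1} is constructed: $B_k=\Pre(R_k,\ell_k+1)$ with crossing item last, and $B_1$ is fully PROP.

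\emph{Query count.} The merge phase only reads block endpoints of the precomputed partitions, so it uses no comparisons. All queries therefore come from the $n$ calls to $\textsc{PersonalPartition}$. Each call costs $O(n^2\log^2 m)$ by Lemma~\ref{lemma: personal_prop1}. That cost in turn rests on Lemma~\ref{lemma: Prefix}, which gives $O(n\log^2 m)$ per $\textsc{ComputePrefix}$ call, and on Lemma~\ref{lemma: contiguous_check}, which gives $O(n\log m)$ per $\textsc{Cutting}$ call, each binary search over at most $m+1$ positions costing $O(\log m)$. The zero-cost test adds one query per $k$. Summing over the $n$ agents yields $O(n^3\log^2 m)$ queries, which completes the proof.
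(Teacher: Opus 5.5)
Your proposal is correct and follows the paper's proof essentially step for step. You compute each agent's personal contiguous PROP1 partition via Lemma~\ref{lemma: personal_prop1}, use Lemma~\ref{lemma: rightmost_merge} for completeness, contiguity and the PROP1 guarantee of the merged intervals, and bound the queries by $n\cdot O(n^2\log^2 m)$, with the merge phase using none. Your explicit remark that each personal block's PROP1 witness is its right endpoint, so that every suffix of the block inherits it, usefully spells out a point the paper's merge lemma uses only implicitly.
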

\begin{proof}
We first utilize Lemma~\ref{lemma: personal_prop1} to compute the personal contiguous PROP1 partition $\mathbf{P}^i$ for each agent $i$. 
We then apply the rightmost merge algorithm to these partitions. 
This merge process preserves contiguity by sequentially allocating contiguous prefixes of the remaining suffix. 
According to Lemma~\ref{lemma: rightmost_merge}, the allocated bundle $X_i$ for each agent $i$ is either empty (which is strictly stronger than PROP1) or satisfies PROP1.
Consequently, the final contiguous allocation is PROP1.

Regarding the query complexity, applying Lemma~\ref{lemma: personal_prop1} for a single agent requires $O(n^2\log^2 m)$ comparison queries. 
Repeating this process for all $n$ agents yields a total of $O(n^3\log^2 m)$ queries. 
Since the subsequent rightmost merge phase requires no additional comparisons, the overall query complexity is bounded by $O(n^3\log^2 m)$.
\end{proof}
% Next, we show that when the number of agents is a power of $2$, the query complexity can be significantly improved to $O(\log m)$.

% \begin{theorem}
%     For instances with additive cost functions where $n = 2^k$ with $k \in \mathbb{Z}_{\geq 1}$, PROP1 allocations can be computed using $O(\log m)$ queries.
% \end{theorem}

\section{Query Complexity of Maximin Share}\label{sec:mms}
In this section, we turn our attention to the MMS guarantee and focus on designing an algorithm that computes approximate MMS allocations with low query complexity.
We first introduce a useful reduction and the corresponding lemma.
\begin{definition}[Identical-Order Preferences (IDO)~\cite{conf/sigecom/HuangL21}]
An instance exhibits identical-order preferences if there exists a global ordering of the chores $(e_1, e_2, \dots, e_m)$, such that for every agent $i \in N$, $c_i(e_j) \geq c_i(e_k)$ whenever $j \leq k$. That is, all agents universally agree on the ranking of chores from the most costly to the least costly.
\end{definition}
Informally, IDO instances represent the worst-case scenarios for the maximin share guarantee. 
This intuition is formalized by the following lemma, which mirrors well-known IDO reduction techniques established in the goods setting.

\begin{lemma}[\cite{conf/sigecom/HuangL21}] \label{lemma: ido_reduction}
Any arbitrary instance $\mathcal{I}$ can be transformed into an IDO instance $\mathcal{I}'$ in polynomial time, such that any $\alpha$-MMS allocation for $\mathcal{I}'$ can be mapped back to an $\alpha$-MMS allocation for $\mathcal{I}$ in polynomial time.
\end{lemma}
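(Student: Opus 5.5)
The plan is the standard ordering-based reduction, run in the chores direction. Given $\mathcal I=(N,M,\bc)$, construct $\mathcal I'$ on the same agents and $m$ abstract chores $e_1,\dots,e_m$. For each agent $i$, let $c_i(e_j)$ in $\mathcal I'$ be the $j$-th largest single-chore cost of agent $i$ in $\mathcal I$. Sorting each agent's item costs makes this a polynomial-time construction, and $\mathcal I'$ is IDO by design.

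First, I will show that MMS values are unchanged, i.e., $\MMS_i(\mathcal I')=\MMS_i(\mathcal I)$. For agent $i$ alone, $\mathcal I'$ is just a relabeling of $\mathcal I$: the map sending $e_j$ to the item with the $j$-th largest $c_i$-cost is a bijection that preserves $c_i$. Since $\MMS_i$ depends only on agent $i$'s multiset of item costs, it is invariant under this bijection. Consequently, any $\alpha$-MMS guarantee with respect to $\mathcal I'$ is the same numerical guarantee for $\mathcal I$.

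Second, I will map an allocation $\bX'$ of $\mathcal I'$ back to one of $\mathcal I$ with $c_i(X_i)\le c_i'(X'_i)$ for every $i$. The procedure processes $j=1,\dots,m$ in order, from the most costly abstract chore to the least. At step $j$, let $i$ be the owner of $e_j$ in $\bX'$. Agent $i$ then receives, among the items of $M$ still unassigned, one with the smallest $c_i$-cost. After $j-1$ steps, exactly $j-1$ items of $M$ have been removed, so at least one of agent $i$'s $j$ most costly items (under $c_i$) is still available. The item agent $i$ picks therefore costs at most her $j$-th largest cost, which is $c_i'(e_j)$. Summing over the abstract chores owned by $i$ gives $c_i(X_i)\le c'_i(X'_i)\le\alpha\,\MMS_i(\mathcal I')=\alpha\,\MMS_i(\mathcal I)$. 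This mapping is a simple greedy pass and runs in polynomial time.

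The main obstacle is the pigeonhole step in the picking argument. The direction matters here: for chores we process from most costly to least costly and let the owner take her \emph{cheapest} remaining item. The reverse order, which is used for goods, would fail. The rest is routine bookkeeping of the two maps. I also note that the reduction uses cardinal sorting, so it is valid as stated but is not by itself a comparison-efficient procedure.
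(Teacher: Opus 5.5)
Your construction of $\mathcal I'$ and the invariance $\MMS_i(\mathcal I')=\MMS_i(\mathcal I)$ are fine, but the map-back step fails. The processing direction is reversed, and the pigeonhole sentence does not prove what you need. Write $d^i_1\ge\dots\ge d^i_m$ for agent $i$'s sorted costs, so $c_i'(e_j)=d^i_j$. Knowing that one of her $j$ most costly chores is still available only shows that some remaining chore costs \emph{at least} $d^i_j$. That is the goods pigeonhole transplanted without flipping, and it gives no upper bound on her cheapest remaining chore.

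For the upper bound you need a survivor among the (at least) $m-j+1$ chores she ranks in positions $j,\dots,m$. After $j-1$ removals, counting guarantees this only when $j-1<m-j+1$. A concrete failure: two agents with identical costs $2,1,1$. Then $\MMS=2$, and $(\{e_1\},\{e_2,e_3\})$ is an exact MMS allocation of $\mathcal I'$. Under your procedure, agent $1$ takes a unit chore at $j=1$ and agent $2$ takes the other unit chore at $j=2$. At $j=3$ agent $2$ is forced to take the cost-$2$ chore, so $c_2(X_2)=3>2=c_2'(X_2')$. With one chore of cost $K$ and $K$ unit chores, the same pattern leaves agent $2$ with cost $2K-1$ against $\MMS=K$. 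So your map breaks the guarantee for every $\alpha<2$, including the $\alpha=13/11$ the paper needs.

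The fix is the opposite of your closing remark. Process $j=m,m-1,\dots,1$, from the least costly abstract chore to the most costly, and let the owner of $e_j$ take her cheapest remaining chore. When $e_j$ is handled, exactly $j$ chores of $M$ remain, while at most $j-1$ chores have $c_i$-cost strictly above $d^i_j$. Hence some remaining chore costs at most $d^i_j=c_i'(e_j)$, and summing gives $c_i(X_i)\le c_i'(X_i')$. This is the Huang--Lu reduction the paper cites (it gives no proof of its own). It is the true analogue of the goods reduction: there too, the \emph{best} abstract items are handled first and each owner takes her best remaining item.
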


Most prior works compute approximate MMS allocations by relying on IDO reduction. 
However, applying this reduction directly necessitates a complete global ranking of all items, which incurs a query complexity of $O(m \log m)$ for each agent. 

To bypass the use of IDO reduction, a natural approach is to construct a bipartite graph following Bu et al.~\cite{conf/wine/BuLLST24}, where the selected agents receive a bundle within their $\alpha$-MMS threshold while all other agents view the cost of this bundle as exceeding their own $\alpha$-MMS thresholds. We refer to this approach as the \emph{Residual Algorithm}. However, we show that this method suffers from a lower bound of $2n/(n+2)$ on the approximation ratio, even in the cardinal model (see Appendix~\ref{app:mms-discussion}). This limitation motivates us to design a novel technique.

To overcome this challenge, our strategy is to reduce the IDO to local instances, which capture the essential ordinal structure without fully sorting the items. 
Specifically, we first compress the chores into a reduced set $M'$ of size $f(n)$, which is independent of $m$. 
This phase, termed \emph{chore compression} (Section~\ref{subsection: compression}), defines a new instance $\mathcal{I'} = (N, M', \bc)$.
We show that any $\alpha$-MMS allocation in $\mathcal{I'}$ can be mapped back to the original instance $\mathcal{I}$ with a slight loss in the approximation ratio depending on the compression ratio $\beta$. 
This new instance directly enables the application of the IDO reduction. 
Additionally, by carefully adapting the algorithm of Huang and Segal-Halevi~\cite{conf/sigecom/HuangS23} to our setting (Section~\ref{subsection: 13/11MMS}), we successfully map the resulting allocation back to $\mathcal{I}$, achieving a final $(13/11 + \varepsilon)$-MMS allocation.

\begin{algorithm}[!ht]
\caption{Computation of Meta-Chores via Chore Compression}
\label{alg: ChoreCompression}
\KwIn{Instance $\mathcal{I} = (N, M, \textbf{c})$ with a fixed order $M = (e_1, \dots, e_m)$, and an integer $k \geq n$}
\KwOut{The set of meta-chores $M'$}

\SetKwProg{Fn}{Function}{}{}
\Fn{\textsc{ComputeCutPoints}$(i, k, M)$}{
    Compute a PROP1 allocation $\bA = (A_1, \dots, A_k)$ with $k$ identical copies of agent $i$ by Algorithm~\ref{alg: PROP1}\;
    Initialize set counter $\ell \gets 1$\;
    
    \For{each $j \in [k]$}{
        Let $A_j = B_j \cup \{e_j\}$, where $B_j$ is assigned in Phase 1 and $e_j$ in Phase 2.
        
        \tcp{set $e_j = \bot$ if no such item exists\footnotemark}
        \If{$e_j \neq \bot$}{$C_{\ell} \gets \text{index of } e_j$; $\ell \gets \ell + 1$\;}
    }
    
    Let $\mathcal{S}$ be the collection of remaining contiguous segments in $M \setminus \{e_j \mid j \in [k], e_j \neq \bot\}$\;
    $B' \gets \arg \max_{j \in [k]} c_i(B_j)$\;
    
    \For{each contiguous segment $S = (e_s, e_{s+1}, \dots, e_t) \in \mathcal{S}$}{\label{line:seg_loop}
        \While{$S \neq \emptyset$}{\label{line:while_loop}
            \If{$\Comp_i(S, B') = S$}{\label{line:term_cond}
                $C_{\ell} \gets t$; $\ell \gets \ell + 1$; \textbf{break}\;  
            } 
            \Else{
                Binary search for the smallest number $p$ such that $c_i(\Pre(S, p)) > c_i(B')$\; \label{line:bin_search}
                $C_{\ell} \gets s + p - 2$, $C_{\ell + 1} \gets s + p - 1$; $\ell \gets \ell + 2$\;
                Update $S \gets S \setminus \Pre(S, p)$\;
            }
        }
    }
    \Return Set of cut points $\textbf{C}_i \gets \{C_1, \dots, C_{\ell-1}\}$\;
}

\vspace{2mm}
$\mathbf{C} \gets \{0, m\} \cup \bigcup_{i \in N} \textsc{ComputeCutPoints}(i, k, M)$; \\
Sort and remove duplicates of $\mathbf{C}$ to obtain $(C_0, C_1, \dots, C_\ell)$; \\

$M' \gets \emptyset$\;
\For{$t \gets 1$ \KwTo $\ell$}{
    $M' \gets M' \cup \big\{ \{e_j \mid C_{t-1} < j \le C_t\} \big\}$\;
}
\Return Meta-chores set $M'$
\end{algorithm}
\footnotetext{Note that at least one of the $k$ agents receives no item in Phase $2$, since at most $k-1$ items remain.}

% \begin{lemma}[\cite{conf/sigecom/FeigeH23}]
%     The allocation $\textbf{Y} = (Y_1, \dots, Y_n)$ returned by \textsf{PickingMetaChores} satisfies $c_i(Y_i) \leq \beta_n \cdot \mu_i$ for every agent $i \in N$, where 
%     \[
%         \beta_n = 
%         \begin{cases}
%             4/3, & \text{if } n = 2, \\
%             7/5, & \text{if } n = 3, \\
%             13/9, & \text{if } n = 4, \\
%             8/5, & \text{if } n \geq 5.
%         \end{cases}
%     \]
% \end{lemma}

\subsection{Chore Compression} \label{subsection: compression}
In this subsection, we introduce our technique called Chore Compression. 
Throughout this section, we use $\mu_i$ for agent $i$'s MMS cost in the original instance.
Let $k \geq n$ be an integer specifying the number of identical copies of each agent used in the compression procedure, and define the compression ratio $\beta = n/k \leq 1$.
In the following, we first provide an overview of Algorithm~\ref{alg: ChoreCompression}.

\paragraph{Algorithm Overview.} 
We fix an arbitrary linear order $M=(e_1,\dots,e_m)$, shared by all agents throughout the compression procedure\footnote{This order is independent of their preferences and requires no comparison queries.}. 
Algorithm~\ref{alg: ChoreCompression} compresses the ordered chore set by constructing a common refinement of cut points generated separately for each agent.
For each agent $i$, we first call Algorithm~\ref{alg: PROP1} on $k$ identical copies of agent $i$.
The resulting $k$-partition of $M$ is not used directly as the compressed instance; instead, it serves as a guide for generating agent-specific cut points.
In particular, the items allocated in Phase 2 of Algorithm~\ref{alg: PROP1} are marked so that they become singleton blocks, while the remaining contiguous segments are further split by binary search according to the largest bundle of Phase $1$.
After computing these cut points for all agents, we take their union together with the two endpoints of the global order.
The intervals between consecutive cut points form the compressed items, which we call meta-chores.
Throughout the remaining section, we use $M'$ to denote the resulting set of meta-chores.
We show that $M'$ has bounded cardinality (Lemma~\ref{lemma: meta_count}).
Furthermore, every meta-chore is either a singleton or has a bounded cost for every agent (Lemma~\ref{lemma: meta_bound}).
\medskip

In the following, we first show that the number of meta-chores can be bounded.
\begin{lemma} \label{lemma: meta_count}
    We have $|M'| \leq n(4k-1)$.
\end{lemma}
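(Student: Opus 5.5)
The plan is to bound, for each agent $i$, the number of cut points returned by $\textsc{ComputeCutPoints}(i,k,M)$ by $4k-1$. The set $\mathbf{C}$ is then $\{0,m\}$ together with at most $n(4k-1)$ other points, and after deduplication the number of consecutive intervals is at most the number of distinct points in $\mathbf{C}\setminus\{0\}$. That gives $|M'|\le n(4k-1)$, provided we check that the endpoint $m$ costs nothing extra. The count for agent $i$ splits into three contributions, which I would bound in this order.

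\emph{Phase 2 items.} By the footnote, Phase~2 of Algorithm~\ref{alg: PROP1} on $k$ copies leaves at most $k-1$ items. So at most $k-1$ cut points come from the Phase~2 items $e_j$.

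\emph{Segments.} Removing at most $k-1$ items from a line leaves at most $k$ contiguous segments in $\mathcal{S}$. Each segment contributes exactly one cut point at its final step, the \textbf{break} on line~\ref{line:term_cond}. If that step happens to be an empty $S$, the segment contributes nothing, which only helps. This gives at most $k$ points.

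\emph{Binary-search steps.} This is the main obstacle. Each execution of the else-branch adds two cut points, so I need to show the total number of else-executions over all segments is at most $k$. That gives $2k$ points and a grand total of $(k-1)+k+2k=4k-1$.

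For the else-executions, each one removes a prefix $\Pre(S,p)$ with $c_i(\Pre(S,p))>c_i(B')$, and these removed prefixes are pairwise disjoint subsets of $M$ that avoid the Phase~2 items. Their union therefore lies inside $\bigcup_j B_j$, whose cost is $\sum_j c_i(B_j)\le k\cdot c_i(B')$, since $B'$ maximizes $c_i(B_j)$. If there were $t\ge k+1$ such prefixes, their total cost would exceed $t\cdot c_i(B')\ge (k+1)c_i(B')$. This contradicts the bound whenever $c_i(B')>0$, and in fact shows $t\le k$; strictly, $t<k$ unless costs tie, and either way $t\le k$ suffices.

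The degenerate case $c_i(B')=0$ has to be handled separately. Then every item outside the Phase~2 items has zero cost, so $c_i(S)=0\le c_i(B')$ for every segment, and the test $\Comp_i(S,B')=S$ succeeds immediately with no else-executions.

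Finally, I would note that the last segment's terminating cut point is $t=m$ itself, or that the Phase~2 item is $e_m$. In either case $m$ is already counted, so adding $\{0,m\}$ does not increase the number of intervals beyond the count of nonzero distinct points. This yields $|M'|\le n(4k-1)$.
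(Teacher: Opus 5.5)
Your proposal is correct and follows essentially the same argument as the paper. It uses the same $(k-1)+k+2k$ split into Phase~2 singletons, terminal cuts, and binary-search cuts, bounds the binary-search steps by comparing disjoint prefixes of cost exceeding $c_i(B')$ against $\sum_j c_i(B_j)\le k\,c_i(B')$, and makes the same observation that $m$ is already a cut point, so only $0$ is new. Your cost-sum argument is a more explicit version of the paper's one-line justification; note also that $m$ can arise as the cut $s+p-1$ when a binary-search step exhausts the last segment, a case your list omits but which still puts $m$ in the cut set.
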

\begin{proof}
    Fix any agent $i \in N$. 
    We begin by bounding the total number of cut points returned by $\textsc{ComputeCutPoints}(i, k, M)$. 
    By the framework of Algorithm~\ref{alg: PROP1}, Phase 2 terminates with fewer than $k$ remaining items. 
    We use $R$ to denote those remaining items; then we have $|R| \leq k - 1$.
    Each non-empty item in $R$ is treated as a singleton and directly contributes exactly $1$ cut point.
    After removing these singleton items from the fixed global order $M$, the remaining items decompose into at most $|R| + 1 \leq k$ contiguous segments, forming the collection $\mathcal{S}$. 
    We now analyze the \texttt{while} loop on Line~\ref{line:while_loop} for these segments. 
    Recall that $B'$ is selected as the maximum bundle among $(B_1, \dots, B_k)$. 
    Since its cost is at least proportional to $\mathcal{S}$, the algorithm can find at most $k$ disjoint prefixes with a cost of at least $c_i(B')$.
    Each binary search step (Line~\ref{line:bin_search}) identifies one such prefix and generates $2$ cut points, contributing at most $2k$ cut points across all segments. 
    Conversely, when the remaining part of a segment costs less than $B'$, the terminal condition (Line~\ref{line:term_cond}) triggers to create $1$ cut point. 
    This terminal branch is invoked at most once per segment, adding at most $k$ cut points in total.
    
    Consequently, the total number of cut points generated within the loop over $\mathcal{S}$ (Line~\ref{line:seg_loop}) is bounded above by $2k + k = 3k$. Summing the singletons and the segment cuts, the total number of cut points generated by a single agent $i$ satisfies:
    $$
    |\textsc{ComputeCutPoints}(i, k, M)| \leq (k-1) + 3k = 4k-1.
    $$
    
    Let $\mathbf{C}$ be the global set of unique cut points. Algorithm~\ref{alg: ChoreCompression} constructs $\mathbf{C}$ by taking the union of cut points from all agents alongside the boundaries $\{0, m\}$. Since the endpoint $m$ is inherently captured by $\textsc{ComputeCutPoints}(i, k, M)$ for every agent (either as a singleton index or as the rightmost boundary $t$ of a terminal segment), incorporating $\{0, m\}$ into the union introduces at most one genuinely new point (the origin $0$). The cardinality of $\mathbf{C}$ is thus bounded by:
    $$
    |\mathbf{C}| \leq 1 + \sum_{i \in N} (4k - 1) = n(4k - 1) + 1.
    $$
    
    Finally, let the sorted sequence of cut points in $\mathbf{C}$ be $C_0 < C_1 < \cdots < C_\ell$.
    Since $\mathbf{C}$ contains both endpoints $0$ and $m$, we have $C_0 = 0$ and $C_\ell = m$.
    Algorithm~\ref{alg: ChoreCompression} defines one meta-chore for each pair of adjacent cut points, namely
    $$
    \{e_j \mid C_{t-1} < j \leq C_t\}, \qquad t \in [\ell].
    $$
    These intervals are pairwise disjoint, and their union is exactly $M$.
    Hence, there are exactly $\ell = |\mathbf{C}| - 1$ meta-chores:
    $$
    |M'| = \ell = |\mathbf{C}| - 1 \leq n(4k - 1),
    $$
    which completes the proof.
\end{proof}

Next, we show that for every agent, each meta-chore is either a singleton or bounded in cost.
\begin{lemma} \label{lemma: meta_bound}
    For every agent $i \in N$ and every meta-chore $A \in M'$, either $A$ is a singleton or $c_i(A) \leq \beta \cdot \mu_i$.
\end{lemma}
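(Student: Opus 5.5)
Fix an agent $i$ and a meta-chore $A\in M'$. The plan is to show that $A$ lies entirely inside a piece that agent $i$'s own cut points already certify as cheap, and then to bound that piece by $\beta\mu_i$. Since $\mathbf{C}\supseteq \textsc{ComputeCutPoints}(i,k,M)\cup\{0,m\}$, the meta-chores refine the partition induced by $\mathbf{C}_i$. It therefore suffices to show that every interval between consecutive points of $\mathbf{C}_i\cup\{0\}$ is either a singleton or has $c_i$-cost at most $\beta\mu_i$. Subsets of such intervals inherit the same property, because costs are nonnegative and additive.

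\textbf{Step 1: the reference bundle.} Let $\bA=(A_1,\dots,A_k)$ be the PROP1 allocation for $k$ copies of $i$, and let $B_1,\dots,B_k$ be its Phase-1 parts. By Invariant~\ref{invariant: prop_invariant} applied to the $k$-copy instance, $c_i(B_j)\le c_i(M)/k$ for all $j$, so $c_i(B')\le c_i(M)/k$. Because the MMS partition has $n$ bundles, each of cost at most $\mu_i$, we have $c_i(M)\le n\mu_i$. Hence $c_i(B')\le (n/k)\mu_i=\beta\mu_i$.

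\textbf{Step 2: classifying the intervals.} Each Phase-2 item $e_j$ is isolated by its own cut point at its index together with the cut points bounding the neighbouring segments, so it forms a singleton. For a segment $S=(e_s,\dots,e_t)\in\mathcal S$, the while loop produces three kinds of pieces.
\begin{itemize}
\item In the binary-search branch, with $p$ minimal such that $c_i(\Pre(S,p))>c_i(B')$, the cuts at $s+p-2$ and $s+p-1$ split off $\Pre(S,p-1)$, which has cost at most $c_i(B')\le\beta\mu_i$, and the singleton $\{e_{s+p-1}\}$.
\item In the terminal branch, the remainder $S$ satisfies $c_i(S)\le c_i(B')$.
\item When $S$ becomes empty, the loop ends with no further piece.
\end{itemize}

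\textbf{Step 3: boundaries.} This is the step I expect to be the main obstacle. I must check that every interval between consecutive points of $\mathbf C_i\cup\{0\}$ really coincides with one of the pieces above, and that no interval straddles a segment boundary. The start of the first segment is covered by $0$ or by the cut at a preceding singleton item. The end of each segment is covered either by the terminal cut at $t$ or, if the last binary-search step consumed the whole segment, by the cut $s+p-1=t$ at the end of the split-off singleton. A segment that ends just before a Phase-2 item is also separated by that item's cut $C=\text{index}-1$ only if the segment's own final cut is present, and the two cases above supply exactly that cut.

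Degenerate cases must be handled separately. When $p=1$, the piece $\Pre(S,0)$ is empty, so the coinciding cut points produce no piece, which is consistent with the fact that duplicates in $\mathbf C$ are removed. Repeated indices between agents are harmless for the same reason. With these checks in place, every piece is a singleton or has cost at most $\beta\mu_i$, and restricting to the refinement $M'$ proves the lemma.
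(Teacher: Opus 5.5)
Your proof is correct and follows the same route as the paper. You bound $c_i(B')\le c_i(M)/k\le \beta\mu_i$ using the Phase-1 PROP invariant for the $k$-copy instance together with $c_i(M)\le n\mu_i$, and you observe that every non-singleton piece cut out by agent $i$'s points is either a binary-search prefix $\Pre(S,p-1)$ or a terminal remainder, each of cost at most $c_i(B')$. Your explicit refinement argument (meta-chores are nonempty sub-intervals of agent $i$'s pieces since $\mathbf{C}\supseteq\mathbf{C}_i\cup\{0,m\}$) and the boundary bookkeeping are left implicit in the paper and are a worthwhile addition. One correction to your wording in Step 3: a Phase-2 item at index $x$ contributes only the cut at $x$. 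Its left boundary, the cut at $x-1$, is supplied by the preceding segment's final cut, a preceding singleton's cut, or $0$.
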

\begin{proof}
    Fix any agent $i \in N$. It suffices to show that the cost of any non-singleton meta-chore $A$ is bounded above by $\beta \cdot \mu_i$.
    Recall that non-singleton meta-chores are constructed via the terminal branch on Line~\ref{line:term_cond} or the binary search step on Line~\ref{line:bin_search}. Specifically, the algorithm identifies a prefix of a remaining contiguous segment $S$ such that $c_i(\Pre(S, p)) > c_i(B')$ and $c_i(\Pre(S, p - 1)) \leq c_i(B')$. This prefix $\Pre(S, p - 1)$ is then isolated by the cut points to form a meta-chore $A$. Therefore, we immediately have
    $$
    c_i(A) = c_i(\Pre(S, p-1)) < c_i(B').
    $$
    Note that $B'$ is selected as the maximum bundle from Phase 1.
    By the design of Algorithm~\ref{alg: PROP1}, Phase 1 outputs a partial PROP allocation, ensuring that $c_i(B') \leq c_i(M)/k = \beta\cdot c_i(M)/n \leq \beta\mu_i$. 
    Therefore, combining this bound with the previous inequality yields $c_i(A) < c_i(B') \leq \beta \cdot \mu_i$, thereby completing the proof.
\end{proof}

Given Lemma~\ref{lemma: meta_count} and Lemma~\ref{lemma: meta_bound}, we can establish the following lemma, which shows that we can map the resulting allocation to the original instance with only a slight loss in fairness.
\begin{lemma} \label{lemma: mms_inflation}
    By treating the meta-chores $M'$ as indivisible items, for every agent $i \in N$, we have $\MMS_i(M') \leq (1+\beta) \cdot \mu_i$.
\end{lemma}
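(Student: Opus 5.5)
Fix an agent $i$ and an optimal MMS partition $\mathbf{Q}=(Q_1,\dots,Q_n)$ of the original chores $M$ for agent $i$, so $c_i(Q_j)\le \mu_i$ for every $j$. The plan is to round $\mathbf{Q}$ into a partition of the meta-chores $M'$, moving each meta-chore as a whole, and to show that each bundle grows by at most $\beta\mu_i$. Singleton meta-chores are ordinary items, so each one already lies in a unique $Q_j$ and stays there. Every non-singleton meta-chore $A$ satisfies $c_i(A)\le\beta\mu_i$ by Lemma~\ref{lemma: meta_bound}. The task is therefore to distribute these ``small'' meta-chores among the $n$ bundles so that each bundle receives at most $\beta\mu_i$ more than its share under $\mathbf{Q}$.

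For the rounding I would use a fractional-to-integral argument. Let $Q'_j$ be the set of singleton meta-chores contained in $Q_j$. Let $\mathcal{A}$ be the set of non-singleton meta-chores, and for $A\in\mathcal{A}$ let $x_{A,j}=c_i(A\cap Q_j)/c_i(A)$ (any split if $c_i(A)=0$). This is a fractional assignment in which bundle $j$ has load $c_i(Q'_j)+\sum_{A}x_{A,j}c_i(A)=c_i(Q_j)\le\mu_i$. The next step is to convert it to an integral assignment. The simplest route is a sequential greedy: process bundles $j=1,\dots,n$ in order, and at each step give bundle $j$ whole meta-chores from the remaining pool (any order) until its load first exceeds $\mu_i$. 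Each bundle then ends at most one small item above $\mu_i$, that is, at most $\mu_i+\beta\mu_i$. The last bundle takes whatever is left.

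The remaining issue is to check that the last bundle is not overloaded. Since $\sum_j c_i(Q_j)=c_i(M)\le n\mu_i$, every bundle before the last one either absorbed more than $\mu_i$ or exhausted the pool. Taken literally this gives the wrong direction, because the singletons are prefixed and cannot be moved. The fix is to start each bundle $j$ at load $c_i(Q'_j)\le\mu_i$ and fill it with small meta-chores only while its load is at most $c_i(Q_j)$, overshooting by at most one small item. Then each filled bundle $j<n$ has small-part load at least $c_i(Q_j)-c_i(Q'_j)=\sum_A x_{A,j}c_i(A)$. By a counting argument, what remains for the last bundle has cost at most its fractional share $c_i(Q_n)-c_i(Q'_n)$. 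So the last bundle costs at most $c_i(Q_n)\le\mu_i$, and every other bundle costs at most $c_i(Q_j)+\beta\mu_i\le(1+\beta)\mu_i$. This yields an $n$-partition of $M'$ with maximum cost at most $(1+\beta)\mu_i$, hence $\MMS_i(M')\le(1+\beta)\mu_i$.

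I expect the main obstacle to be this integral rounding step, specifically making the ``at most one overshooting item per bundle'' invariant rigorous while the singleton items stay fixed. The cleanest alternative, which I would fall back on, is the standard extreme-point argument for the assignment LP: a basic solution has at most $n$ fractional variables forming a forest, and rounding each fractional meta-chore to one adjacent bundle adds at most one small item per bundle. Either way the final bound is $c_i(Q_j)+\beta\mu_i$.
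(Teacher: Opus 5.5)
Your proof is correct, but you distribute the non-singleton meta-chores differently from the paper.

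\textbf{The paper's route.} It also keeps each singleton meta-chore in its bin of the MMS partition. It then inserts the non-singleton meta-chores one at a time into a \emph{currently least-loaded} bin. Everything placed so far costs at most $c_i(M)\le n\mu_i$, so the least-loaded bin has load at most $\mu_i$. Adding a meta-chore of cost at most $\beta\mu_i$ (Lemma~\ref{lemma: meta_bound}) therefore keeps every bin within $(1+\beta)\mu_i$. This list-scheduling argument needs only the aggregate bound $c_i(M)\le n\mu_i$. It uses no fractional assignment, no per-bundle budgets, and no separate accounting for a last bundle, so the step you flag as the main obstacle never arises.

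\textbf{Your budgeted sequential fill.} It is valid. If the pool survives to bundle $n$, each earlier bundle absorbed small cost strictly above $c_i(Q_j)-c_i(Q'_j)$, so the leftover is below $c_i(Q_n)-c_i(Q'_n)$. Otherwise bundle $n$ keeps only $Q'_n$. What it buys is a slightly finer per-bundle guarantee, namely $c_i(Q_j)+\beta\mu_i$ for $j<n$ and $c_i(Q_n)$ for the last bundle, which the lemma does not need.

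\textbf{Your first version.} Your first, ``literal'' version (fill until the load first exceeds $\mu_i$) also works; the fixed singletons do not spoil the count. If bundles $1,\dots,n-1$ each end above $\mu_i$, then bundle $n$, singletons included, has load below $c_i(M)-(n-1)\mu_i\le\mu_i$. If the pool runs out earlier, the remaining bundles hold only singletons.

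\textbf{The LP fallback.} This is Lenstra--Shmoys--Tardos rounding, and it is also sound. More precisely, a vertex has at most $n$ fractionally assigned meta-chores (not $n$ fractional variables), and its support is a pseudoforest rather than a forest. It is much heavier than necessary here.
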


\begin{proof}
    Fix any agent $i \in N$. For the original instance with chore set $M$, let $\bP = (P_1, \dots, P_n)$ denote her MMS partition of $M$. By definition, we have $c_i(P_j) \leq \mu_i$ for all $j \in [n]$. 
    We now construct an $n$-partition of the meta-chores $M'$. 
    Based on $\bP$, we initialize $n$ empty bins. 
    For each bundle $P_j \in \bP$, we place every singleton meta-chore $e \in P_j$ into bin $j$. 
    Since the singleton meta-chores assigned to bin $j$ form a subset of $P_j$, their total cost is at most $c_i(P_j) \leq \mu_i$. 

    Next, we allocate the non-singleton meta-chores one by one into a currently minimum-load bin. 
    Before any non-singleton meta-chore is placed, the total cost of all items currently allocated across all bins is at most $\sum_{j=1}^n c_i(P_j) \leq n \cdot \mu_i$. 
    Therefore, by the pigeonhole principle, there must exist a bin with a current load of at most $\mu_i$. 
    By Lemma~\ref{lemma: meta_bound}, the non-singleton meta-chore being placed has a cost of at most $\beta \cdot \mu_i$. 
    Consequently, after placing it into this minimum-load bin, the load of that bin becomes at most $(1+\beta) \cdot \mu_i$. 
    By repeating this greedy placement, the maximum cost of any bin in the resulting $n$-partition of $M'$ is upper bounded by $(1+\beta)\cdot \mu_i$. 
    By the definition of the MMS, this ensures that $\MMS_i(M') \leq (1+\beta)\cdot \mu_i$.
\end{proof}

\subsection{Achieving $13/11$-MMS by Comparison} \label{subsection: 13/11MMS}
In this subsection, we demonstrate that the state-of-the-art $13/11$-MMS allocation algorithm by Huang and Segal-Halevi~\cite{conf/sigecom/HuangS23} can be successfully adapted to our comparison model. 
Before detailing the complete algorithm, we first establish several necessary technical foundations. 

For ease of notation, we let $\mathcal{I} = (N, M, \bc)$ denote the initial instance. 
In the following, we focus on the allocation under the IDO instance $\mathcal{I'} = (N, M', \bc)$, where $M'$ represents the set of meta-chores.
Recall that the cardinality of $M'$ is bounded by $|M'| \leq n(4k-1)$ according to Lemma~\ref{lemma: meta_count}. Since this bound depends solely on the parameters $n$ and $k$, we can apply Lemma~\ref{lemma: ido_reduction} to efficiently transform the instance over $M'$ into an equivalent IDO instance.

The primary challenge in adapting the framework of Huang and Segal-Halevi~\cite{conf/sigecom/HuangS23} to our setting lies in its heavy reliance on cardinal numerical values. To overcome this hurdle, we show that an exact MMS bundle can be computed for every agent $i \in N$ via Algorithm~\ref{alg: ExactMMS}. Specifically, by exhaustively enumerating all possible $n$-partitions of $M'$, we can exactly compute the MMS bundle for any agent using $f(n, k)$ comparison queries. Crucially, this query complexity remains completely independent of the original number of chores $m$.
\begin{algorithm}[!htbp]
\caption{Computation of the MMS Bundle}
\label{alg: ExactMMS}
\KwIn{Instance $\mathcal{I'} = (N, M')$ with cost function $c_i$}

$B \gets \emptyset$\;
\For{every $n$-partition $P = (P_1, \dots, P_n)$ of $M'$}{
    $B' \gets \arg \max_{j \in [n]} c_i(P_j)$\;
    \If{$B = \emptyset$ \text{\textbf{or}} $c_i(B') < c_i(B)$}{
        $B \gets B'$\;
    }
}
\Return $B$\;
\KwOut{An MMS bundle $B$}
\end{algorithm}

\begin{lemma}
    For every agent $i \in N$, Algorithm~\ref{alg: ExactMMS} computes a bundle $B \subseteq M'$ satisfying $c_i(B) = \MMS_i(M')$ using $f(n, k)$ comparison queries, where $f$ is a function independent of $m$.
\end{lemma}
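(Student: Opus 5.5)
The plan is to establish correctness and query complexity separately, both by direct inspection of Algorithm~\ref{alg: ExactMMS}. For correctness, note that for every $n$-partition $P=(P_1,\dots,P_n)$ of $M'$, the bundle $B'$ selected inside the loop satisfies $c_i(B')=\max_{j\in[n]} c_i(P_j)$. Hence, after the loop has processed all partitions, the maintained bundle $B$ has
$$c_i(B)=\min_{P\in\prod_n(M')}\max_{j\in[n]} c_i(P_j)=\MMS_i(M').$$
I will argue this by a simple loop invariant: after processing any prefix of the enumeration, $c_i(B)$ equals the minimum of $\max_j c_i(P_j)$ over the partitions processed so far. The initialization $B=\emptyset$ is treated as the sentinel ``no candidate yet''.

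One subtlety must be handled before this argument goes through. An actual maximum bundle $B'$ could itself be empty, which happens only when all parts of a partition have zero cost. In that case the check ``$B=\emptyset$'' conflates the sentinel with a genuine candidate. I would fix this either by keeping a separate Boolean flag for initialization, or by observing that $\MMS_i(M')\ge 0$ and that any empty candidate already attains cost $0$, so the final value is unaffected. Either way, $c_i(B)=\MMS_i(M')$.

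Next I would show that every step is realizable with comparisons only. The argmax over $n$ bundles is a tournament scan using $n-1$ calls to $\Comp_i$, exactly as in the proof of Theorem~\ref{theorem: PROP1}. The test $c_i(B')<c_i(B)$ is a single query: $\Comp_i(B,B')$ returns $B'$ precisely when $c_i(B')<c_i(B)$, because ties favor the first argument. Each meta-chore is a set of original chores, so any bundle of meta-chores is a bundle of $M$, and the queries are legitimate queries about the original cost function $c_i$.

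For the count, the number of $n$-partitions of $M'$ is at most $n^{|M'|}\le n^{n(4k-1)}$ by Lemma~\ref{lemma: meta_count}. Each partition uses at most $n$ queries, so the total is at most $f(n,k)=n\cdot n^{n(4k-1)}$, which does not depend on $m$.

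The only real obstacle is the sentinel and tie-breaking issue; the rest is routine. I would resolve it by replacing the emptiness test with an explicit first-iteration flag.
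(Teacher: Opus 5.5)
Your proposal is correct and follows essentially the same route as the paper: enumerate all $n$-partitions of $M'$, find each maximum bundle by a tournament scan, keep the running minimum with one further comparison, and bound the total by a function of $|M'|\le n(4k-1)$. Two points you add go beyond the paper: realizing the strict test as $\Comp_i(B,B')$, and handling the empty-bundle sentinel. The sentinel issue is in fact harmless, because an empty maximum bundle forces $c_i(M')=0$, so every candidate has cost $0$.
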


\begin{proof}
    Fix any agent $i \in N$. For each $n$-partition $\bP = (P_1, \dots, P_n)$ of $M'$, Algorithm~\ref{alg: ExactMMS} identifies a maximum-cost bundle $B'$ using $O(n)$ comparison queries. 
    It then compares $B'$ with the running best bundle, retaining the one with the smaller cost. Consequently, after enumerating all possible $n$-partitions, the returned bundle $B$ satisfies:
    $$
    c_i(B) = \min_{\bP \in \prod_n(M')} \max_{j \in [n]} \{c_i(P_j)\} = \MMS_i(M').
    $$
    Since the total number of $n$-partitions of $M'$ depends solely on the cardinality $|M'|$, which is bounded by $|M'| \leq n(4k-1)$ according to Lemma~\ref{lemma: meta_count}, the number of comparisons is strictly bounded by a function $f(n, k)$ that is independent of $m$. 
    This completes the proof.
\end{proof}
With Algorithm~\ref{alg: ExactMMS} established, we are now ready to present our modified algorithm for computing $13/11$-MMS allocations under the comparison model.
We show that by carefully updating a sequence of benchmark threshold bundles with monotonically increasing costs, all numerical operations can be fully substituted by ordinal comparison queries.
To formalize this approach, we recall several key structural properties established by Huang and Segal-Halevi~\cite{conf/sigecom/HuangS23}. Our analysis centers on the perspective of the last assigned agent, denoted by $\omega$. The high-level intuition is that when an allocation attempt fails and leaves certain items unallocated, the instance possesses crucial structural invariants from the viewpoint of $\omega$. 

\begin{lemma}[\cite{conf/sigecom/HuangS23}] \label{lemma: FFDfail}
    Consider an execution of the HFFD algorithm on an IDO instance $\mathcal{I}=(N,M',c)$ with a threshold vector $(h_1, \dots, h_n)$.
    % satisfying $h_i \geq \MMS_i(M')$ for all $i \in N$. 
    If the execution terminates in failure, then we have $h_{\omega} < 13/11\cdot  \MMS_{\omega}(M')$.
\end{lemma}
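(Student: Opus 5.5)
This lemma is cited from Huang and Segal-Halevi, so my plan reconstructs their argument rather than inventing a new one. The plan is to argue by contraposition. Assume $h_\omega \ge \frac{13}{11}\MMS_\omega(M')$ and show that HFFD cannot leave any meta-chore unallocated.

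\textbf{Setup.} Normalize $\MMS_\omega(M')=1$, so $h_\omega\ge 13/11$. Fix an MMS partition $(P_1,\dots,P_n)$ of $M'$ for agent $\omega$, with $c_\omega(P_j)\le 1$ for every $j$. Recall how HFFD works on the IDO order $e_1\succeq e_2\succeq\cdots$. It builds bundles one at a time, scanning chores from largest to smallest and adding each chore if the bundle stays within the threshold of some unassigned agent. The bundle is then closed and given to such an agent. Failure means that when $\omega$, the last agent, receives her bundle, some chore $e^*$ is still unallocated.

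\textbf{Step 1: every closed bundle is heavy for $\omega$.} When each of the first $n-1$ bundles $G_1,\dots,G_{n-1}$ was built, $\omega$ was still unassigned. Since $e^*$ was not added, $c_\omega(G_j)+c_\omega(e^*)>h_\omega$ for each $j$. The same bound holds for $\omega$'s own bundle $G_n$ with respect to $e^*$.

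\textbf{Step 2: a weighting argument against the MMS partition.} Summing Step 1 over the $n$ bundles gives
\[
\sum_j c_\omega(G_j) > n(h_\omega - c_\omega(e^*)).
\]
The left side is at most $n - c_\omega(e^*)$, since everything except $e^*$ fits inside the MMS partition. This is already a contradiction whenever $c_\omega(e^*)$ is small, roughly $\le 2/11$.

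\textbf{Step 3: the main obstacle, large $e^*$.} If $e^*$ is large, then by IDO all earlier chores are at least as large. This bounds how many of them each $P_j$ can contain, namely at most $\lfloor 1/c_\omega(e^*)\rfloor$ of them. One then compares the first-fit structure of the $G_j$ with the $P_j$ through a case analysis on the size classes $(1/2,1]$, $(1/3,1/2]$, $(1/4,1/3]$, and so on. The goal is to show that first-fit-decreasing under a $13/11$ threshold packs these large items at least as efficiently as the optimal partition; this is the classical $11/9$-type FFD bin-packing analysis, refined. The hard part is the intermediate class around $(2/11,1/4]$, where bundles may hold three or four large items. For that class I would try a weight function $w(e)$, piecewise in $c_\omega(e)$, satisfying $w(P_j)\le 1$ but $w(G_j)>1$ for every failing bundle, following Huang–Segal-Halevi's weighting.

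Since the paper cites this lemma verbatim, in context I would simply invoke their Theorem. The comparison-model issue does not arise here, because the lemma is purely about costs and not queries.
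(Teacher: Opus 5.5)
Your Steps 1 and 2 are correct, but they only settle the easy case $c_\omega(e^*)\le \frac{n}{n-1}\cdot\frac{2}{11}$. Everything that produces the constant $13/11$ is in Step 3, and there you have a plan rather than an argument.

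More seriously, the strategy you sketch for Step 3 does not apply as stated. You want to run an FFD bin-packing analysis, with the single cost function $c_\omega$ and capacity $h_\omega$, on the bundles $G_1,\dots,G_n$. But these are not FFD bins for $\omega$. In HFFD (and in \textsc{CompareHFFD}) a chore is added whenever it fits \emph{some} remaining agent. So a bundle can absorb chores that do not fit $\omega$ and can end with $c_\omega(G_j)>h_\omega$. Being heavier, it can then reject later chores that $\omega$'s own greedy fill would have taken. Hence $G_j$ need be neither a subset nor a superset of the corresponding FFD bin. Likewise, which agent receives $G_j$, and hence which thresholds constrain later bundles, depends on the small chores added after $e^*$ was rejected. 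So the usual FFD device of discarding chores smaller than $e^*$ is not available. Since FFD is not monotone, the single inequality from Step 1 is not enough to import bin-packing structure. Reproving $13/11$ directly for HFFD would mean redoing the long, delicate MultiFit analysis with less structure, and a piecewise weight-function sketch does not get you there.

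The missing idea is the one the paper invokes: Huang and Segal-Halevi's reduction. Viewed from the last agent $\omega$, it turns a failed HFFD run into an identical-cost instance with two properties: FFD with capacity $h_\omega$ fails to pack all items into $n$ bins, while the optimal $n$-bin makespan stays at most $\MMS_\omega(M')$. That is a MultiFit counterexample. The lemma then follows from the tight $13/11$ worst-case bound for MultiFit (due to Yue), with no further case analysis.

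This bound is not a refinement of the $11/9$ FFD bin-count result, which concerns the number of bins at fixed capacity rather than capacity augmentation with $n$ bins fixed. Nor do Huang and Segal-Halevi prove it by a weighting; they reduce to it. Your closing fallback of citing their theorem is fine and matches the paper. The citation should be to this reduction plus the MultiFit bound, and Step 3 should be dropped rather than offered as a proof route.
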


\begin{proof}
    Following the reduction framework in~\cite{conf/sigecom/HuangS23}, any failed HFFD execution maps directly to a MultiFit counterexample from the perspective of the last assigned agent $\omega$. Because the worst-case approximation ratio of MultiFit is tightly bounded by $13/11$ for chore allocation, it follows immediately that a failed run cannot feature a threshold satisfying $h_{\omega} \geq 13/11 \cdot  \MMS_{\omega}(M')$.
\end{proof}

\begin{invariant} \label{lemma: HFFD_invariant}
    Throughout the execution of Algorithm~\ref{alg: ComparisonHFFD}, the invariant $c_i(H_i) \leq 13/11 \cdot \MMS_i(M')$ holds for every agent $i \in N$.
\end{invariant}

\begin{proof}
    Initially, for each agent $i \in N$, we set $H_i$ to be the bundle with the exact MMS cost computed by Algorithm~\ref{alg: ExactMMS}, thereby satisfying $c_i(H_i) = \MMS_i(M') \leq 13/11 \cdot  \MMS_i(M')$. 
    Now, suppose this invariant holds at the start of an arbitrary iteration of the \texttt{while} loop on Line~\ref{line: FFDwhile}. If \textsc{CompareHFFD} returns success, the algorithm terminates without updating any thresholds.
    
    Otherwise, the procedure fails and returns the last assigned agent $\omega$. According to Lemma~\ref{lemma: FFDfail}, this scenario corresponds precisely to a failed HFFD execution with thresholds $h_i = c_i(H_i)$, which implies $h_\omega = c_\omega(H_\omega) < 13/11 \cdot \MMS_{\omega}(M')$.
    In this case, Algorithm~\ref{alg: ComparisonHFFD} updates $H_\omega$ to the next strictly larger cost class in $T_\omega$, denoted by $H_{\omega}^+$.
    
    We claim that $c_\omega(H_\omega^+) \leq 13/11 \cdot \MMS_{\omega}(M')$.
    Assume for contradiction that $c_\omega(H_\omega^+) > 13/11 \cdot \MMS_{\omega}(M')$.
    By the definition of $H_\omega^+$, there exists no subset $S \subseteq M'$ satisfying 
    $$
    h_\omega < c_\omega(S) \leq \frac{13}{11} \cdot \MMS_{\omega}(M').
    $$
    
    Since all bundles evaluated during \textsc{CompareHFFD} are subsets of $M'$, replacing the threshold $h_\omega$ with the numerical threshold $13/11 \cdot \MMS_{\omega}(M')$ does not alter any comparison outcome involving agent $\omega$.
    Specifically, the cost of every tested bundle is either at most $h_\omega$ or strictly greater than $13/11 \cdot \MMS_{\omega}(M')$.
    Given that the thresholds of all other agents remain unchanged, the entire execution of the algorithm proceeds identically and still fails with $\omega$ as the last assigned agent. 
    This yields a contradiction to Lemma~\ref{lemma: FFDfail}, which shows that the last assigned agent must possess a threshold no more than $13/11 \cdot \MMS_{\omega}(M')$, thereby completing the proof.
\end{proof}
\begin{lemma} \label{lemma: HFFD_terminate}
Algorithm~\ref{alg: ComparisonHFFD} terminates after $f(n,k)$ comparison queries and returns a $13/11$-MMS allocation for the IDO instance $\mathcal{I'}$.
\end{lemma}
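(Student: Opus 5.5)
Two things need to be shown: the \texttt{while} loop on Line~\ref{line: FFDwhile} stops after a number of iterations bounded by a function of $n$ and $k$ alone, and the allocation it returns satisfies the $13/11$-MMS guarantee on $\mathcal{I'}$. Correctness follows from Invariant~\ref{lemma: HFFD_invariant}, and termination from a potential argument on the threshold bundles.

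\textbf{Termination.} For each agent $i$, the list $T_i$ of candidate threshold bundles is built from subsets of $M'$ and sorted, or grouped into cost classes, under $c_i$ by comparisons. There are at most $2^{|M'|}\le 2^{n(4k-1)}$ subsets by Lemma~\ref{lemma: meta_count}, so $T_i$ has at most that many distinct cost classes. Sorting them costs at most $O(2^{|M'|}|M'|)$ comparisons per agent.

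Every failed iteration replaces $H_\omega$ by the next strictly costlier class $H_\omega^+$, and no threshold ever decreases. Consider the potential $\sum_i(\text{rank of } H_i \text{ in } T_i)$. It starts at a nonnegative value, rises by at least one on each failure, and is at most $n\cdot 2^{|M'|}$. Moreover, Invariant~\ref{lemma: HFFD_invariant} ensures that the promoted class $H_\omega^+$ always exists. The reason is that $h_\omega<\frac{13}{11}\MMS_\omega(M')$, and $\MMS_\omega(M')$ is itself the cost of a subset (the bundle found by Algorithm~\ref{alg: ExactMMS}) with cost at least $h_\omega$. A class just above $h_\omega$ and no costlier than $\frac{13}{11}\MMS_\omega(M')$ might not exist, but in that case the argument inside Invariant~\ref{lemma: HFFD_invariant} shows that replacing the threshold does not change the run, which contradicts Lemma~\ref{lemma: FFDfail}. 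So the update is always well defined, and there are at most $n2^{n(4k-1)}$ iterations.

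\textbf{Queries per iteration.} Each \textsc{CompareHFFD} run places the $|M'|$ meta-chores in IDO order into at most $n$ bins. Every placement test has the form "is $c_i(\text{bin}\cup\{e\})\le c_i(H_i)$?", which is one query $\Comp_i(\text{bin}\cup\{e\},H_i)$. Deciding which agent takes a closed bin needs $O(n)$ such queries. So one run uses $O(n|M'|\cdot n)$ queries. Adding the exact-MMS computations of Algorithm~\ref{alg: ExactMMS} and the sorting cost, the total is a function $f(n,k)$ independent of $m$.

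\textbf{Correctness.} Once the loop ends, \textsc{CompareHFFD} has succeeded. Every agent $i$ received a bundle $X_i$ for which the comparison certified $c_i(X_i)\le c_i(H_i)$, and all meta-chores were allocated. Combined with Invariant~\ref{lemma: HFFD_invariant}, this gives $c_i(X_i)\le \frac{13}{11}\MMS_i(M')$.

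The main obstacle is making sure that failure cannot occur once the thresholds have been pushed up to the maximum allowed by the invariant. Equivalently, the only exit from the loop is success, so the algorithm never falls through to some extra branch. This is handled by the contradiction above: a failure whose last agent $\omega$ has no admissible $H_\omega^+$ would violate Lemma~\ref{lemma: FFDfail}. I would spell out that the comparison-based run is a faithful simulation of cardinal HFFD with thresholds $h_i=c_i(H_i)$. That means tie-breaking must match the non-strict inequality used by HFFD, which the tie rule of $\Comp_i$ (returning $X$ when costs are equal) provides.
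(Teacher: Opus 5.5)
Your proposal is correct and follows essentially the same route as the paper. Termination comes from the at most $2^{n(4k-1)}$ cost classes per agent together with monotone threshold updates. Each \textsc{CompareHFFD} run costs $O(n^2|M'|)$ queries. Success together with the invariant gives $c_i(Y_i)\le c_i(H_i)\le \frac{13}{11}\MMS_i(M')$.

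One small slip in your side remark: the exact MMS bundle has cost at most, not at least, $h_\omega$, because thresholds only increase from it. The existence of $H_\omega^+$ follows anyway, either from the failure itself (some meta-chore is rejected by $\omega$ in round $n$, which exhibits a subset of cost strictly above $h_\omega$) or from the invariant argument you invoke.
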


\begin{proof}
    We first establish termination and query complexity. 
    In each failed iteration of the main loop, the algorithm advances the threshold of exactly one agent to its next distinct cost class in $T_i$. 
    By Lemma~\ref{lemma: HFFD_invariant}, no threshold ever exceeds $13/11 \cdot \MMS_i(M')$.
    Since the number of distinct cost classes for any agent is bounded by $|T_i| \leq 2^{|M'|} \leq 2^{n(4k-1)}$, each threshold can be updated at most $2^{n(4k-1)}$ times, bounding the total number of failed iterations solely by a function of $n$ and $k$. 
    
    Furthermore, each execution of $\textsc{CompareHFFD}$ requires at most $O(n^2|M'|)$ comparison queries, as it scans at most $|M'|$ remaining meta-chores across $n$ rounds, testing at most $n$ agents per chore. 
    Combined with the preprocessing required to compute exact MMS bundles and sort the threshold cost classes, the overall comparison query complexity is strictly bounded by a function $f(n,k)$ that is independent of $m$.

    For correctness, termination implies that $\textsc{CompareHFFD}$ successfully returns a full meta-chore allocation $\textbf{Y}=(Y_1,\dots,Y_n)$. 
    By the design of the assignment rule and Lemma~\ref{lemma: HFFD_invariant}, every agent $i \in N$ satisfies $c_i(Y_i) \leq c_i(H_i) \leq 13/11 \cdot \MMS_i(M')$, confirming that $\textbf{Y}$ is a $13/11$-MMS allocation for the IDO instance $\mathcal{I'}$.
\end{proof}

\begin{algorithm}[!ht]
\caption{Computation of $13/11$-MMS Allocations under Comparison}
\label{alg: ComparisonHFFD}
\KwIn{IDO instance $I'=(N,M',c)$, and an integer $k \geq n$}

\SetKwProg{Fn}{Function}{}{}
\Fn{\textsc{CompareHFFD}$(M', H_1, \dots, H_n)$}{
    $R \gets M'$, $N' \gets N$\;
    $Y_i \gets \emptyset$ for all $i \in [n]$\;
    \For{$i \gets 1$ \KwTo $n$}{
        $T \gets \emptyset$\;
        \For{each $e \in R$ in non-increasing order}{
            \If{there exists $j \in N'$ such that $\Comp_j(T \cup \{e\}, H_j) = T \cup \{e\}$}{
                $T \gets T \cup \{e\}$\;
            }
        }
        Choose any $j \in N'$ such that $\Comp_j(T, H_j) = T$ \tcp*{Break ties by index}
        $Y_j \gets T$, $N' \gets N' \setminus \{j\}$, $R \gets R \setminus T$\;
    }
    \If{$R = \emptyset$}{
        \Return $(\textbf{true}, \mathbf{Y})$\;
    } \Else{
        Let $\omega$ be the unique agent assigned in round $n$\;
        \Return $(\textbf{false}, \omega)$\;
    }
}

\vspace{2mm}
\tcp{Main execution}
\For{each agent $i \in N$}{
    $T_i \gets \text{all subsets of } M' \text{ sorted by increasing } c_i$ \tcp*{Store distinct cost classes}
    $H_i \gets \text{MMS bundle computed by Algorithm~\ref{alg: ExactMMS}}$\;
}
Initialize $X_i \gets \emptyset$ for all $i \in N$\;

\While{\textbf{true}}{ \label{line: FFDwhile}
    $(\textit{flag}, \textit{res}) \gets \textsc{CompareHFFD}(M', H_1,\dots,H_n)$\;
    \If{$\textit{flag} = \textbf{true}$}{
        $\mathbf{Y} \gets \textit{res}$\;
        \For{each $i \in N$}{
            \tcp{Expand the meta-chores in $Y_i$ back to the original chores}
            \For{each meta-chore $S \in Y_i$ with $S = (e_s, \dots, e_t)$}{
                $X_i \gets X_i \cup \{e_s, \dots, e_t\}$\;
            }
        }
        \Return $\bX$\;
    }
    \Else{
        $\omega \gets \textit{res}$\;
        Update $H_\omega$ to be the next strictly larger threshold bundle in $T_{\omega}$\;
    }
}
\KwOut{A $(13/11 + \varepsilon)$-MMS allocation $\mathbf{X}$ for $\mathcal{I}$} 
\end{algorithm}

\begin{theorem} \label{theorem: 13_11_mms}
For additive chore allocation instances, given any fixed $n$ and constant $\varepsilon>0$,  $\left(13/11+\varepsilon\right)$-MMS allocations can be computed with query complexity logarithmic in $m$.
\end{theorem}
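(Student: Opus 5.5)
The plan is to chain chore compression with the comparison-based HFFD, choosing the compression parameter $k$ from $\varepsilon$. Set $k = \lceil 13n/(11\varepsilon) \rceil$, so $k \ge n$ and $\beta = n/k \le 11\varepsilon/13$. First run Algorithm~\ref{alg: ChoreCompression} with this $k$ to obtain the meta-chore set $M'$. By Lemma~\ref{lemma: meta_count}, $|M'| \le n(4k-1)$, which is a constant once $n$ and $\varepsilon$ are fixed. Then apply the IDO reduction (Lemma~\ref{lemma: ido_reduction}) to the instance on $M'$ and run Algorithm~\ref{alg: ComparisonHFFD} on the result. By Lemma~\ref{lemma: HFFD_terminate}, this yields an allocation $\mathbf Y$ of $M'$ with $c_i(Y_i) \le \frac{13}{11}\MMS_i(M')$ for every $i$. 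Mapping back through the IDO reduction keeps this guarantee.

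For the approximation guarantee, each $Y_i$ is expanded into the original chores it contains, giving $X_i$. Because the meta-chores partition $M$ and costs are additive, $c_i(X_i) = c_i(Y_i)$. Lemma~\ref{lemma: mms_inflation} then gives
\[
c_i(X_i) \le \tfrac{13}{11}(1+\beta)\mu_i \le \left(\tfrac{13}{11}+\varepsilon\right)\mu_i .
\]
The IDO mapping-back step needs some care under comparison access. Lemma~\ref{lemma: ido_reduction} is stated for cardinal instances, but on $M'$ it only requires sorting each agent's singleton meta-chores and then doing picking steps. All of this takes comparisons among constantly many meta-chore bundles, so it costs $g(n,k)$ queries, independent of $m$.

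The query count splits into three parts.
\begin{itemize}
\item Compression. For each agent, \textsc{ComputeCutPoints} runs Algorithm~\ref{alg: PROP1} with $k$ copies, which costs $O(k^3\log m)$ queries by Theorem~\ref{theorem: PROP1}. It then finds $B'$ with $k-1$ comparisons. Finally it performs at most $3k$ iterations of the segment loop, each a binary search costing $O(\log m)$; the argument in Lemma~\ref{lemma: meta_count} bounds the loop iterations. Over all agents this is $O(nk^3\log m)$.
\item Exact MMS bundles and HFFD. Algorithm~\ref{alg: ExactMMS}, the sorting of the classes $T_i$, and the HFFD loop together cost $f(n,k)$ queries, by the preceding lemmas.
\item IDO mapping. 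As argued above, this costs $g(n,k)$.
\end{itemize}
The total is therefore $O(n^4\varepsilon^{-3}\log m) + f(n,k) + g(n,k)$, which is $O(\log m)$ for fixed $n$ and $\varepsilon$.

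The main obstacle I expect is making sure that everything after compression is truly implementable with comparison queries and that its cost does not depend on $m$. Two points need this. First, the IDO transformation and its inverse must be simulated using only comparisons. Second, the threshold-update argument of Invariant~\ref{lemma: HFFD_invariant} must be valid on the IDO instance of $M'$, not on $M$. My approach is to observe that every comparison made in these steps is between unions of meta-chores, so each one is a single $\Comp_i$ query on original bundles. Since $|M'|$ is bounded, a brute-force count suffices for the query bound.
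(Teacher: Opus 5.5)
Your proposal is correct and follows the paper's proof essentially step for step: compress with $k$ copies so that $\beta=n/k\le 11\varepsilon/13$, apply the IDO reduction and the comparison-based HFFD on the bounded meta-instance, expand the meta-chores, and bound the queries by $O(nk^3\log m)$ plus a term depending only on $n$ and $k$. Your explicit treatment of the IDO step, via per-agent sorting of meta-chores and picking, usefully fills in a detail the paper leaves implicit; one small slip is that $\lceil 13n/(11\varepsilon)\rceil$ need not be at least $n$ when $\varepsilon>13/11$, so take $k=\max\{n,\lceil 13n/(11\varepsilon)\rceil\}$ as the paper does.
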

\begin{proof}
    Fix $n$ and $\varepsilon>0$. 
    Choose an integer $k \geq \max\left\{n,\left\lceil 13n/(11\varepsilon)\right\rceil\right\}$, and set $\beta = n /k$.
    Then, $\beta\leq 1$ and by the choice of $k$, we have $\beta \leq 11\varepsilon / 13$.
    Since $n$ and $\varepsilon$ are fixed, this choice of $k$ is a constant independent of $m$.

    Start with the original instance $\mathcal{I}=(N,M,c)$. 
    We first run Algorithm~\ref{alg: ChoreCompression} to obtain the meta-chore set $M'$. By Lemma~\ref{lemma: meta_count}, $|M'| \leq n(4k - 1)$ and by Lemma~\ref{lemma: mms_inflation}, we have $\MMS_i(M') \leq (1+\beta) \cdot \MMS_i(M)$ for any $i \in N$. Then we apply the IDO reduction of Lemma~\ref{lemma: ido_reduction} to the compressed instance. 
    Since IDO reduction preserves approximation guarantees when mapping allocations back, it suffices to compute a $13/11$-MMS allocation for the resulting IDO meta-instance. 
    Finally, expand each allocated meta-chore into its corresponding contiguous block of original chores. Let $X_i$ be the expanded bundle assigned to agent $i$. By additivity, expansion preserves the cost of each meta-chore bundle, so we can guarantee 
    $$ c_i(X_i) = c_i(Y_i) \leq \frac{13}{11} \cdot \MMS_i(M') \leq \frac{13}{11} (1+\beta) \cdot \MMS_i(M) \leq (\frac{13}{11} +\varepsilon)\cdot \MMS_i(M).  
    $$ 
    Thus, we can conclude that $\bX$ is a $(13/11 + \varepsilon)$-MMS allocation for the original instance. 
    It remains to analyze the query complexity. The compression step uses $O(\log m)$ comparison queries, while all subsequent enumeration, sorting of threshold classes, MMS-bundle computation, and calls to $\textsc{CompareHFFD}$ depend only on $|M'|$. Hence, the total comparison complexity is $O(\log m) + f(n, k)$, where $f(n, k)$ is independent of $m$.
    Since $n$ and $\varepsilon$ are fixed, this is $O_{n,\varepsilon}(\log m)$, which is logarithmic in $m$.
\end{proof}
\noindent\textit{Remark.} The dependence on $n$ and $\varepsilon$ in Theorem~\ref{theorem: 13_11_mms} can be made explicit. Choose $k=\max\left\{n,\left\lceil 13n/(11\varepsilon)\right\rceil\right\}$, and let $s=|M'|\leq n(4k-1)$. The compression step uses $O(nk^3\log m)$ comparison queries. For each agent, sorting all $2^s$ subsets of the IDO meta-instance into ordered cost classes uses $O(s2^s)$ comparison queries. Using these stored rankings, the MMS-bundle computation and all calls to \textsc{CompareHFFD} can be implemented without further comparison queries. Including the IDO reduction, the total comparison-query complexity is therefore $O\left(nk^3\log m+ns2^s\right)=O\left(nk^3\log m+2^{O(nk)}\right)$. In particular, for $0<\varepsilon\leq 1$, this yields $O\left(\frac{n^4}{\varepsilon^3}\log m+2^{O(n^2/\varepsilon)}\right)$.

\subsection{The Distortion of Comparison Model} \label{subsection: distortition}
In this subsection, we delve into the informational limits of the comparison model by evaluating its distortion under the MMS benchmark. 
Having established that the state-of-the-art approximation ratio can be guaranteed via bounded comparison queries, we now dismiss both query and time complexity constraints to focus exclusively on the intrinsic performance boundaries dictated by the model.

This shift in focus motivates a fundamental inquiry: what are the exact tight upper and lower bounds on the performance gap when transitioning from cardinal utilities to ordinal comparison rankings? 
To precisely quantify and mathematically structure this performance degradation, we formalize the concept of distortion in what follows.

Informally, distortion measures the worst-case performance loss caused by using ordinal rankings instead of underlying numerical costs.
Since multiple cardinal cost profiles can produce the exact same rankings, a ranking-only algorithm is forced to output a single allocation for all of them.
Distortion captures the maximum factor by which this blind allocation underperforms the full-information optimal allocation.

\begin{definition}[Distortion]\label{def:distortion}
Let $C(R)$ be the set of all additive cardinal cost profiles that induce the same full bundle ranking profile $R$. 
For any allocation $\bX$ and a cardinal instance $\mathcal{I} \in C(R)$, the approximation ratio of $\bX$ with respect to the maximin share is defined as
$$
\rho(\bX, \mathcal{I}) =
\max\left\{1,\max_{i \in N} \frac{c_i(X_i)}{\MMS_i^\mathcal{I}}\right\},
$$
where $\MMS_i^\mathcal{I}$ is agent $i$'s MMS value in $\mathcal{I}$.
The optimal approximation ratio achievable under complete information is
$\alpha^*(\mathcal{I}) = \min_{\bX'} \rho(\bX', \mathcal{I})$.
The distortion of a deterministic ranking-only algorithm $\mathcal{M}$, which maps each ranking profile $R$ to an allocation $\mathcal{M}(R)$, is defined as
$$
\text{Distortion}(\mathcal{M}) = \sup_{R} \sup_{\mathcal{I} \in C(R)} \frac{\rho(\mathcal{M}(R), \mathcal{I})}{\alpha^*(\mathcal{I})}
$$
\end{definition}

Observe that our previous approximation result guarantees a bounded distortion across all instances, establishing an upper bound of $13/11$.
Indeed, if computation is unrestricted, the comparison-based $13/11$-MMS algorithm can be run on the original instance rather than on the compressed meta-instance.
\begin{corollary}
    The upper bound of distortion is at most $13/11$.
\end{corollary}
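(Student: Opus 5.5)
To prove the corollary, I would exhibit a single deterministic ranking-only algorithm $\mathcal{M}$ and show that $\rho(\mathcal{M}(R),\mathcal{I})\le 13/11$ for every ranking profile $R$ and every $\mathcal{I}\in C(R)$. Since $\alpha^*(\mathcal{I})\ge 1$ always (as $\rho\ge 1$ by definition), this gives $\rho(\mathcal{M}(R),\mathcal{I})/\alpha^*(\mathcal{I})\le 13/11$. Taking suprema then bounds $\text{Distortion}(\mathcal{M})$, and hence the distortion of the comparison model, by $13/11$.

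The algorithm $\mathcal{M}$ runs the comparison-based procedure of Section~\ref{subsection: 13/11MMS} directly on the original chore set $M$, with no compression step, i.e., with $M'=M$ and every chore a singleton meta-chore. The steps are as follows.

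\begin{itemize}
\item \emph{IDO reduction.} Apply Lemma~\ref{lemma: ido_reduction} to $M$. This only needs each agent's ranking of single items, which $R$ provides.
\item \emph{Initial thresholds.} Compute each agent's exact MMS bundle $H_i$ with Algorithm~\ref{alg: ExactMMS} applied to $M$. This uses only comparisons, so the output depends only on $R$.
\item \emph{Threshold updates.} Iterate \textsc{CompareHFFD} and the threshold updates of Algorithm~\ref{alg: ComparisonHFFD}. Every step is a comparison between subsets of $M$, all of which $R$ determines.
\end{itemize}

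Thus the output is a function of $R$ alone, and $\mathcal{M}$ is well defined as a ranking-only algorithm.

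For correctness, fix any $\mathcal{I}\in C(R)$. All comparison answers in the run agree with $\mathcal{I}$'s true costs, so the execution coincides with the run analysed in Invariant~\ref{lemma: HFFD_invariant} and Lemma~\ref{lemma: HFFD_terminate}, with $M'=M$ and $\MMS_i(M')=\MMS_i^{\mathcal{I}}$. Those results give termination, because each agent has finitely many cost classes. They also give $c_i(X_i)\le c_i(H_i)\le \tfrac{13}{11}\MMS_i^{\mathcal{I}}$ for all $i$, after mapping back through the IDO reduction, which preserves the approximation factor. Hence $\rho(\mathcal{M}(R),\mathcal{I})\le 13/11$.

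The only real subtlety is that the earlier analysis never used $|M'|$ being small except for query counting. I would check that the invariant argument, in particular the claim that no subset has cost strictly between $h_\omega$ and $13/11\cdot\MMS_\omega$, and the appeal to Lemma~\ref{lemma: FFDfail} hold verbatim for arbitrary finite $M'$. They do, since they only use finiteness of the set of subset costs. The rest of the proof is the observation $\alpha^*\ge 1$.
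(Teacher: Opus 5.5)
Your proposal is correct and is essentially the paper's argument: the paper justifies the corollary by noting that, with unrestricted computation, the comparison-based $13/11$-MMS procedure (IDO reduction, exact MMS bundles, and the \textsc{CompareHFFD} threshold updates) can be run on the original instance instead of the compressed meta-instance, which is exactly your $M'=M$ instantiation. Your extra points fill in details the paper leaves implicit: that $\alpha^*(\mathcal{I})\ge 1$ turns the $13/11$-MMS guarantee into a distortion bound, and that the invariant and termination arguments use only finiteness of the set of subset costs, with IDO-instance comparisons simulated through each agent's item ordering, which $R$ determines.
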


To complement this upper bound, we next provide a lower bound of $44/43$.
Before proving the lower bound, we note that a valid indistinguishability construction cannot merely hide whether the optimal ratio is $1$ or $44/43$.
With full bundle rankings, exact MMS feasibility itself is ordinally determined.
\begin{observation}\label{obs:exact-mms-ordinal}
Fix a full bundle ranking profile $R$. 
If there exists an MMS allocation for at least one cardinal instance in $C(R)$, then this allocation remains valid for every cardinal instance in $C(R)$.
\end{observation}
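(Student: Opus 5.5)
The plan is to show that whether a given allocation is MMS for every agent is determined by the ranking profile $R$ alone. Once this is established, validity in one instance of $C(R)$ transfers to all of them. Fix $R$ and two cardinal instances $\mathcal{I},\mathcal{J}\in C(R)$, and suppose $\bX=(X_1,\dots,X_n)$ is an MMS allocation in $\mathcal{I}$. Throughout, I use that membership in $C(R)$ means each agent's full weak order over all bundles in $2^M$ is the same under her $\mathcal{I}$-cost and her $\mathcal{J}$-cost. In other words, for every agent $i$ and all bundles $S,T\subseteq M$, we have $c_i^{\mathcal{I}}(S)\le c_i^{\mathcal{I}}(T)$ if and only if $c_i^{\mathcal{J}}(S)\le c_i^{\mathcal{J}}(T)$.

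The key step is an ordinal reformulation of the MMS condition. For agent $i$, the condition $c_i(X_i)\le \MMS_i$ holds exactly when there exists an $n$-partition $\bP=(P_1,\dots,P_n)$ of $M$ with $c_i(X_i)\le \max_j c_i(P_j)$ for that partition and $\max_j c_i(P_j)\le \max_j c_i(Q_j)$ for every other partition $\bQ$. Equivalently, $c_i(X_i)\le \MMS_i$ if and only if for every $n$-partition $\bQ$ of $M$ there is some part $Q_j$ with $c_i(X_i)\le c_i(Q_j)$. This second form is cleaner, and it is correct because $\MMS_i=\min_{\bQ}\max_j c_i(Q_j)$: the inequality $c_i(X_i)\le\MMS_i$ is the same as $c_i(X_i)\le \max_j c_i(Q_j)$ holding for all $\bQ$.

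Each atomic condition $c_i(X_i)\le c_i(Q_j)$ is a pairwise comparison between two bundles of $M$, so its truth value is read off from agent $i$'s ranking in $R$. The whole condition is a finite Boolean combination of such comparisons: a universal quantifier over partitions and an existential quantifier over parts. Hence it has the same truth value in $\mathcal{I}$ and in $\mathcal{J}$.

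Applying this to every agent $i$ shows that $\bX$ is an MMS allocation in $\mathcal{J}$ as well, which proves the observation.

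The only subtle point is making sure the comparison used is the bundle-versus-bundle comparison available in $R$. No numerical threshold such as $\MMS_i$ itself may appear; the reformulation removes it. I expect no real obstacle beyond stating this quantifier reformulation carefully, including handling ties, since $R$ records weak preferences, which is exactly what the non-strict inequalities need.
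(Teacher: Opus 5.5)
Your proposal is correct and takes essentially the same approach as the paper. Both arguments show that the set of bundles $S$ with $c_i(S)\le \MMS_i$ is determined by agent $i$'s ranking alone. The paper does this by describing the MMS threshold as an ordinal min--max over partitions under $\preceq_i$; you use the equivalent quantifier form ``for every $n$-partition $\mathbf{Q}$ some part $Q_j$ satisfies $c_i(X_i)\le c_i(Q_j)$,'' which is a slightly more explicit way of writing the same argument.
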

\begin{proof}
For an agent $i$, write $S\preceq_i T$ if $S$ is weakly cheaper than $T$ according to $R_i$.
For each $n$-partition $\bP=(P_1,\ldots,P_n)$, its worst bundle for agent $i$ is the maximum element of $\{P_1,\ldots,P_n\}$ under $\preceq_i$.
The MMS threshold is the minimum, under $\preceq_i$, among these worst bundles over all partitions.
Thus, the family of bundles whose cost is at most agent $i$'s MMS value is determined by $R_i$ alone.
Consequently, the set of allocations $\bX$ satisfying $c_i(X_i)\leq \MMS_i$ for every agent $i$ is determined by $R$ alone.
\end{proof}

By Observation~\ref{obs:exact-mms-ordinal}, our construction therefore uses instances that induce the same bundle rankings but admit no  MMS allocation.
\begin{lemma}\label{lemma:distortion-lower-bound}
    For deterministic ranking-only algorithms, the distortion is at least $44/43$.
\end{lemma}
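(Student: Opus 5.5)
We will exhibit a single full bundle ranking profile $R$ together with two cardinal instances $\mathcal{I}^{(1)},\mathcal{I}^{(2)}\in C(R)$ such that every allocation $\bX$ satisfies $\max\{\rho(\bX,\mathcal{I}^{(1)})/\alpha^*(\mathcal{I}^{(1)}),\ \rho(\bX,\mathcal{I}^{(2)})/\alpha^*(\mathcal{I}^{(2)})\}\geq 44/43$. Any deterministic ranking-only algorithm $\mathcal{M}$ must output the same allocation $\mathcal{M}(R)$ on both instances, so its distortion is at least $44/43$. By Observation~\ref{obs:exact-mms-ordinal}, neither instance may admit an exact MMS allocation; otherwise the same allocation would be exact for both and the bound would collapse. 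Our starting point is therefore the three-agent construction of Feige et al.~\cite{conf/wine/FeigeST21}, which admits no MMS allocation and in which every allocation gives some agent cost at least $44/43$ times her MMS.

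The construction proceeds in three steps. First, we take the Feige et al.\ instance $\mathcal{I}^{(1)}$, which has three agents, nine chores, and integer costs, with $\alpha^*(\mathcal{I}^{(1)})=44/43$ (or close to it). Second, we build $\mathcal{I}^{(2)}$ by a small perturbation of the costs that keeps every pairwise bundle comparison of every agent unchanged (strict comparisons stay strict, ties stay tied), so that $\mathcal{I}^{(2)}\in C(R)$. At the same time, the perturbation should shift which near-optimal allocations are best. Concretely, we look for allocations $\bX^{(1)}$ and $\bX^{(2)}$ with the following property: in $\mathcal{I}^{(1)}$, allocation $\bX^{(1)}$ attains $\alpha^*$, while any allocation that is optimal for $\mathcal{I}^{(2)}$ pushes some agent to a ratio of roughly $(44/43)^2$ in $\mathcal{I}^{(1)}$, and symmetrically. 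Third, we verify the inequality for every allocation by casework over the $3^9$ allocations, grouped by which agent's bundle exceeds her MMS threshold. Since the ranking fixes the set of bundles costing at most the MMS value, this reduces to checking, for each bundle exceeding the MMS, its ratio in both instances.

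The main obstacle is the second step. The ratio $c_i(X_i)/\MMS_i$ is not ordinal, but the preservation of all comparisons between bundles forces the cardinal ratios to be nearly rigid. We first try to exploit agents whose cost vectors lie in a region of the ranking cone where some bundles exceeding the MMS can be scaled toward ratio $1$ in one instance and toward a larger ratio in the other, for example by moving costs along the boundary of the cone defined by the strict inequalities. If the two instances' optima cannot be decoupled enough to reach $44/43$ through a ratio of ratios, we fall back on a simpler argument: take $\mathcal{I}^{(2)}$ to be a perturbation under which $\alpha^*(\mathcal{I}^{(2)})$ is essentially $1$, while still admitting no MMS allocation as the observation requires. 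Every allocation then suffers a ratio of at least $44/43$ in $\mathcal{I}^{(1)}$, since ranking-equivalent instances share the same allocations that exceed the MMS. Under this fallback, the ranking must be chosen so that the allocation achieving near-$1$ in $\mathcal{I}^{(2)}$ has ratio at least $44/43$ in $\mathcal{I}^{(1)}$, and we would verify this through a finite list of linear inequalities.
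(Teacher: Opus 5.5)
\textbf{There is a genuine gap.} Both of your routes break on the normalization by $\alpha^*$ in Definition~\ref{def:distortion}.

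Suppose $\mathcal{I}^{(1)}$ is the Feige--Sapir--Tauber instance with $\alpha^*(\mathcal{I}^{(1)})=44/43$. Then an allocation with ratio $44/43$ on $\mathcal{I}^{(1)}$ has normalized ratio only $1$ there. So your fallback proves nothing. Making $\alpha^*(\mathcal{I}^{(2)})\approx 1$ and noting that every allocation has ratio at least $44/43$ on $\mathcal{I}^{(1)}$ does not help: an algorithm that outputs the near-optimal allocation for $\mathcal{I}^{(2)}$ loses essentially nothing on either instance. It would lose something only if that allocation had ratio about $(44/43)^2$ on $\mathcal{I}^{(1)}$, which is exactly your unresolved primary plan.

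That primary plan also contradicts your Step 2. Both $\rho(\bX,\cdot)$ and $\alpha^*(\cdot)$ are continuous in the cost profile, so a \emph{small} perturbation of $\mathcal{I}^{(1)}$ changes every normalized ratio by $o(1)$. An optimal allocation for $\mathcal{I}^{(1)}$ then stays near-optimal for $\mathcal{I}^{(2)}$, and the pair certifies distortion only close to $1$. You would need two far-apart instances in one cell with a $(44/43)^2$ separation. You give neither a construction nor any evidence that one exists, and no concrete instance or checkable inequality is ever produced.

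\textbf{The missing idea} is to put \emph{both} instances at $\alpha^*=1$ in the limit. Then the entire loss comes from the ranking's inability to tell which allocation is exact. The paper does this as follows.
\begin{itemize}
\item It takes two boundary profiles $J_1,J_2$, derived from but not equal to the Feige et al.\ instance. Each admits an exact MMS allocation: $\bB$ for $J_1$ and $\bA$ for $J_2$.
\item It checks by enumeration that $\min_{\bX}\max\{\rho(\bX,J_1),\rho(\bX,J_2)\}=44/43$.
\item It checks over all pairs of bundles that $J_1$ and $J_2$ never order a pair in opposite directions. Hence both lie in the closure of a common strict ranking cell.
\end{itemize}
Observation~\ref{obs:exact-mms-ordinal} forbids exact MMS allocations only for instances \emph{inside} the same cell, not on its boundary. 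That is why you were wrong to discard this option.

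The paper then perturbs both profiles toward a common generic tie-breaker $P$, setting $J_\ell^\delta=(1-\delta)J_\ell+\delta P$. This places both in the same strict cell $C(R)$. Continuity gives $\alpha^*(J_\ell^\delta)\to 1$ and $\rho(\bX,J_\ell^\delta)\to\rho(\bX,J_\ell)$. So whatever fixed allocation the algorithm outputs on $R$, its normalized ratio tends to at least $44/43$ on one of the two instances.
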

\begin{proof}
We give a finite indistinguishability certificate for three agents and nine chores.
The construction is a perturbation of the Feige--Sapir--Tauber MMS lower-bound instance~\cite{conf/wine/FeigeST21}.
Let the columns denote chores $1,\ldots,9$, and let the rows denote agents.

Consider the two boundary cost profiles
\[
J_1=
\begin{pmatrix}
11&27&40&48&18&13&22&34&22\\
6&15&23&26&10&8&11&18&12\\
6&16&22&27&10&7&11&18&12
\end{pmatrix},
\qquad
\MMS(J_1)=(79,43,43),
\]

\[
J_2=
\begin{pmatrix}
6&15&22&26&10&7&12&19&12\\
6&15&23&27&10&8&11&18&12\\
6&16&22&28&10&7&11&18&12
\end{pmatrix},
\qquad
\MMS(J_2)=(43,44,44).
\]
Write a bundle by listing its chore indices, and define two allocations
\[
\bA=(\{3,6,9\},\{1,4,7\},\{2,5,8\}),
\qquad
\bB=(\{2,5,8\},\{1,4,7\},\{3,6,9\}).
\]
Under $J_1$, allocation $\bB$ is exact MMS, since the received costs are
$(79,43,41)\leq (79,43,43)$, while $\bA$ has ratio $44/43$ because agent $3$ receives a bundle of cost $44$.
Under $J_2$, the roles reverse: $\bA$ is exact MMS, since the received costs are
$(41,44,44)\leq (43,44,44)$, while $\bB$ has ratio $44/43$ because agent $1$ receives a bundle of cost $44$.

More importantly, the obstruction is not limited to the pair $\bA,\bB$.
A direct enumeration over all $3^9$ ordered allocations gives
$$
    \min_{\bX}\max\{\rho(\bX,J_1),\rho(\bX,J_2)\}=\frac{44}{43}.
$$
Thus, every allocation is $44/43$-MMS for at least one of the two boundary profiles.

To complete the proof, it remains to show that these two cost profiles can be made genuinely indistinguishable under a strict ranking profile.
For every agent $i$ and any pair of bundles $S, T \subseteq M$, a direct computer enumeration over all $2^9$ bundles verifies that $J_1$ and $J_2$ never order the pair in opposite directions:
$$
\bigl(c_i^{J_1}(S)-c_i^{J_1}(T)\bigr) \bigl(c_i^{J_2}(S)-c_i^{J_2}(T)\bigr) \geq 0.
$$
This non-contradictory ordering implies that $J_1$ and $J_2$ lie in the closure of the same ordinal bundle-ranking cell.
To systematically break any potential ties and induce a strict ranking, let $P$ be a sufficiently small, generic additive tie-breaking profile derived from $J_1 + J_2$, and let $R$ be the strict full bundle-ranking profile it induces.
By construction, every strict comparison in $R$ is weakly satisfied by both $J_1$ and $J_2$.
For a perturbation parameter $\delta > 0$, we define the perturbed cost profiles as:
$$
J_1^\delta = (1-\delta)J_1 + \delta P, \qquad J_2^\delta = (1-\delta)J_2 + \delta P.
$$
For all sufficiently small $\delta > 0$, the strict tie-breaker $P$ ensures that both $J_1^\delta$ and $J_2^\delta$ induce the exact same strict ranking profile $R$, rendering them completely indistinguishable to any ranking-only algorithm.
Since the sets of possible allocations and partitions are finite, both the allocation performance ratio $\rho(\bX, \mathcal{I})$ and the optimal fair share value $\alpha^*(\mathcal{I})$ vary continuously with respect to the underlying cost profile $\mathcal{I}$.
Noting that $\alpha^*(J_1) = \alpha^*(J_2) = 1$, the continuity of these components guarantees that:
$$
\lim_{\delta \to 0} \min_{\bX} \max_{\ell \in \{1,2\}} \frac{\rho(\bX, J_\ell^\delta)}{\alpha^*(J_\ell^\delta)} = \min_{\bX} \max \{ \rho(\bX, J_1), \rho(\bX, J_2) \} = \frac{44}{43}.
$$
Now, fix any deterministic ranking-only algorithm.
When presented with the strict ordinal profile $R$, the algorithm is forced to output a single, fixed allocation $\bX$.
However, the limit above implies that for any sufficiently small $\delta > 0$, one of the two indistinguishable instances $J_1^\delta$ or $J_2^\delta$ will cause this allocation $\bX$ to incur a distortion of at least $44/43 - o(1)$.
Taking the limit as $\delta \to 0$ completes the proof and establishes the $44/43$ lower bound.
\end{proof}

\section{Query Complexity of EF1}\label{sec:ef1}
In this section, we demonstrate that an EF1 allocation for three agents can be achieved using $O(\log m)$ queries.
We begin by analyzing the subproblem for identical instances, which serves as the foundation of our overall approach. 
Specifically, we introduce the witness EF1 guarantee needed by our further construction.

\begin{definition}[Identical Witness EF1]
    For an additive cost function $c$, a three-partition $(X_1,X_2,X_3)$ is an \emph{identical witness EF1} partition if every nonempty bundle $X_i$ has a witness item $e_i\in X_i$, s.t.
    $$
        c(X_i \setminus \{e_i\})\leq c(X_j)
        \qquad\text{for all } i,j\in [3].
    $$
    Furthermore, we order each nonempty bundle internally by placing its witness item first.
\end{definition}

Note that Yamagata and Sumita~\cite{journals/corr/YamagataS26} study the related problem of computing witness EF1 allocations for goods using comparison queries.
Here, we only require witnesses for an identical chore partition to support the subsequent reallocation procedure.

A logarithmic-query subroutine for identical goods was recently established by Bu et al.~\cite{conf/wine/BuLLST24}. 
Through a more careful analysis, we adapt their MinBundleKept and moving-knife framework to obtain the witness-based guarantee needed for chores. 
We formalize this existential and computational guarantee in the following lemma and defer its full proof to the Appendix.

\begin{lemma} \label{lemma: first_EF1}
    An identical witness EF1 partition can be computed using $O(\log m)$ queries.
\end{lemma}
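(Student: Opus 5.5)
We would compute the partition with a moving-knife style search on a contiguous ordering, in the spirit of the identical-goods procedure of Bu et al.~\cite{conf/wine/BuLLST24}. Fix the given order $M=(e_1,\dots,e_m)$ and look only at contiguous 3-partitions $(\Pre(M,a),\,(e_{a+1},\dots,e_b),\,\Suf(M,b))$. For a contiguous bundle we always use an endpoint as witness: the left endpoint of a block is its first item, and the leftmost block uses its \emph{right} endpoint. So the witness-removed costs are prefix or suffix costs of a block, and the comparisons we need are exactly the queries $\Comp$ on intervals. After the partition is fixed we reorder each block so that its witness comes first; this costs nothing.

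\textbf{Step 1: fix the first cut by binary search.} Let $a^*$ be the largest $a$ such that $\Pre(M,a-1)$ is weakly cheaper than the best balanced split of the remainder. We would implement this test with the two-part version of the greedy check of Lemma~\ref{lemma: contiguous_check}, i.e.\ $\textsc{Cutting}$ with $k=2$. Concretely, $X_1=\Pre(M,a)$, where $a$ is the first index at which $X_1$ becomes strictly costlier than $\mu^2(\Suf(M,a))$, the analogue of $\mu^k$ for one cost function. Then $c(X_1\setminus\{e_a\})\le\mu^2(\Suf(M,a-1))$. The key monotonicity is that $\mu^2(\Suf(M,a))$ is non-increasing in $a$ while prefix cost is non-decreasing. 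This gives a single crossing, so $a$ is found by binary search.

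\textbf{Step 2: split the remainder.} Split $\Suf(M,a)$ into $X_2,X_3$ by a second binary search. Take the last cut $b$ such that $c(e_{a+1},\dots,e_{b-1})\le c(\Suf(M,b-1))$, which exchanges the roles of the two blocks at the crossing item. This makes $X_2$ and $X_3$ mutually EF1 with witnesses at the crossing item.

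\textbf{Step 3: cross inequalities.} We must show $c(X_1\setminus\{e_a\})\le c(X_j)$ and $c(X_j\setminus\{w_j\})\le c(X_1)$ for $j=2,3$. The first holds because $X_1$ minus its witness costs at most the balanced 2-split value of a superset of $X_2\cup X_3$, and that value is at most $\min(c(X_2),c(X_3))$ up to the crossing item. This last point needs care. The second should follow from maximality of $a$: $c(X_1)>\mu^2(\Suf(M,a))\ge \min$ over 2-splits of the larger part, and each witness-removed block is at most that max--min value.

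\textbf{Main obstacle.} Exactly this last comparison is where a contiguous argument may fail, because the balanced 2-split of $\Suf(M,a)$ need not coincide with the split chosen in Step 2. If it breaks, we would fall back on the MinBundleKept idea. Maintain a candidate partition, identify its minimum bundle with $O(1)$ queries, keep it, and move a knife over the other two. In each of $O(\log m)$ rounds, halve the uncertainty interval by binary search on the knife position. Use a potential argument showing that when the kept minimum bundle changes identity, EF1 with endpoint witnesses is preserved.

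\textbf{Query count.} Each binary search makes $O(\log m)$ probes, and each probe uses $O(1)$ comparisons. The $k=2$ check itself contains an inner binary search, which would cost $O(\log^2 m)$. So we would avoid nesting: search jointly over the two-cut staircase, monotone in both cuts, to keep the total at $O(\log m)$.
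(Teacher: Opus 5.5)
\textbf{There is a genuine gap.} Your proposal does not yet prove the lemma. The contiguous construction of Steps 1--3 fails, and both the fallback and the $O(\log m)$ bound are only asserted.

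\emph{Your stated rule fails.} Take the first $a$ with $c(\Pre(M,a))>\mu^2(\Suf(M,a))$, and use costs $(2,10,1,1)$. Since $2\le\mu^2(10,1,1)=2$ but $12>\mu^2(1,1)=1$, you get $a=2$ and $X_1=(2,10)$. Whichever way the remainder $(1,1)$ is split, some bundle costs at most $1$, while $X_1$ minus either chore costs at least $2$. Your bound $c(X_1\setminus\{e_a\})\le\mu^2(\Suf(M,a-1))$ refers to a suffix that still contains $e_a$, so it says nothing about $\Suf(M,a)$ when $e_a$ is heavy.

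\emph{Shifting the index does not help.} Take instead the largest $a$ with $c(\Pre(M,a-1))\le\mu^2(\Suf(M,a))$. This fixes the direction above but breaks the reverse one, because maximality now only gives $c(X_1)>\mu^2(\Suf(M,a+1))$. For $(5,20,2,2,2,2)$ you get $X_1=(5)$ and the remainder splits as $(20),(2,2,2,2)$. Then $(2,2,2,2)$ minus any chore costs $6>5$, although $(5),(20,2),(2,2,2)$ is a valid witness EF1 partition.

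In both versions, the off-by-one suffix is exactly where a heavy chore breaks one direction of the cross inequalities. The underlying reason is that $X_1$ is fixed before, and independently of, the split of the remainder. Step 2 as written also needs one extra comparison to decide on which side of the crossing chore to cut. Finally, the nested \textsc{Cutting} test costs $O(\log^2 m)$, and the ``joint search over the two-cut staircase'' meant to avoid this is never specified.

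\textbf{What is missing.} You need a single benchmark bundle that is guaranteed to be the cheapest at the end, with every other bundle cut at the first moment it reaches that benchmark. Then all cross inequalities are automatic. The paper obtains this from \textsc{MinBundleKept} rather than from a contiguous order:
\begin{itemize}
\item After $O(\log m)$ rounds of $O(1)$ comparisons each, at most two chores $R$ remain.
\item If the costliest kept bundle $P_3$ has positive cost, roll back to the tentative partition $(B_1,B_2,P_3)$ of the last round in which $P_3$ was kept. This gives $c(P_3)\le c(B_1)$ and $c(P_3)\le c(B_2)$.
\item Run two moving knives over $B_1$ followed by $B_2$, with $r_2$ placed last. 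They yield $C_1,C_2$ with $c(C_j)\ge c(P_3)>c(C_j\setminus\{e_j\})$, so $(C_1,C_2,P_3)$ is witness EF1 with $P_3$ as the minimum.
\item One further binary search, the cleanup of $r_1$, ensures the leftover either contains all of $R$ or has nothing outside $R$.
\item The accounting $c(R)\ge c(M)-3c(P_3)$ against $c(M)\ge 3c(P_3)+c(R)+c(S_0)$ forces the non-$R$ leftover $S_0$ to have zero cost, so it is absorbed into $P_3$.
\item The at most two chores of $R$ are inserted one at a time into a current minimum bundle, each becoming its bundle's new witness.
\end{itemize}
Every step is either one binary search or $O(1)$ comparisons, so no nesting arises. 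Your fallback names \textsc{MinBundleKept} but contains none of the rollback, the benchmark cuts, or the zero-cost accounting. The ``potential argument'' it relies on is also not given.
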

In the following, we first introduce an overview of our algorithm.

\paragraph{Algorithm Overview.}
Our algorithm begins by computing an identical witness EF1 partition $(A, B, C)$ for agent $1$ via Lemma~\ref{lemma: first_EF1}. 
If agents $2$ and $3$ have different minimum-cost bundles, they can simply choose their preferred bundles, leaving the remaining one for agent $1$ (which is guaranteed to be EF1 for her). 
Otherwise, agents $2$ and $3$ must share the same preference for the minimum bundle (without loss of generality, we assume this bundle is $C$). 
We then consider transferring items from $A$ and $B$ to $C$. 
Intuitively, our goal is to identify a transition point where either agent $2$ or agent $3$ no longer strictly prefers $C$.
To efficiently transfer items while preserving the fairness guarantee for agent $1$ (ensuring that bundles $A$ and $B$ always remain EF1-feasible for her\footnote{A bundle is called EF1-feasible if, under a given partition, receiving this bundle ensures the EF1 property with respect to the rest of the bundles.}), we construct a monotone staircase to characterize the process, where each state represents a specific transfer configuration.
Specifically, we define the notion of a safe state and establish its fairness guarantee (Lemma~\ref{lemma:ef1-safe-state}). 
Thus, our objective reduces to finding a specific transition point among the safe states along this staircase. 
We further prove that a path consisting of safe states always exists on the monotone staircase (Lemma~\ref{lemma:ef1-safe-staircase}) and that the desired transition point can be located via a carefully designed binary search using $O(\log m)$ comparisons (Lemma~\ref{lemma:ef1-transition-search} and Lemma~\ref{lemma:ef1-query}). 
Finally, by carefully analyzing this transition point, we can achieve an EF1 allocation with only $O(1)$ additional comparison queries (Lemma~\ref{lemma:ef1-transition-allocation}).
\medskip

For the remainder of this section, we focus exclusively on the hard case where agents $2$ and $3$ share the same preference for the minimum bundle, and we adopt the following notation.
For each nonempty $X\in\{A,B,C\}$, let $e$ denote the witness chore of $X$ in the identical witness EF1 partition for agent $1$.  
Then we have
\begin{equation*}\label{eq:agent1-initial-ef1}
    c_1(X\setminus\{e\})\leq c_1(Y)
    \qquad
    \text{for all nonempty }X\in\{A,B,C\}\text{ and all }Y\in\{A,B,C\}.
\end{equation*}
After relabeling the common unique minimum-cost bundle of agents $2$ and $3$ as $C$, we have
\begin{equation*}\label{eq:common-minimum-C}
    c_i(C)<c_i(A)
    \quad\text{and}\quad
    c_i(C)<c_i(B)
    \qquad\text{for }i\in\{2,3\}.
\end{equation*}

\subsection{Monotone Staircase Construction}
In this subsection, we demonstrate how to construct a staircase where each state represents a specific transfer of items from $A$ and $B$ to $C$. 
Furthermore, we characterize the safe states along this staircase, which enables us to establish the existence of a monotone path of safe states throughout the entire transfer process.

To formalize the staircase construction, order the chores in bundles $A$ and $B$ as
$$
    A=(a_1,\ldots,a_p),\qquad B=(b_1,\ldots,b_q),
$$
where $a_1=e_A$ and $b_1=e_B$ are the witness items.
For any $0\leq t\leq p$ and $0\leq s\leq q$, the state $(t,s)$ formally represents the partition
$(G_{t,s},A_t,B_s)$,
where
$$
    A_t=\Suf(A,t),\qquad B_s=\Suf(B,s),\qquad
    G_{t,s}=C\cup\Pre(A,t)\cup\Pre(B,s).
$$
We also set $A_{p+1}=\emptyset$ and $B_{q+1}=\emptyset$.
Intuitively, this state corresponds to the configuration where the first $t$ items from $A$ and the first $s$ items from $B$ have been systematically transferred into bundle $C$.
Next, we provide a formal definition of a safe state within the state space.
\begin{definition} [Safe State]
    We call a state $(t, s)$ safe if 
    $$
    c_1(A_{t+1}) \leq c_1(B_s) \text{  and  }
    c_1(B_{s+1}) \leq c_1(A_t).
    $$
\end{definition}
The following lemma establishes that a safe state provides a sufficient condition for agent $1$ to achieve the EF1 guarantee.
\begin{lemma}\label{lemma:ef1-safe-state}
    At any safe state $(t, s)$, if agent $1$ receives either $A_t$ or $B_s$, then agent $1$ is EF1 with respect to the partition $(G_{t,s},A_t,B_s)$.
\end{lemma}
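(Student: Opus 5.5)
The plan is to exhibit an explicit witness item and then check the two required inequalities one at a time. By the symmetry between $(A,t)$ and $(B,s)$ in the definition of a safe state, it suffices to handle the case where agent~$1$ receives $A_t$.

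\textbf{Trivial case.} If $A_t=\emptyset$, i.e., $t=p$, then EF1 holds vacuously by definition.

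\textbf{Choice of witness.} Otherwise $t<p$. The first item of $A_t=\Suf(A,t)$ is $a_{t+1}$, and removing it leaves
$$A_t\setminus\{a_{t+1}\}=\Suf(A,t+1)=A_{t+1}.$$
So the witness will be $a_{t+1}$, and I must show $c_1(A_{t+1})\le c_1(B_s)$ and $c_1(A_{t+1})\le c_1(G_{t,s})$.

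\textbf{Comparison with $B_s$.} The first inequality is exactly the first half of the safe-state condition.

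\textbf{Comparison with $G_{t,s}$.} For the second inequality I use the identical witness EF1 property of $(A,B,C)$ for agent~$1$. Since $a_1=e_A$ is the witness of $A$ and $t+1\ge 1$, we have
$$A_{t+1}=\Suf(A,t+1)\subseteq \Suf(A,1)=A\setminus\{e_A\}.$$
Nonnegative additivity then gives
$$c_1(A_{t+1})\le c_1(A\setminus\{e_A\})\le c_1(C),$$
where the last step is the witness inequality with $Y=C$. This inequality holds even when $C$ is empty, because $Y$ ranges over all three bundles. Finally, $C\subseteq G_{t,s}$, so monotonicity yields $c_1(C)\le c_1(G_{t,s})$, which completes the chain.

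\textbf{Symmetric case.} The case where agent~$1$ receives $B_s$ is identical. The witness is $b_{s+1}$, the second half of the safe condition gives $c_1(B_{s+1})\le c_1(A_t)$, and the chain $B_{s+1}\subseteq B\setminus\{e_B\}$ together with $C\subseteq G_{t,s}$ gives $c_1(B_{s+1})\le c_1(C)\le c_1(G_{t,s})$.

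\textbf{Main obstacle.} There is no real difficulty here. The only point to watch is the indexing: one must confirm that removing the first item of $A_t$ yields exactly $A_{t+1}$, and that the convention $A_{p+1}=\emptyset$ (respectively $B_{q+1}=\emptyset$) keeps the boundary cases $t=p$ and $s=q$ consistent.
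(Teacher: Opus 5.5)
Your proposal is correct and follows essentially the same argument as the paper's proof. Both use $a_{t+1}$ as the witness, the first safe-state inequality for $B_s$, and the chain $c_1(A_{t+1})\le c_1(A\setminus\{a_1\})\le c_1(C)\le c_1(G_{t,s})$ for the growing bundle. The $B_s$ case is handled symmetrically.
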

\begin{proof}
    We assume that agent $1$ receives $A_t$ and the proof is symmetric if she receives $B_s$.
    If $A_t=\emptyset$, agent $1$ is trivially EF1.
    Otherwise, removing the first remaining item $a_{t+1}$ leaves $A_{t+1}$.
    By definition of safe state, we have  
    $$
    c_1(A_{t+1})\leq c_1(B_s),
    $$
    so agent $1$ does not envy $B_s$ after the removal.
    To compare with the growing bundle, recall that $a_1$ is the witness for the initial bundle $A$, which implies $c_1(A \setminus \{a_1\}) \le c_1(C)$. Since $A_{t+1}\subseteq A\setminus\{a_1\}$ and $C\subseteq G_{t,s}$, monotonicity gives
    $$
    c_1(A_{t+1}) \le c_1(A \setminus \{a_1\}) \le c_1(C) \le c_1(G_{t,s}).
    $$
    Thus, after removing one item from her own bundle, agent $1$ envies neither of the other two bundles, satisfying EF1.
\end{proof}

To specify the safe staircase used below, we define the boundary function $g$ as follows.  
For any $0 \le t < p$, let $g(t)$ be the smallest $s \in \{0,\ldots,q\}$ such that either $s=q$ or $c_1(A_{t+1})>c_1(B_{s+1})$.  
We also set $g(-1)=0$ and $g(p)=q$.
Equivalently, for $t<p$ and $s<q$, we have $s<g(t)$ if and only if $c_1(A_{t+1})\leq c_1(B_{s+1})$. 

\begin{lemma}\label{lemma:ef1-safe-staircase}
    There is a monotone path of safe states from $(0,0)$ to $(p,q)$. 
    Each step moves exactly one chore from one of the residual bundles into the growing bundle $G$.
\end{lemma}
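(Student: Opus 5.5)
The plan is to trace the boundary function $g$. The path will run column by column: for each $t=0,1,\ldots,p$ it visits the states $(t,s)$ with $g(t-1)\le s\le g(t)$ in increasing order of $s$, and then steps from $(t,g(t))$ to $(t+1,g(t))$. This path starts at $(0,g(-1))=(0,0)$ and ends at $(p,g(p))=(p,q)$. Each step increases exactly one coordinate by one, so it moves exactly one chore ($a_{t+1}$ or $b_{s+1}$) into $G$. For this to be a well-defined monotone path, I need $g$ to be non-decreasing on $\{-1,0,\ldots,p\}$, and I need every visited state to be safe.

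\emph{Step 1 (monotonicity of $g$).} The cases $g(-1)=0$ and $g(p)=q$ are immediate, since $g$ takes values in $\{0,\ldots,q\}$. For $0\le t<t'<p$, the sets satisfy $A_{t'+1}\subseteq A_{t+1}$, so $c_1(A_{t'+1})\le c_1(A_{t+1})$. Using the stated equivalence, $s<g(t')$ means $c_1(A_{t'+1})\le c_1(B_{s+1})$, which implies $c_1(A_{t+1})\le c_1(B_{s+1})$, i.e.\ $s<g(t)$. Hence $g(t)\le g(t')$.

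\emph{Step 2 (first safety inequality, $c_1(A_{t+1})\le c_1(B_s)$ for $s\le g(t)$).}
\begin{itemize}
\item If $t=p$, then $A_{p+1}=\emptyset$ and the inequality is trivial.
\item If $t<p$ and $s\ge 1$, then $s-1<g(t)$, and $s-1<q$. The equivalence gives $c_1(A_{t+1})\le c_1(B_s)$.
\item If $s=0$, then $B_0=B$. The witness property of the identical witness EF1 partition gives $c_1(A_{t+1})\le c_1(A\setminus\{a_1\})\le c_1(B)$, using that $A_{t+1}\subseteq A\setminus\{a_1\}$ for $t\ge 0$.
\end{itemize}

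\emph{Step 3 (second safety inequality, $c_1(B_{s+1})\le c_1(A_t)$ for $s\ge g(t-1)$).}
\begin{itemize}
\item If $s=q$, then $B_{q+1}=\emptyset$ and the inequality is trivial.
\item If $t=0$, then $A_0=A$. The witness property of $b_1$ gives $c_1(B_{s+1})\le c_1(B\setminus\{b_1\})\le c_1(A)$.
\item If $1\le t\le p$ and $s<q$, apply the equivalence at index $t-1<p$. Since $s\ge g(t-1)$, it is not the case that $c_1(A_t)\le c_1(B_{s+1})$, so $c_1(B_{s+1})<c_1(A_t)$.
\end{itemize}

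The main point requiring care is the bookkeeping at the boundary cases: the conventions $g(-1)=0$ and $g(p)=q$, together with $A_{p+1}=B_{q+1}=\emptyset$, and the fact that the equivalence for $g$ holds only for $t<p$ and $s<q$. The non-trivial ingredient beyond the definition of $g$ is the use of the witness inequalities against $B$ and $A$ at $s=0$ and $t=0$. These come from the identical witness EF1 partition of Lemma~\ref{lemma: first_EF1}, which compares each bundle minus its witness with all other bundles, not just with $C$.
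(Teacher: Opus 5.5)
Your Step~1 argues in the wrong direction. From $c_1(A_{t'+1})\le c_1(A_{t+1})$ and $c_1(A_{t'+1})\le c_1(B_{s+1})$ you cannot conclude $c_1(A_{t+1})\le c_1(B_{s+1})$. Even if you could, the implication ``$s<g(t')\Rightarrow s<g(t)$'' would give $g(t')\le g(t)$, which is the opposite of the monotonicity you claim. Reverse the implication instead. If $s<g(t)$, then $s<q$ and $c_1(A_{t+1})\le c_1(B_{s+1})$. Hence $c_1(A_{t'+1})\le c_1(A_{t+1})\le c_1(B_{s+1})$, i.e.\ $s<g(t')$, and so $g(t)\le g(t')$. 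With this one-line repair the rest of your argument is correct. Steps~2 and~3 check out, including the boundary cases $t=p$ and $s=q$ (empty residuals) and $s=0$ and $t=0$ (witness inequalities against $B$ and $A$).

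After that fix, your route differs from the paper's in how safety is proved. The paper follows the same staircase but argues by induction along it:
\begin{itemize}
\item $(0,0)$ is safe by the witness property.
\item A vertical step $(t,s)\to(t,s+1)$ gets the first inequality from $s<g(t)$. It gets the second from $B_{s+2}\subseteq B_{s+1}$ together with the previous state's safety.
\item A horizontal step $(t,g(t))\to(t+1,g(t))$ gets the first inequality from $A_{t+2}\subseteq A_{t+1}$ and the previous state. It gets the second from the exit condition defining $g(t)$.
\end{itemize}
You instead prove safety pointwise for every state with $g(t-1)\le s\le g(t)$. The first inequality comes from the upper bound $s\le g(t)$ and the second from the lower bound $s\ge g(t-1)$, so no induction is needed. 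In exchange, you use the witness inequalities along the whole bottom row and left column, whereas the paper needs them only at the origin. Your version also makes explicit that the path is exactly the region between consecutive values of $g$. That is the description later exploited in Lemma~\ref{lemma:ef1-transition-search} (entry height $g(t^*-1)$). You also handle the boundary conventions explicitly, where the paper's induction leaves them implicit.
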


\begin{proof}
    Consider the staircase induced by the boundary function $g$ defined above.
    Because the residual bundle $A_{t+1}$ only shrinks as $t$ increases, its associated cost under $c_1$ is monotonically decreasing, which guarantees that the resulting sequence $g(0), g(1), \dots, g(p)$ is non-decreasing.
    Within each individual column $t$, the staircase moves vertically by increasing $s$, and it then proceeds horizontally from $(t, g(t))$ to $(t+1, g(t))$.
    Through this sequence of steps, the path starts seamlessly at $(0,0)$ and terminates at $(p,q)$.

    It suffices to show that this construction is always safe.
    The initial state $(0,0)$ is safe by Lemma~\ref{lemma: first_EF1}.
    Now, consider a vertical step moving from $(t,s)$ to $(t,s+1)$.
    Since this vertical transition occurs only when the condition $s < g(t)$ is met, we naturally have $c_1(A_{t+1}) \le c_1(B_{s+1})$, which establishes the first safety inequality for the new state.
    To verify the second safety inequality, we observe that $B_{s+2} \subseteq B_{s+1}$, which directly implies $c_1(B_{s+2}) \le c_1(B_{s+1})$.
    Furthermore, since the preceding state was already assumed to be safe, the inequality $c_1(B_{s+1}) \le c_1(A_t)$ must hold.
    Combining these two relations via transitivity yields $c_1(B_{s+2}) \le c_1(A_t)$, thereby confirming that the newly reached state remains entirely safe.
    Next, let us consider a horizontal step shifting from $(t,g(t))$ to $(t+1,g(t))$, and let $s = g(t)$ for convenience.
    The first safety inequality at this new state follows directly from the containment $A_{t+2} \subseteq A_{t+1}$.
    Because the safety inequality of the previous state guarantees that $c_1(A_{t+1}) \le c_1(B_s)$, it immediately follows that $c_1(A_{t+2}) \le c_1(B_s)$.
    It remains to show that $c_1(B_{s+1}) \leq c_1(A_t)$.
    If $s < q$, the definition of $g(t)$ provides the strict inequality $c_1(B_{s+1}) < c_1(A_{t+1})$.
    Alternatively, if $s = q$, the residual bundle becomes empty such that $B_{s+1}=\emptyset$, making the inequality trivially satisfied.
    Consequently, by mathematical induction on the step transitions, every state along the constructed staircase path is proven to be safe.
\end{proof}

\subsection{The Computation of the Transition Point}
In this subsection, we focus on the computation of the transition point along the monotone path guaranteed by Lemma~\ref{lemma:ef1-safe-staircase}.
Specifically, for any staircase state $P_{t,s}=(G_{t,s},A_t,B_s)$, we define $H(P_{t,s})=H(t,s)$, where $H(t,s)=0$ if both agents $2$ and $3$ strictly prefer the growing bundle:
$$
    c_i(G_{t,s})<c_i(A_t)
    \quad\text{and}\quad
    c_i(G_{t,s})<c_i(B_s)
    \qquad\text{for each }i\in\{2,3\},
$$
and $H(t,s)=1$ otherwise.

\begin{figure}[H]
    \centering
    \begin{subfigure}{0.49\linewidth}
        \centering
        \begin{tikzpicture}[
            x=0.72cm,
            y=0.72cm,
            >=Stealth,
            every node/.style={font=\footnotesize},
            line cap=round,
            line join=round
        ]
            \draw[step=1,gray!20,very thin] (0,0) grid (5,5);
            \draw[->] (-0.15,0) -- (5.35,0) node[right] {$t$};
            \draw[->] (0,-0.15) -- (0,5.35) node[above] {$s$};
            \node[below] at (3,0) {$t^*$};
            \node[left] at (0,3) {$g(t^*-1)$};

            \draw[blue!75!black,very thick]
                (0,0) -- (0,1) -- (1,1) -- (1,3) -- (2,3) -- (3,3) -- (3,4) -- (5,4);
            \foreach \x/\y in {0/0,0/1,1/1,1/3,2/3,3/3,3/4,5/4} {
                \fill[blue!75!black] (\x,\y) circle (2.2pt);
            }

            \draw[red!70!black,dashed,thick]
                (0,5) -- (1,4) -- (2,4) -- (3,3) -- (4,2) -- (5,2);

            \draw[orange!90!black,line width=1.5pt,->] (2,3) -- (3,3);
            \fill[orange!90!black] (2,3) circle (2.5pt);
            \fill[orange!90!black] (3,3) circle (2.5pt);
        \end{tikzpicture}
        \caption{Horizontal transition: $h(t^*)\le g(t^*-1)$.}
        \label{fig:ef1-transition-horizontal}
    \end{subfigure}
    \hfill
    \begin{subfigure}{0.49\linewidth}
        \centering
        \begin{tikzpicture}[
            x=0.72cm,
            y=0.72cm,
            >=Stealth,
            every node/.style={font=\footnotesize},
            line cap=round,
            line join=round
        ]
            \draw[step=1,gray!20,very thin] (0,0) grid (5,5);
            \draw[->] (-0.15,0) -- (5.35,0) node[right] {$t$};
            \draw[->] (0,-0.15) -- (0,5.35) node[above] {$s$};
            \node[below] at (2,0) {$t^*$};
            \node[left] at (0,1) {$g(t^*-1)$};
            \node[left] at (0,3) {$h(t^*)$};

            \draw[blue!75!black,very thick]
                (0,0) -- (0,1) -- (1,1) -- (2,1) -- (2,2) -- (2,3) -- (5,3);
            \foreach \x/\y in {0/0,0/1,1/1,2/1,2/2,2/3,5/3} {
                \fill[blue!75!black] (\x,\y) circle (2.2pt);
            }

            \draw[red!70!black,dashed,thick]
                (0,5) -- (1,4) -- (2,3) -- (3,2) -- (5,2);

            \draw[orange!90!black,line width=1.5pt,->] (2,2) -- (2,3);
            \fill[orange!90!black] (2,2) circle (2.5pt);
            \fill[orange!90!black] (2,3) circle (2.5pt);
        \end{tikzpicture}
        \caption{Vertical transition: $g(t^*-1)<h(t^*)\le g(t^*)$.}
        \label{fig:ef1-transition-vertical}
    \end{subfigure}
    \medskip
    
    \begin{tikzpicture}[
        x=1cm,
        y=1cm,
        >=Stealth,
        every node/.style={font=\footnotesize},
        line cap=round
    ]
        \draw[blue!75!black,very thick] (0,0) -- (0.65,0);
        \node[anchor=west] at (0.75,0) {safe staircase $g$};
        \draw[red!70!black,dashed,thick] (3.25,0) -- (3.9,0);
        \node[anchor=west] at (4,0) {boundary $h$};
        \draw[orange!90!black,line width=1.5pt,->] (5.95,0) -- (6.6,0);
        \node[anchor=west] at (6.7,0) {first transition};
    \end{tikzpicture}
    \caption{Two possible ways in which the safe staircase first reaches a state $(t,s)$ with $H(t,s)=1$.}
    \label{fig:ef1-transition-cases}
\end{figure}

\begin{lemma} \label{lemma:ef1-transition-search}
There exist two consecutive safe states, $P = (G, U, V)$ and $P' = (G \cup \{e\}, U \setminus \{e\}, V)$ with $e \in U$, such that $H(P)=0$ and $H(P')=1$\footnote{Here, the transition is assumed to always proceed from $U$, potentially by swapping the roles of $U$ and $V$.}.
\end{lemma}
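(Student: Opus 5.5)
We walk along the monotone safe staircase of Lemma~\ref{lemma:ef1-safe-staircase} and locate the first state where $H$ switches from $0$ to $1$. This is a discrete intermediate-value argument.

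First we check the endpoints. At $(0,0)$ the partition is $(C,A,B)$. By the hard-case assumption $c_i(C)<c_i(A)$ and $c_i(C)<c_i(B)$ for $i\in\{2,3\}$, so $H(0,0)=0$. At $(p,q)$ the partition is $(M,\emptyset,\emptyset)$. Here $c_i(G_{p,q})=c_i(M)\ge 0=c_i(A_p)$, so the strict inequality $c_i(G)<c_i(A_p)$ fails and $H(p,q)=1$.

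Next, list the states along the staircase path as $Q_0=(0,0),Q_1,\dots,Q_L=(p,q)$, with $L=p+q$. Every $Q_r$ is safe by Lemma~\ref{lemma:ef1-safe-staircase}. Let $r^*$ be the smallest index with $H(Q_{r^*})=1$; it exists since $H(Q_L)=1$, and $r^*\ge1$ since $H(Q_0)=0$. Set $P=Q_{r^*-1}$ and $P'=Q_{r^*}$. Then $P$ and $P'$ are consecutive safe states with $H(P)=0$ and $H(P')=1$.

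Finally, by Lemma~\ref{lemma:ef1-safe-staircase}, each step moves exactly one chore from a residual bundle into $G$. Two cases arise:
\begin{itemize}
\item If the step is horizontal, it is $(t,s)\to(t+1,s)$, and the moved chore is $e=a_{t+1}$, the first item of $A_t$. Take $U=A_t$ and $V=B_s$. Then $P'=(G\cup\{e\},U\setminus\{e\},V)$, since $A_{t+1}=A_t\setminus\{a_{t+1}\}$ and $G_{t+1,s}=G_{t,s}\cup\{a_{t+1}\}$.
\item If the step is vertical, take $U=B_s$, $e=b_{s+1}$ and $V=A_t$; this is the swap of roles mentioned in the footnote.
\end{itemize}

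The only point needing care is that $H$ is defined through the ordered triple $(G,A_t,B_s)$. However, the definition of $H$ is symmetric in the two residual bundles, so relabeling them as $U,V$ does not change $H$.

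I expect no real obstacle in this proof. The substantive content lies in the safety of the path, which the earlier lemma already supplies. The remaining steps are the endpoint checks, especially $H(p,q)=1$ via $c_i(M)\ge c_i(\emptyset)$, and the bookkeeping of the single-chore step.
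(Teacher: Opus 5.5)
Your proof is correct. It rests on the same core observation as the paper's: take the first state along the safe staircase of Lemma~\ref{lemma:ef1-safe-staircase} at which $H$ equals $1$. You obtain it by a plain well-ordering argument on the path index. That needs only the endpoint values $H(0,0)=0$ and $H(p,q)=1$, and never uses that $H$ is monotone in $(t,s)$.

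The paper takes a more structured route. It proves that $H$ is nondecreasing in both coordinates and introduces the boundary $h(t)=\min\{s:H(t,s)=1\}$, which is nonincreasing in $t$. It then defines the column $t^*=\min\{t:g(t)\ge h(t)\}$ and identifies the transition explicitly:
\begin{itemize}
\item the horizontal edge $(t^*-1,g(t^*-1))\to(t^*,g(t^*-1))$ when $h(t^*)\le g(t^*-1)$;
\item the vertical edge $(t^*,h(t^*)-1)\to(t^*,h(t^*))$ otherwise.
\end{itemize}

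For the existence statement alone, your route is shorter and fully sufficient. The paper's extra structure buys a description of the transition through two monotone functions $g$ and $h$. This is exactly what the two-dimensional binary search of Algorithm~\ref{alg:ef1-transition-search} and Lemma~\ref{lemma:ef1-query} exploits. Locating your $r^*$ by scanning the path would cost $\Theta(p+q)$ predicate evaluations. So to reuse your argument for the $O(\log m)$ query bound, you would still need the monotonicity of $H$ and the $g$/$h$ characterization.
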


\begin{proof}
Since $C$ is uniquely the common minimum-cost bundle at the initialization, it holds that $H(0,0) = 0$. 
Conversely, since both residual bundles are empty at $(p,q)$, we have $H(p,q) = 1$. 
Moreover, $H$ is monotonically nondecreasing in both coordinates, as increasing either coordinate only adds chores to the growing bundle and removes them from a residual bundle.

For each $t$, define $h(t)=\min\{s:H(t,s)=1\}$.  
Intuitively, given the first $t$ chores from $A$, the value $h(t)$ denotes the number of chores that must be transferred from $B$ to $C$ to reach the transition point.
This is well-defined since
$B_q=\emptyset$, and $h(t)$ is nonincreasing in $t$.  Let
$$
    t^*=\min\{t:g(t)\geq h(t)\}.
$$
The value $t^*$ is the first column in which the safe staircase reaches a state
$(t,s)$ with $H(t,s)=1$; by the convention $g(-1)=0$, the height $g(t^*-1)$ is the entry height of this column.

If $h(t^*)\leq g(t^*-1)$, then the first state on the staircase satisfying $H(t,s)=1$ is reached by
the horizontal edge, as illustrated in Figure~\ref{fig:ef1-transition-cases}(a),
$$
    (t^*-1,g(t^*-1))\to(t^*,g(t^*-1)).
$$
Otherwise, it is reached by the vertical edge, as illustrated in Figure~\ref{fig:ef1-transition-cases}(b),
$$
    (t^*,h(t^*)-1)\to(t^*,h(t^*)).
$$
In both cases, if the tail and head are denoted by $P$ and $P'$, respectively, then $H(P)=0$ and $H(P')=1$, and the two states differ by moving exactly one chore from a residual bundle to the growing bundle.  
This proves the existence of the desired transition.
\end{proof}

Therefore, the problem reduces to efficiently locating the consecutive safe states. 
We address this by presenting a two-dimensional binary search framework that achieves the goal within $O(\log m)$ queries.
To illustrate our algorithm, we further define the staircase predicate $\Phi$ by $\Phi(t,s)=1$ if and only if $s<g(t)$.
The search procedure is described in Algorithm~\ref{alg:ef1-transition-search}.

\begin{algorithm}[!htb]
\caption{Search for the First Safe Staircase Transition}
\label{alg:ef1-transition-search}
\KwIn{Bundles $A=(a_1,\ldots,a_p)$, $B=(b_1,\ldots,b_q)$, $C$, and predicates $\Phi,H$}

$t_\ell\gets 0$, $t_r\gets p$, $s_\ell\gets 0$, $s_r\gets q$\;
\While{$t_\ell<t_r$ \textbf{and} $s_\ell<s_r$}{
    $t_m\gets\lfloor(t_\ell+t_r)/2\rfloor$, $s_m\gets\lfloor(s_\ell+s_r)/2\rfloor$\;
    $(\Phi_m,H_m)\gets(\Phi(t_m,s_m),H(t_m,s_m))$\;
    \lIf{$(\Phi_m,H_m)=(1,1)$}{$t_r\gets t_m$}
    \lElseIf{$(\Phi_m,H_m)=(0,0)$}{$t_\ell\gets t_m+1$}
    \lElseIf{$(\Phi_m,H_m)=(1,0)$}{$s_\ell\gets s_m+1$}
    \lElse{$s_r\gets s_m$}
}
\If{$t_\ell=t_r$}{
    $t^*\gets t_\ell$\;
    Find $h(t^*)$ via $H$ and $g(t^*-1)$ via $\Phi$ by binary search\;
    \Return the corresponding horizontal or vertical transition\;
}
\Else{
    $s^*\gets s_\ell$\;
    Find $\tau_g=\min\{t:g(t)\geq s^*\}$ via $\Phi$ and $\tau_h=\min\{t:h(t)\leq s^*\}$ by binary search\;
    $t^*\gets\max\{\tau_g,\tau_h\}$\;
    \Return the corresponding horizontal or vertical transition\;
}
\KwOut{Consecutive safe states $P,P'$ with $H(P)=0$ and $H(P')=1$}
\end{algorithm}

\begin{lemma} \label{lemma:ef1-query}
The transition point described in Lemma~\ref{lemma:ef1-transition-search} can be found via $O(\log m)$ comparison queries by Algorithm~\ref{alg:ef1-transition-search}.
\end{lemma}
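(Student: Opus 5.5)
We split the proof into three parts: the cost of evaluating one predicate, the number of loop iterations, and the correctness of the loop invariant together with the final step.

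\emph{Predicate cost.} Each evaluation of $\Phi$ or $H$ uses $O(1)$ comparison queries.
\begin{itemize}
\item The value $\Phi(t,s)$ equals the single comparison $\Comp_1(A_{t+1},B_{s+1})$, since for $t<p$ and $s<q$ we have $s<g(t)$ if and only if $c_1(A_{t+1})\le c_1(B_{s+1})$. The boundary cases $t=p$ or $s=q$ need no query.
\item The value $H(t,s)$ needs four queries: for each $i\in\{2,3\}$, compare $G_{t,s}$ with $A_t$ and with $B_s$. Ties are handled by querying in the reverse order, so that strictness can be read off.
\end{itemize}
Hence the total query count is a constant times the number of predicate evaluations. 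It therefore suffices to show that Algorithm~\ref{alg:ef1-transition-search} evaluates the predicates $O(\log p+\log q)=O(\log m)$ times.

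\emph{Iteration count.} In every iteration of the while loop, one of the intervals $[t_\ell,t_r]$ or $[s_\ell,s_r]$ is cut roughly in half. Each interval can be halved only $O(\log m)$ times, so the loop runs $O(\log m)$ times. After the loop, the final step performs at most two further monotone binary searches, each over a range of length at most $m$. This gives $O(\log m)$ evaluations in total.

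\emph{Correctness.} This is the main obstacle. The plan is to maintain the invariant that the target column $t^*=\min\{t:g(t)\ge h(t)\}$ lies in $[t_\ell,t_r]$. We also maintain that every $t$ eliminated through an $s$-update has its crossing behaviour settled relative to $[s_\ell,s_r]$. The argument relies on two monotonicity facts: $g$ is nondecreasing and $h$ is nonincreasing. Consider the four cases at the midpoint $(t_m,s_m)$.
\begin{itemize}
\item If $(\Phi,H)=(1,1)$, then $h(t_m)\le s_m<g(t_m)$, so $t^*\le t_m$.
\item If $(0,0)$, then $g(t_m)\le s_m<h(t_m)$. Because $g$ is nondecreasing and $h$ is nonincreasing, no $t\le t_m$ qualifies, so $t^*>t_m$.
\item If $(1,0)$, then both $g(t_m)$ and $h(t_m)$ exceed $s_m$. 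The crossing height must then lie above $s_m$, because the curves $g$ and $h$ cross at a height in $[h(t^*),g(t^*)]$. We formalise this by tracking the crossing value $s^\dagger$ through the conditions $g(t)\ge s^\dagger$ and $h(t)\le s^\dagger$.
\item If $(0,1)$, symmetrically the crossing lies at or below $s_m$.
\end{itemize}
The delicate point is to make ``crossing height'' precise, so that the $s$-interval updates are valid. The step to try first is the following monotone 2D characterisation: $t^*$ equals the smallest $t$ such that some $s$ satisfies $h(t)\le s\le g(t)$. The $s$-searches then maintain that a witness $s$ for $t^*$ exists in $[s_\ell,s_r]$.

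\emph{Final step.} When $t_\ell=t_r$, we recover $h(t^*)$ and $g(t^*-1)$ by binary search. We then return the horizontal or vertical transition exactly as in the proof of Lemma~\ref{lemma:ef1-transition-search}.

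When instead the $s$-interval collapses to $s^*$, we compute $\tau_g=\min\{t:g(t)\ge s^*\}$ and $\tau_h=\min\{t:h(t)\le s^*\}$. We then need to check that $\max\{\tau_g,\tau_h\}=t^*$:
\begin{itemize}
\item $\max\{\tau_g,\tau_h\}\ge t^*$, since at $\max\{\tau_g,\tau_h\}$ the level $s^*$ is a valid witness.
\item $\max\{\tau_g,\tau_h\}\le t^*$, by the invariant that $t^*$ has a witness in the collapsed interval $\{s^*\}$.
\end{itemize}
Returning the transition at $t^*$ as before then completes the proof with $O(\log m)$ queries overall.
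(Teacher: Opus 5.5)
Your proof is correct and follows the paper's skeleton: the same per-predicate query costs, the same halving count, the same invariant $t^*\in[t_\ell,t_r]$ with the same arguments in cases $(1,1)$ and $(0,0)$, and the same one-dimensional searches at the end.

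The one real difference is the invariant you keep on the $s$-coordinate. The paper tracks the exact height $s^*=\max\{h(t^*),g(t^*-1)\}$ at which the staircase first enters $\{H=1\}$, and keeps $s^*\in[s_\ell,s_r]$. You keep only that some $s_0\in[s_\ell,s_r]$ satisfies $h(t^*)\le s_0\le g(t^*)$.

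Your invariant is preserved, but the reason you give in case $(1,0)$ does not establish it. When $t^*>t_m$ you only know $g(t^*)\ge g(t_m)>s_m$, and $h(t^*)$ may well be $\le s_m$, so the witness interval need not lie above $s_m$. The correct checks are:
\begin{itemize}
\item Case $(1,0)$: if $t^*\le t_m$, then $s_0\ge h(t^*)\ge h(t_m)>s_m$ already; if $t^*>t_m$, then $\max\{s_0,s_m+1\}$ is a witness in $[s_m+1,s_r]$.
\item Case $(0,1)$: if $t^*\le t_m$, then $s_0\le g(t^*)\le g(t_m)\le s_m$; if $t^*>t_m$, then $\min\{s_0,s_m\}$ is a witness.
\end{itemize}
With these checks, your argument that $\max\{\tau_g,\tau_h\}=t^*$ is correct.

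The price of the weaker invariant is at the final step. It does not certify that the collapsed value is the entry height, so you cannot use the paper's read-off rule (vertical at height $s^*$ if $\tau_g=t^*$, horizontal otherwise). Your ``as before'' must therefore mean re-running the searches for $h(t^*)$ and $g(t^*-1)$ once $t^*$ is known. This still costs only $O(\log m)$ queries.

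The paper's sharper choice of $s^*$ buys exactly this shortcut. It costs only one extra use of the minimality of $t^*$ in case $(0,1)$, namely $g(t^*-1)<h(t^*-1)\le h(t_m)\le s_m$.
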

\begin{proof}
It suffices to justify the correctness and query complexity of Algorithm~\ref{alg:ef1-transition-search}.
For $t<p$ and $s<q$, the condition $\Phi(t,s)=1$ is equivalent to the comparison
$c_1(A_{t+1})\leq c_1(B_{s+1})$,
and thus costs one comparison to agent $1$.  On the boundary, $\Phi(p,s)=1$ for
$s<q$ and $\Phi(t,q)=0$ by the conventions $g(p)=q$ and $s<g(t)$, so no query is
needed there.  A value of $H(t,s)$ can be evaluated with the four comparisons
$\Comp_i(A_t,G_{t,s})$ and $\Comp_i(B_s,G_{t,s})$ for $i\in\{2,3\}$. 
For each agent, both comparisons return $G_{t,s}$ exactly when she strictly prefers the growing bundle.

Let $t^*=\min\{t:g(t)\geq h(t)\}$ be the first column in which the safe staircase reaches a state $(t,s)$ with $H(t,s)=1$, and let $s^*=\max\{h(t^*),g(t^*-1)\}$.
We maintain the intervals $[t_\ell,t_r]\subseteq[0,p]$ and
$[s_\ell,s_r]\subseteq[0,q]$ such that $t^*\in[t_\ell,t_r]$ and
$s^*\in[s_\ell,s_r]$.  
Initially, the claim is trivial.  
While both intervals have positive lengths, let
$t_m=\lfloor(t_\ell+t_r)/2\rfloor$ and
$s_m=\lfloor(s_\ell+s_r)/2\rfloor$, and query $\Phi(t_m,s_m)$ and $H(t_m,s_m)$.
The interval updates in Algorithm~\ref{alg:ef1-transition-search}, illustrated in
Figure~\ref{fig:ef1-binary-search-cases}, preserve the rectangle invariant as follows.  
\begin{itemize}
    \item If $(\Phi,H)=(1,1)$, then
$s_m<g(t_m)$ and $s_m\geq h(t_m)$, so $g(t_m)>h(t_m)$ and hence
$t^*\leq t_m$.
    \item   If $(\Phi,H)=(0,0)$, then $g(t_m)\leq s_m<h(t_m)$, so $g(t_m)<h(t_m)$ and $t^*>t_m$.
    \item  If $(\Phi,H)=(1,0)$, then $s_m<g(t_m)$ and $s_m<h(t_m)$.  
    When $t^*\leq t_m$, monotonicity gives $h(t^*)\geq h(t_m)>s_m$; when $t^*>t_m$, we have $g(t^*-1)\geq g(t_m)>s_m$.  Hence, $s^*>s_m$.
    \item Finally, if $(\Phi,H)=(0,1)$, then $s_m\geq g(t_m)$ and $s_m\geq h(t_m)$. 
    If $t^*\leq t_m$, then $s^*\leq g(t^*)\leq g(t_m)\leq s_m$; if $t^*>t_m$, then by the minimality of $t^*$ and monotonicity of $h$, both $g(t^*-1)\leq s_m$ and $h(t^*)\leq h(t_m)\leq s_m$.  Thus, $s^*\leq s_m$.

\end{itemize} 

Furthermore, each step halves one active interval, so there are $O(\log(p+1)+\log(q+1))=
O(\log m)$ iterations.
When one interval collapses, the exact transition is found by a final one-dimensional
binary search.  If $t_\ell=t_r=t^*$, first find $h(t^*)$ by binary search on the
monotone predicate $H(t^*,s)$.  If $t^*>0$, find $g(t^*-1)$ as the first $s$ for
which $\Phi(t^*-1,s)=0$, using the convention $\Phi(t,q)=0$; if $t^*=0$, use
$g(-1)=0$.  The two values determine whether the transition is horizontal or
vertical as in Lemma~\ref{lemma:ef1-transition-search}.  

If instead $s_\ell=s_r=s^*$, find
$$
    \tau_g=\min\{t:g(t)\geq s^*\},\qquad
    \tau_h=\min\{t:h(t)\leq s^*\}
$$
by binary search, using $\Phi(t,s^*-1)$ for the first predicate when $s^*>0$ and
setting $\tau_g=0$ when $s^*=0$.  Then $t^*=\max\{\tau_g,\tau_h\}$; if
$\tau_g=t^*$ the transition is vertical
$(t^*,s^*-1)\to(t^*,s^*)$, and otherwise it is horizontal
$(t^*-1,s^*)\to(t^*,s^*)$.
Every predicate evaluation uses only $O(1)$ comparison queries, so the total
number of queries is $O(\log m)$.
\end{proof}

\begin{figure}[!t]
    \centering
    \begin{subfigure}{0.47\linewidth}
        \centering
        \begin{tikzpicture}[
            x=0.58cm,
            y=0.58cm,
            >=Stealth,
            every node/.style={font=\scriptsize},
            line cap=round,
            line join=round
        ]
            \draw[step=1,gray!20,very thin] (0,0) grid (4,4);
            \draw[->] (-0.12,0) -- (4.35,0) node[right] {$t$};
            \draw[->] (0,-0.12) -- (0,4.35) node[above] {$s$};
            \draw[blue!75!black,very thick] (0,0) -- (0,1) -- (1,1) -- (1,3) -- (4,3);
            \draw[red!70!black,dashed,thick] (0,4) -- (1,3) -- (2,1) -- (4,1);
            \fill[black] (2,2) circle (2.2pt);
            \draw[orange!90!black,line width=1.2pt,->] (2.5,3.65) -- (1.25,3.65);
        \end{tikzpicture}
        \caption{$(\Phi,H)=(1,1):\,t_r\gets t_m$.}
    \end{subfigure}
    \hfill
    \begin{subfigure}{0.47\linewidth}
        \centering
        \begin{tikzpicture}[
            x=0.58cm,
            y=0.58cm,
            >=Stealth,
            every node/.style={font=\scriptsize},
            line cap=round,
            line join=round
        ]
            \draw[step=1,gray!20,very thin] (0,0) grid (4,4);
            \draw[->] (-0.12,0) -- (4.35,0) node[right] {$t$};
            \draw[->] (0,-0.12) -- (0,4.35) node[above] {$s$};
            \draw[blue!75!black,very thick] (0,0) -- (0,1) -- (4,1);
            \draw[red!70!black,dashed,thick] (0,4) -- (1,3) -- (2,3) -- (4,1);
            \fill[black] (2,2) circle (2.2pt);
            \draw[orange!90!black,line width=1.2pt,->] (1.5,3.65) -- (2.75,3.65);
        \end{tikzpicture}
        \caption{$(\Phi,H)=(0,0):\,t_\ell\gets t_m+1$.}
    \end{subfigure}
    \par\medskip
    \begin{subfigure}{0.47\linewidth}
        \centering
        \begin{tikzpicture}[
            x=0.58cm,
            y=0.58cm,
            >=Stealth,
            every node/.style={font=\scriptsize},
            line cap=round,
            line join=round
        ]
            \draw[step=1,gray!20,very thin] (0,0) grid (4,4);
            \draw[->] (-0.12,0) -- (4.35,0) node[right] {$t$};
            \draw[->] (0,-0.12) -- (0,4.35) node[above] {$s$};
            \draw[blue!75!black,very thick] (0,0) -- (0,2) -- (1,2) -- (1,3) -- (4,3);
            \draw[red!70!black,dashed,thick] (0,4) -- (1,3) -- (2,2) -- (4,2);
            \fill[black] (2,1) circle (2.2pt);
            \draw[orange!90!black,line width=1.2pt,->] (3.55,0.65) -- (3.55,1.9);
        \end{tikzpicture}
        \caption{$(\Phi,H)=(1,0):\,s_\ell\gets s_m+1$.}
    \end{subfigure}
    \hfill
    \begin{subfigure}{0.47\linewidth}
        \centering
        \begin{tikzpicture}[
            x=0.58cm,
            y=0.58cm,
            >=Stealth,
            every node/.style={font=\scriptsize},
            line cap=round,
            line join=round
        ]
            \draw[step=1,gray!20,very thin] (0,0) grid (4,4);
            \draw[->] (-0.12,0) -- (4.35,0) node[right] {$t$};
            \draw[->] (0,-0.12) -- (0,4.35) node[above] {$s$};
            \draw[blue!75!black,very thick] (0,0) -- (0,1) -- (4,1);
            \draw[red!70!black,dashed,thick] (0,3) -- (1,2) -- (2,2) -- (3,1) -- (4,0);
            \fill[black] (2,3) circle (2.2pt);
            \draw[orange!90!black,line width=1.2pt,->] (3.55,3.35) -- (3.55,2.1);
        \end{tikzpicture}
        \caption{$(\Phi,H)=(0,1):\,s_r\gets s_m$.}
    \end{subfigure}
    \medskip
    
    \begin{tikzpicture}[
        x=1cm,
        y=1cm,
        >=Stealth,
        every node/.style={font=\footnotesize},
        line cap=round
    ]
        \path[use as bounding box] (-0.6,-0.18) rectangle (11.4,0.18);
        \draw[blue!75!black,very thick] (-0.2,0) -- (0.45,0);
        \node[anchor=west] at (0.6,0) {safe staircase $g$};
        \draw[red!70!black,dashed,thick] (3.35,0) -- (4,0);
        \node[anchor=west] at (4.15,0) {boundary $h$};
        \fill[black] (6.55,0) circle (2.2pt);
        \node[anchor=west] at (6.75,0) {midpoint};
        \draw[orange!90!black,line width=1.5pt,->] (9,0) -- (9.75,0);
        \node[anchor=west] at (9.9,0) {update};
    \end{tikzpicture}
    \caption{The four possible midpoint configurations in Algorithm~\ref{alg:ef1-transition-search}.}
    \label{fig:ef1-binary-search-cases}
\end{figure}

\subsection{The Construction of EF1 Allocations}
In this subsection, we establish the final EF1 allocation based on the transition point computed in Lemma~\ref{lemma:ef1-transition-search}.
\begin{lemma} \label{lemma:ef1-transition-allocation}
Given the transition returned by Lemma~\ref{lemma:ef1-query}, one can compute an EF1 allocation using $O(1)$ additional comparisons.
\end{lemma}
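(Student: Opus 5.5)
The plan is a constant-size case analysis built on the two consecutive safe states $P=(G,U,V)$ and $P'=(G\cup\{e\},U\setminus\{e\},V)$ from Lemma~\ref{lemma:ef1-transition-search}. Write $G'=G\cup\{e\}$ and $U'=U\setminus\{e\}$.

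Four facts would be used throughout.
\begin{itemize}
\item Both $P$ and $P'$ are safe, so by Lemma~\ref{lemma:ef1-safe-state} agent $1$ is EF1 whenever she receives one of the two residual bundles, in either partition.
\item $H(P)=0$ gives $c_i(G)<c_i(U)$ and $c_i(G)<c_i(V)$ for both $i\in\{2,3\}$.
\item $H(P')=1$ means some agent, say $i\in\{2,3\}$, weakly prefers $U'$ or $V$ to $G'$. Let $j$ be the other agent.
\item For any agent $x\in\{2,3\}$ receiving $G'$, the item $e$ is a natural EF1 witness, since $G'\setminus\{e\}=G$ is strictly cheaper than $V$ and $U$ for $x$.
\end{itemize}
The only comparisons needed are each of $i,j$'s favorite bundle in $P'$, plus a bounded number of pairwise comparisons such as $\Comp_j(G,U')$. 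This is clearly $O(1)$ queries.

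The main case split is on $j$.

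First case: $c_j(G)\le c_j(U')$. Then $j$ is EF1 on $G'$ in $P'$ with witness $e$, because $c_j(G)<c_j(V)$ and $c_j(G)\le c_j(U')$. Agent $i$ takes her favorite among $U'$ and $V$, which is envy-free since it is also weakly cheaper than $G'$ by the choice of $i$. Agent $1$ receives the remaining residual bundle and is EF1 by Lemma~\ref{lemma:ef1-safe-state} applied to $P'$.

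Second case: $c_j(U')<c_j(G)$. Then $j$ strictly prefers $U'$ to $G'$ as well, so both $2$ and $3$ now favor residual bundles of $P'$.
\begin{itemize}
\item If one of them is willing to take $G'$ with witness $e$, the first case's assignment works with roles swapped.
\item If their favorites among $\{U',V\}$ are compatible (they can each get a distinct residual bundle they weakly prefer to $G'$), I would try to fall back to $P$ instead: give $G$ to one of them, who is envy-free there. Give the other a residual bundle of $P$ with an explicit witness, using $e$ for $U$ (note $U\setminus\{e\}=U'$) or the first item of $V$. Agent $1$ again gets a residual bundle via safety of $P$.
\end{itemize}

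The main obstacle is this last subcase. There, agents $2$ and $3$ both flip from $G$ to residual bundles in a single step, and the naive assignment would hand $G'$ to agent $1$, for whom it need not be EF1. The first thing I would try is to exploit $c_j(U')<c_j(G)<c_j(U)=c_j(U')+c_j(e)$, which pins $e$ as a large item for $j$. Giving $j$ the bundle $U$ in $P$ then makes her EF1 against $G$ with witness $e$. Her EF1 against $V$ would have to come from $c_j(U')$ versus $c_j(V)$, which is one extra comparison. If that fails, I would swap which agent takes $G$, using the analogous inequalities for $i$.

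I expect the proof to close by checking these $O(1)$ sub-subcases exhaustively. The delicate point is ensuring that in every branch agent $1$ ends with a residual bundle of $P$ or of $P'$, never with $G$ or $G'$.
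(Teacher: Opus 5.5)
Your first case is correct, and it covers both the paper's Case~1 (only one of agents $2,3$ switches) and its Case~2(i). The gap is that you never close the second case. There you treat $c_j(U')\le c_j(V)$ as an extra query that might fail, and you propose an unverified fallback (``swap which agent takes $G$''). You also split on whether the two agents' favorites among $\{U',V\}$ are ``compatible,'' and the branches you list do not exhaust the case: neither agent may be willing to take $G'$ while both have $c_x(V)>c_x(G')$.

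The missing observation is one line. Your case hypothesis $c_j(U')<c_j(G)$, together with $c_j(G)<c_j(V)$ from $H(P)=0$, gives $c_j(U')<c_j(G)<c_j(V)$. So no query is needed and nothing can fail. In $P$, give $G$ to $i$, who is envy-free because $H(P)=0$. Give $U$ to $j$, who is EF1 with witness $e$, since $U\setminus\{e\}=U'$ is strictly cheaper than both $G$ and $V$ for her. Give $V$ to agent~$1$, who is EF1 by safety of $P$. This is exactly the paper's Case~2(ii), and it settles your entire second case regardless of agent $i$'s preferences.

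With this fix your argument is complete and slightly leaner than the paper's. It uses the single query $\Comp_j(G,U')$, plus $i$'s choice between $U'$ and $V$ in the first case, and two cases, with no need to distinguish whether one or both agents switched.

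Your fourth listed ``fact'' is misstated. An agent holding $G'$ in $P'$ is compared against $U'$, not $U$, so $c_x(G)<c_x(U)$ does not make $e$ a valid EF1 witness. That is precisely why $\Comp_j(G,U')$ is needed at all; you do use the correct condition $c_j(G)\le c_j(U')$ in your first case.
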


\begin{proof}
Let $P=(G,U,V)$ and $P'=(G\cup\{e\},U\setminus\{e\},V)$ be the partition transition from Lemma~\ref{lemma:ef1-query}. 
Under partition $P$, both agents 2 and 3 strictly prefer $G$ over $U$ and $V$. Furthermore, both $P$ and $P'$ are safe for agent 1. 
We consider two cases based on how agents 2 and 3 update their preferred bundles under $P'$.

\begin{itemize}
    \item \textbf{Case 1: One of agents 2 and 3 switches under $P'$. } 
    Without loss of generality, assume agent 2 switches while agent 3 does not. 
    We allocate $G\cup\{e\}$ to agent 3, who remains envy-free as she keeps her strict preference for this bundle. 
    We then assign agent 2 her preferred bundle from $\{U\setminus\{e\},V\}$. 
    Since $G\cup\{e\}$ is no longer her favorite bundle under $P'$, her choice costs at most as much as $G\cup\{e\}$, which ensures her envy-freeness.
    Finally, we give the remaining bundle to agent 1, who is EF1 by the safety of $P'$.
    \item \textbf{Case 2: Both agents 2 and 3 switch at $P'$. } 
    In this case, we query $\Comp_2(G,U\setminus\{e\})$ and branch as follows:
    \begin{itemize}
        \item [(i)] $c_2(G) \leq c_2(U\setminus\{e\})$. We use the partition $P'$. Assign $G\cup\{e\}$ to agent 2, let agent 3 choose the cheaper bundle from $\{U\setminus\{e\},V\}$, and give the rest to agent 1. Agent 3 is envy-free because she switches. Agent 2 is EF1 because removing $e$ leaves her with a cost of $c_2(G)$, which is at most $c_2(U\setminus\{e\})$ (by the query) and strictly less than $c_2(V)$ (as she preferred $G$ at $P$). Agent 1 is EF1 by the safety of $P'$.
        
        \item [(ii)] $c_2(G) > c_2(U\setminus\{e\})$. We use the partition $P$. Assign $U$ to agent 2, $G$ to agent 3, and $V$ to agent 1. Agent 3 receives her strictly preferred bundle $G$ and is envy-free. Agent 2 is EF1 because removing $e$ from $U$ leaves her with $U\setminus\{e\}$, and her costs satisfy $c_2(U\setminus\{e\}) < c_2(G) < c_2(V)$. Agent 1 is EF1 by the safety of $P$.
        \qedhere
    \end{itemize}
\end{itemize}
\end{proof}

\begin{theorem} \label{theorem: EF1}
    For instances with additive cost functions, EF1 allocations for three agents can be computed using $O(\log m)$ queries.
\end{theorem}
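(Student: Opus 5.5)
The proof is an assembly of the lemmas of this section, organized as a case analysis on the preferences of agents $2$ and $3$ over agent $1$'s identical witness EF1 partition.

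\emph{Step 1.} By Lemma~\ref{lemma: first_EF1}, compute an identical witness EF1 partition $(A,B,C)$ for $c_1$ using $O(\log m)$ queries.

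\emph{Step 2.} For $i\in\{2,3\}$, find a least-costly bundle of agent $i$ among $\{A,B,C\}$ with $O(1)$ comparisons.
\begin{itemize}
\item If there is a choice of least-costly bundles for agents $2$ and $3$ that are distinct, give each of them that bundle. Both are envy-free, and agent $1$ receives the remaining bundle, which is EF1 for her by the witness property.
\item To decide this with $O(1)$ queries, also record whether each agent's minimum is unique. If either agent has a tie among her minimizers, we can always pick distinct bundles.
\end{itemize}
The only remaining case is when agents $2$ and $3$ have the same unique minimizer. Relabel it as $C$, so that $c_i(C)<c_i(A)$ and $c_i(C)<c_i(B)$ for $i\in\{2,3\}$. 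This is exactly the hard case assumed throughout the section.

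\emph{Step 3.} In the hard case, order $A$ and $B$ with their witness items first and form the staircase states $(t,s)$.
\begin{itemize}
\item By Lemma~\ref{lemma:ef1-safe-staircase}, there is a monotone path of safe states from $(0,0)$ to $(p,q)$.
\item By Lemma~\ref{lemma:ef1-transition-search}, this path contains a consecutive pair $P,P'$ with $H(P)=0$ and $H(P')=1$.
\item By Lemma~\ref{lemma:ef1-query}, Algorithm~\ref{alg:ef1-transition-search} locates such a pair using $O(\log m)$ queries.
\item By Lemma~\ref{lemma:ef1-transition-allocation}, $O(1)$ further comparisons convert this pair into an EF1 allocation. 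Agent $1$ is handled via Lemma~\ref{lemma:ef1-safe-state}, and agents $2$ and $3$ via the case analysis in that lemma.
\end{itemize}

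\emph{Step 4.} Sum the query counts: $O(\log m)+O(1)+O(\log m)+O(1)=O(\log m)$.

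The main subtlety is the initial case split. When ties occur, the hard case must be set up so that the strict-preference hypothesis $H(0,0)=0$ genuinely holds. This is what the transition lemma needs, since it relies on $H(0,0)=0$ and $H(p,q)=1$. Handling ties by assigning distinct minimizers whenever possible gives exactly this, so beyond this point the proof only composes the earlier results.
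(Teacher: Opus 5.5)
Your proposal is correct and follows the paper's proof essentially step for step. You compute agent $1$'s identical witness EF1 partition, resolve the case where agents $2$ and $3$ can take distinct minimum-cost bundles (your tie handling matches the paper's use of the sets of minimizers), and in the common-unique-minimizer case you compose the safe-staircase, transition-existence, transition-search, and transition-allocation lemmas for $O(\log m)$ queries in total.
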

\begin{proof}
We prove the correctness and query complexity of the procedure described above.
By Lemma~\ref{lemma: first_EF1}, using $O(\log m)$ comparisons to agent $1$, the
algorithm first computes an EF1 partition $(A,B,C)$.  
Since there are only three bundles, the sets of minimum-cost bundles for agents $2$ and $3$ can be computed using only $O(1)$ comparisons.

If agents $2$ and $3$ can be assigned distinct minimum-cost bundles, the algorithm gives those bundles to them and gives the remaining bundle to agent $1$. 
Agents $2$ and $3$ are envy-free, because each receives a minimum-cost bundle among $\{A,B,C\}$ while agent $1$ is EF1.  
Hence, the returned allocation is EF1 in this case.

It remains to consider the hard case in which no distinct minimum-cost bundles can be assigned to agents $2$ and $3$.  Then their two sets of
minimum-cost bundles must be the same singleton.  Relabel this common unique
minimum-cost bundle as $C$ and relabel the other two bundles as $A$ and $B$.
By Lemma~\ref{lemma:ef1-safe-staircase}, there is a safe staircase from $(0,0)$ to $(p,q)$. By Lemma~\ref{lemma:ef1-transition-search} and Lemma~\ref{lemma:ef1-query}, the first preference transition along
this staircase can be found using $O(\log m)$ comparison queries.  Finally,
Lemma~\ref{lemma:ef1-transition-allocation} converts the two neighboring states at
this transition into an EF1 allocation using only $O(1)$ additional comparisons.
Therefore, the total number of comparison queries is $O(\log m)$, and the allocation returned by the procedure described above is EF1.
This completes the proof.
\end{proof}

\section{Conclusion and Open Problems}\label{sec:conclusion}
In this paper, we studied fair allocation of indivisible chores under the comparison-query model, where algorithms can only compare the costs of two bundles and do not have access to cardinal cost values.
We developed logarithmic-query algorithms for several fairness guarantees.
For PROP1, we gave a comparison-based algorithm with $O(n^3\log m)$ queries and also proved an $O(n^3\log^2 m)$ bound for the contiguous setting.
For the MMS guarantee, we introduced a compression framework that reduces the original instance to a bounded-size meta-instance while preserving the MMS benchmark up to a controlled loss, yielding a $(13/11 +\varepsilon)$-MMS allocation and only logarithmic dependence on $m$.
We also investigated the distortion of the comparison model for MMS, obtaining a $13/11$ upper bound and a $44/43$ lower bound via indistinguishable cardinal instances.
Finally, for EF1, we showed that when there are three agents, EF1 allocations can be computed using $O(\log m)$ comparisons by combining a contiguous identical-cost EF1 partition with a staircase search.

This leaves two directions for future work.
The first is EF1 beyond three agents.
Our logarithmic-query algorithm relies on a three-agent staircase structure, and
extending this structure to every fixed $n\geq 4$ would give an EF1 algorithm with
query complexity polynomial in $n$ and logarithmic in $m$.
Conversely, a comparison lower bound for EF1 when $n\geq 4$ would identify a real
barrier behind the special three-agent geometry.
The second direction is a sharper understanding of MMS distortion.
The current bounds leave a gap between the $44/43$ lower bound and the $13/11$
upper bound.
Closing this gap would clarify how much cardinal information is lost when only bundle comparisons are available.

\newpage

\bibliography{ref}

@article{Suksompong21,
	author = {Suksompong, Warut},
	title = {Constraints in Fair Division},
	journal = {ACM SIGecom Exchanges},
	volume = {19},
	number = {2},
	year = {2021},
	pages = {46--61},
}

@article{liu2024mixed,
  title={Mixed fair division: A survey},
  author={Liu, Shengxin and Lu, Xinhang and Suzuki, Mashbat and Walsh, Toby},
  journal={Journal of Artificial Intelligence Research},
  volume={80},
  pages={1373--1406},
  year={2024}
}

@inproceedings{conf/wine/BuLLST24,
  author       = {Xiaolin Bu and
                  Zihao Li and
                  Shengxin Liu and
                  Jiaxin Song and
                  Biaoshuai Tao},
  title        = {Logarithmic Comparison-Based Query Complexity for Fair Division of
                  Indivisible Goods},
  booktitle    = {{WINE}},
  series       = {Lecture Notes in Computer Science},
  pages        = {348--365},
  publisher    = {Springer},
  year         = {2024}
}

@article{journals/ai/AmanatidisABFLMVW23,
  author       = {Georgios Amanatidis and
                  Haris Aziz and
                  Georgios Birmpas and
                  Aris Filos{-}Ratsikas and
                  Bo Li and
                  Herv{\'{e}} Moulin and
                  Alexandros A. Voudouris and
                  Xiaowei Wu},
  title        = {Fair division of indivisible goods: Recent progress and open questions},
  journal      = {Artif. Intell.},
  volume       = {322},
  pages        = {103965},
  year         = {2023}
}

@inproceedings{conf/sigecom/LiptonMMS04,
  author    = {Richard J. Lipton and
               Evangelos Markakis and
               Elchanan Mossel and
               Amin Saberi},
  title     = {On approximately fair allocations of indivisible goods},
  booktitle = {{EC}},
  pages     = {125--131},
  publisher = {{ACM}},
  year      = {2004}
}

@article{journals/teco/CaragiannisKMPS19,
  author    = {Ioannis Caragiannis and
               David Kurokawa and
               Herv{\'{e}} Moulin and
               Ariel D. Procaccia and
               Nisarg Shah and
               Junxing Wang},
  title     = {The Unreasonable Fairness of Maximum Nash Welfare},
  journal   = {{ACM} Trans. Economics and Comput.},
  volume    = {7},
  number    = {3},
  pages     = {12:1--12:32},
  year      = {2019}
}

@article{journals/siamdm/PlautR20,
  author       = {Benjamin Plaut and
                  Tim Roughgarden},
  title        = {Almost Envy-Freeness with General Valuations},
  journal      = {{SIAM} J. Discret. Math.},
  volume       = {34},
  number       = {2},
  pages        = {1039--1068},
  year         = {2020}
}

@article{conf/bqgt/Budish10,
  title={The combinatorial assignment problem: Approximate competitive equilibrium from equal incomes},
  author={Budish, Eric},
  journal={Journal of Political Economy},
  volume={119},
  number={6},
  pages={1061--1103},
  year={2011},
  publisher={University of Chicago Press Chicago, IL}
}

@inproceedings{conf/wine/FeigeST21,
  author       = {Uriel Feige and
                  Ariel Sapir and
                  Laliv Tauber},
  title        = {A Tight Negative Example for {MMS} Fair Allocations},
  booktitle    = {{WINE}},
  series       = {Lecture Notes in Computer Science},
  volume       = {13112},
  pages        = {355--372},
  publisher    = {Springer},
  year         = {2021}
}

@inproceedings{conf/aaai/AzizRSW17,
  author       = {Haris Aziz and
                  Gerhard Rauchecker and
                  Guido Schryen and
                  Toby Walsh},
  title        = {Algorithms for Max-Min Share Fair Allocation of Indivisible Chores},
  booktitle    = {{AAAI}},
  pages        = {335--341},
  publisher    = {{AAAI} Press},
  year         = {2017}
}

@article{journals/mp/AzizLW24,
  author       = {Haris Aziz and
                  Bo Li and
                  Xiaowei Wu},
  title        = {Approximate and strategyproof maximin share allocation of chores with
                  ordinal preferences},
  journal      = {Math. Program.},
  volume       = {203},
  number       = {1},
  pages        = {319--345},
  year         = {2024}
}

@inproceedings{conf/sigecom/FeigeH23,
  author       = {Uriel Feige and
                  Xin Huang},
  title        = {On picking sequences for chores},
  booktitle    = {{EC}},
  pages        = {626--655},
  publisher    = {{ACM}},
  year         = {2023}
}

@article{journals/teco/BarmanK20,
  author       = {Siddharth Barman and
                  Sanath Kumar Krishnamurthy},
  title        = {Approximation Algorithms for Maximin Fair Division},
  journal      = {{ACM} Trans. Economics and Comput.},
  volume       = {8},
  number       = {1},
  pages        = {5:1--5:28},
  year         = {2020}
}

@inproceedings{conf/sigecom/HuangL21,
  author       = {Xin Huang and
                  Pinyan Lu},
  title        = {An Algorithmic Framework for Approximating Maximin Share Allocation
                  of Chores},
  booktitle    = {{EC}},
  pages        = {630--631},
  publisher    = {{ACM}},
  year         = {2021}
}

@inproceedings{conf/sigecom/HuangS23,
  author       = {Xin Huang and
                  Erel Segal{-}Halevi},
  title        = {A Reduction from Chores Allocation to Job Scheduling},
  booktitle    = {{EC}},
  pages        = {908},
  publisher    = {{ACM}},
  year         = {2023}
}

@article{journals/siamdm/OhPS21,
  author       = {Hoon Oh and
                  Ariel D. Procaccia and
                  Warut Suksompong},
  title        = {Fairly Allocating Many Goods with Few Queries},
  journal      = {{SIAM} J. Discret. Math.},
  volume       = {35},
  number       = {2},
  pages        = {788--813},
  year         = {2021}
}

@article{steinhaus1948problem,
  title={The problem of fair division},
  author={Steinhaus, Hugo},
  journal={Econometrica},
  volume={16},
  pages={101--104},
  year={1948}
}

@book{foley1966resource,
  title={Resource allocation and the public sector},
  author={Foley, Duncan Karl},
  year={1966},
  publisher={Yale University}
}

@inproceedings{conitzer2017fair,
  title={Fair public decision making},
  author={Conitzer, Vincent and Freeman, Rupert and Shah, Nisarg},
  booktitle={Proceedings of the 2017 ACM Conference on Economics and Computation},
  pages={629--646},
  year={2017}
}

@article{journals/siamcomp/PlautR20,
  author       = {Benjamin Plaut and
                  Tim Roughgarden},
  title        = {Communication Complexity of Discrete Fair Division},
  journal      = {{SIAM} J. Comput.},
  volume       = {49},
  number       = {1},
  pages        = {206--243},
  year         = {2020}
}

@inproceedings{conf/sigecom/Feige25,
  author       = {Uriel Feige},
  title        = {Low communication protocols for fair allocation of indivisible goods},
  booktitle    = {{EC}},
  pages        = {358--382},
  publisher    = {{ACM}},
  year         = {2025}
}

@inproceedings{conf/aaai/Igarashi23,
  author       = {Ayumi Igarashi},
  title        = {How to Cut a Discrete Cake Fairly},
  booktitle    = {{AAAI}},
  pages        = {5681--5688},
  publisher    = {{AAAI} Press},
  year         = {2023}
}

@article{journals/geb/BiloCFIMPVZ22,
  author       = {Vittorio Bil{\`{o}} and
                  Ioannis Caragiannis and
                  Michele Flammini and
                  Ayumi Igarashi and
                  Gianpiero Monaco and
                  Dominik Peters and
                  Cosimo Vinci and
                  William S. Zwicker},
  title        = {Almost envy-free allocations with connected bundles},
  journal      = {Games Econ. Behav.},
  volume       = {131},
  pages        = {197--221},
  year         = {2022}
}

@inproceedings{conf/aaai/GolzIMS26,
  author       = {Paul G{\"{o}}lz and
                  Ayumi Igarashi and
                  Pasin Manurangsi and
                  Warut Suksompong},
  title        = {Fair Allocation of Indivisible Goods with Variable Groups},
  booktitle    = {{AAAI}},
  pages        = {16954--16962},
  publisher    = {{AAAI} Press},
  year         = {2026}
}

@inproceedings{conf/ec/BranzeiN19,
  author       = {Simina Br{\^{a}}nzei and
                  Noam Nisan},
  title        = {Communication Complexity of Cake Cutting},
  booktitle    = {{EC}},
  pages        = {525},
  publisher    = {{ACM}},
  year         = {2019}
}

@inproceedings{conf/ijcai/0002021,
  author       = {Daniel Halpern and
                  Nisarg Shah},
  title        = {Fair and Efficient Resource Allocation with Partial Information},
  booktitle    = {{IJCAI}},
  pages        = {224--230},
  publisher    = {ijcai.org},
  year         = {2021}
}

@inproceedings{conf/atal/0001LXZ23,
  author       = {Haris Aziz and
                  Bo Li and
                  Shiji Xing and
                  Yu Zhou},
  title        = {Possible Fairness for Allocating Indivisible Resources},
  booktitle    = {{AAMAS}},
  pages        = {197--205},
  publisher    = {{ACM}},
  year         = {2023}
}

@article{journals/ior/BenadeHP25,
  author       = {Gerdus Benad{\`{e}} and
                  Daniel Halpern and
                  Alex Psomas},
  title        = {Dynamic Fair Division with Partial Information},
  journal      = {Oper. Res.},
  volume       = {73},
  number       = {4},
  pages        = {1876--1896},
  year         = {2025}
}

@article{journals/iandc/LiMSS25,
  author       = {Zihan Li and
                  Pasin Manurangsi and
                  Jonathan Scarlett and
                  Warut Suksompong},
  title        = {Complexity of round-robin allocation with potentially noisy queries},
  journal      = {Inf. Comput.},
  volume       = {306},
  pages        = {105332},
  year         = {2025}
}

@inproceedings{conf/esa/BarmanBKS20,
  author       = {Siddharth Barman and
                  Umang Bhaskar and
                  Anand Krishna and
                  Ranjani G. Sundaram},
  title        = {Tight Approximation Algorithms for p-Mean Welfare Under Subadditive
                  Valuations},
  booktitle    = {{ESA}},
  series       = {LIPIcs},
  volume       = {173},
  pages        = {11:1--11:17},
  publisher    = {Schloss Dagstuhl - Leibniz-Zentrum f{\"{u}}r Informatik},
  year         = {2020}
}

@inproceedings{conf/aaai/Arunachaleswaran19,
  author       = {Eshwar Ram Arunachaleswaran and
                  Siddharth Barman and
                  Nidhi Rathi},
  title        = {Fair Division with a Secretive Agent},
  booktitle    = {{AAAI}},
  pages        = {1732--1739},
  publisher    = {{AAAI} Press},
  year         = {2019}
}

@inproceedings{conf/wine/BarmanV21,
  author       = {Siddharth Barman and
                  Paritosh Verma},
  title        = {Approximating Nash Social Welfare Under Binary {XOS} and Binary Subadditive
                  Valuations},
  booktitle    = {{WINE}},
  series       = {Lecture Notes in Computer Science},
  volume       = {13112},
  pages        = {373--390},
  publisher    = {Springer},
  year         = {2021}
}

@inproceedings{conf/atal/BarmanNV23,
  author       = {Siddharth Barman and
                  Vishnu V. Narayan and
                  Paritosh Verma},
  title        = {Fair Chore Division under Binary Supermodular Costs},
  booktitle    = {{AAMAS}},
  pages        = {2863--2865},
  publisher    = {{ACM}},
  year         = {2023}
}

@inproceedings{yao1979some,
  author       = {Andrew Chi{-}Chih Yao},
  title        = {Some Complexity Questions Related to Distributive Computing (Preliminary
                  Report)},
  booktitle    = {{STOC}},
  pages        = {209--213},
  publisher    = {{ACM}},
  year         = {1979}
}

@article{journals/tit/BravermanR14,
  author       = {Mark Braverman and
                  Anup Rao},
  title        = {Information Equals Amortized Communication},
  journal      = {{IEEE} Trans. Inf. Theory},
  volume       = {60},
  number       = {10},
  pages        = {6058--6069},
  year         = {2014}
}

@article{journals/siamcomp/GoosPW20,
  author       = {Mika G{\"{o}}{\"{o}}s and
                  Toniann Pitassi and
                  Thomas Watson},
  title        = {Query-to-Communication Lifting for {BPP}},
  journal      = {{SIAM} J. Comput.},
  volume       = {49},
  number       = {4},
  year         = {2020}
}

@article{journals/jacm/ChattopadhyayMS20,
  author       = {Arkadev Chattopadhyay and
                  Nikhil S. Mande and
                  Suhail Sherif},
  title        = {The Log-Approximate-Rank Conjecture Is False},
  journal      = {J. {ACM}},
  volume       = {67},
  number       = {4},
  pages        = {23:1--23:28},
  year         = {2020}
}

@inproceedings{journals/corr/abs-2602-06361,
  author       = {Zihan Li and
                  Yan Hao Ling and
                  Jonathan Scarlett and
                  Warut Suksompong},
  title        = {Envy-Free Allocation of Indivisible Goods via Noisy Queries},
  booktitle      = {{ICML}},
  year         = {2026}
}

@article{journals/amm/DubinsS61,
  author       = {Lester E. Dubins and
                  Edwin H. Spanier},
  title        = {How to Cut a Cake Fairly},
  journal      = {Amer. Math. Monthly},
  volume       = {68},
  number       = {1},
  pages        = {1--17},
  year         = {1961}
}

@book{books/daglib/RobertsonW98,
  author       = {Jack Robertson and
                  William A. Webb},
  title        = {Cake-Cutting Algorithms: Be Fair if You Can},
  publisher    = {A K Peters},
  year         = {1998}
}

@article{journals/talg/EdmondsP11,
  author       = {Jeff Edmonds and
                  Kirk Pruhs},
  title        = {Cake Cutting Really is Not a Piece of Cake},
  journal      = {{ACM} Trans. Algorithms},
  volume       = {7},
  number       = {4},
  pages        = {51:1--51:12},
  year         = {2011}
}

@inproceedings{conf/nips/BranzeiN22,
  author       = {Simina Br{\^{a}}nzei and
                  Noam Nisan},
  title        = {The Query Complexity of Cake Cutting},
  booktitle    = {Advances in Neural Information Processing Systems},
  volume       = {35},
  year         = {2022}
}

@article{journals/dam/EvenP84,
  author       = {Shimon Even and
                  Azaria Paz},
  title        = {A Note on Cake Cutting},
  journal      = {Discrete Applied Mathematics},
  volume       = {7},
  number       = {3},
  pages        = {285--296},
  year         = {1984}
}

@inproceedings{conf/ijcai/Procaccia09,
  author       = {Ariel D. Procaccia},
  title        = {Thou Shalt Covet Thy Neighbor's Cake},
  booktitle    = {{IJCAI}},
  pages        = {239--244},
  year         = {2009}
}

@article{journals/combinatorics/Stromquist08,
  author       = {Walter Stromquist},
  title        = {Envy-Free Cake Divisions Cannot be Found by Finite Protocols},
  journal      = {Electron. J. Comb.},
  volume       = {15},
  number       = {1},
  pages        = {R11},
  year         = {2008}
}

@inproceedings{conf/stoc/AzizM16,
  author       = {Haris Aziz and
                  Simon Mackenzie},
  title        = {A Discrete and Bounded Envy-Free Cake Cutting Protocol for Any Number of Agents},
  booktitle    = {{STOC}},
  pages        = {416--427},
  publisher    = {{ACM}},
  year         = {2016}
}

@article{journals/mathmag/PetersonS02,
  author       = {Elisha Peterson and
                  Francis Edward Su},
  title        = {Four-Person Envy-Free Chore Division},
  journal      = {Math. Mag.},
  volume       = {75},
  number       = {2},
  pages        = {117--122},
  year         = {2002}
}

@article{journals/tcs/HeydrichS15,
  author       = {Sandy Heydrich and
                  Rob van Stee},
  title        = {Dividing Connected Chores Fairly},
  journal      = {Theor. Comput. Sci.},
  volume       = {593},
  pages        = {51--61},
  year         = {2015}
}

@article{journals/corr/YamagataS26,
  author       = {Tatsuhito Yamagata and Hanna Sumita},
  title        = {Witness-Certified Fair Division with Comparison Queries},
  journal      = {arXiv preprint arXiv:2608.16109},
  year         = {2026},
  doi          = {10.48550/arXiv.2608.16109},
  url          = {https://arxiv.org/abs/2608.16109}
}

@inproceedings{conf/approx/BhaskarSV21,
  author       = {Umang Bhaskar and
                  A. R. Sricharan and
                  Rohit Vaish},
  title        = {On Approximate Envy-Freeness for Indivisible Chores and Mixed Resources},
  booktitle    = {{APPROX-RANDOM}},
  series       = {LIPIcs},
  volume       = {207},
  pages        = {1:1--1:23},
  publisher    = {Schloss Dagstuhl - Leibniz-Zentrum f{\"{u}}r Informatik},
  year         = {2021}
}

@article{goldman2015spliddit,
  title={Spliddit: Unleashing fair division algorithms},
  author={Goldman, Jonathan and Procaccia, Ariel D},
  journal={ACM SIGecom Exchanges},
  volume={13},
  number={2},
  pages={41--46},
  year={2015},
  publisher={ACM New York, NY, USA}
}

@article{BU2023103904,
title = {On existence of truthful fair cake cutting mechanisms},
journal = {Artificial Intelligence},
volume = {319},
pages = {103904},
year = {2023},
issn = {0004-3702},
doi = {https://doi.org/10.1016/j.artint.2023.103904},
url = {https://www.sciencedirect.com/science/article/pii/S0004370223000504},
author = {Xiaolin Bu and Jiaxin Song and Biaoshuai Tao}
}
\bibliographystyle{abbrv}
\newpage
\appendix
\section{Limitation of Comparison-Based Proportionality Oracles}\label{app:prop-oracles-chores}
In this appendix, we construct a counterexample showing that comparison queries are insufficient to implement PROP, PROP1, or PROP1-nonPROP oracles in the chores setting. 

\begin{example}[A unified impossibility example]
Consider an instance with four agents and four chores $e_1,e_2,e_3,e_4$ with identical additive costs.  
For a parameter $x\in\{4, 10\}$, let
$$
    c(e_1)=c(e_2)=c(e_3)=1,\qquad c(e_4)=x .
$$
Let $B=\{e_1,e_2,e_3\}$.  
The cost of $e_4$ is invisible to comparison queries:
for $x=4$ and $x=10$, every comparison between two bundles returns the same
answer.  
Indeed, in both cases, any bundle containing $e_4$ is more costly than
any bundle not containing $e_4$, and comparisons among bundles on the same side
depend only on the number of chores from $\{e_1,e_2,e_3\}$.  
Thus, any comparison-based oracle must behave identically in the two instances.
However, the proportional threshold depends on the hidden value of $x$.  
When $x=4$, after removing any chore from $B$, the remaining cost is
$2>c(M)/4=7/4$, so $B$ is not PROP1 (hence not PROP). 
When $x=10$, we have $c(B)=3\leq c(M)/4=13/4$, so $B$ is PROP (hence PROP1). 
Consequently, none of these oracles can be implemented using only comparison queries.
\end{example}

\section{A Common Comparison Lower Bound}\label{app:common-lower-bound}
In this appendix, we justify the lower-bound statement made in the preliminary section.

\begin{proposition}\label{prop:common-comparison-lower-bound}
For additive chore instances, any deterministic comparison-based algorithm that is guaranteed to compute a PROP1 allocation, an EF1 allocation, or an $\alpha$-MMS
allocation for a fixed nontrivial approximation factor $1\leq \alpha<n$, requires
$\Omega(\log(m/n))$ comparisons in the worst case.
In particular, when $n$ is constant, the lower bound is $\Omega(\log m)$.
\end{proposition}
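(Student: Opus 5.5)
We will argue with an adversary (information-theoretic) construction that hides a single heavy chore among many negligible ones. Fix $n$ and write $m' = m-(n-1)$. The instance consists of $n-1$ \emph{huge} chores $h_1,\dots,h_{n-1}$, one \emph{big} chore $b$ placed at an unknown position among $m'$ candidate items, and $m'-1$ \emph{tiny} chores. All agents share the same additive cost: $c(h_j)=H$, $c(b)=1$, and $c(e)=\delta$ for each tiny chore $e$. We take $H\gg 1\gg m\delta$ and arrange the magnitudes so that comparing two bundles reduces lexicographically to comparing first the number of huge chores, then whether $b$ is present, then the number of tiny chores. Consequently a comparison $\Comp_i(X,Y)$ yields at most one bit of information about the hidden location of $b$. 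More precisely, the answer is determined by the (public) counts unless the counts of huge chores agree, and in that case it depends only on whether $b\in X\setminus Y$, $b\in Y\setminus X$, or neither. This gives at most three outcomes per query. A deterministic algorithm making $q$ queries therefore distinguishes at most $3^q$ classes of positions of $b$.

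Next we show that every fair output must ``know'' where $b$ is, up to a bounded set. We choose the parameters so that in every one of the three fairness notions the $n-1$ huge chores must be split one per agent, and the agent without a huge chore must receive $b$ together with essentially all of the tiny chores. Then $b$ and every other chore must end up with that agent.

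\textbf{MMS.} Here $\MMS_i=H$, achieved by the partition $\{h_1\},\dots,\{h_{n-1}\},\{b\}\cup\text{tiny}$. Hence no agent may hold two huge chores when $\alpha<2$. For general $\alpha<n$, we instead take $n$ huge chores and argue via the MMS bound directly, so this case needs slight care.

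\textbf{PROP1 / EF1.} We modify the instance so that the fairness condition depends on identifying $b$. The cleanest choice is to force that, among the bundles not containing a huge chore, a bundle holding $b$ together with at least one other item violates the notion, while a bundle holding only tiny items (or $b$ alone) is fine. Any correct output must then isolate $b$ or separate it from the other items in a way that pins its location to a set of $O(n)$ candidates. For EF1 this means: an agent holding $b$ plus a tiny chore still envies an agent holding a single tiny chore, as long as some other agent's bundle is cheap. In every case the output determines the position of $b$ up to $O(n)$ possibilities, since at most $n$ bundles can each contain a specially treated item.

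Combining the two parts, the $m'$ possible positions of $b$ must fall into at most $3^q\cdot O(n)$ classes. This forces $q\ge \log_3(m'/O(n))=\Omega(\log(m/n))$.

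The main obstacle is the second part: designing a single cost profile for each notion such that any valid output reveals $b$'s location up to $O(n)$ candidates, while queries still leak only $O(1)$ bits. I would first try the simplest version, with $n$ identical agents, one big chore and all other chores tiny, and tune $\delta$ to the proportional share. For PROP1, every agent's bundle minus one item must cost at most $c(M)/n\approx 1/n$, so $b$ must be the removed item of its bundle. That pins $b$ only to the items of that bundle, which could be large. To fix this, I would add $n-1$ medium chores of cost just above $c(M)/n$ and adjust the instance so that the holder of $b$ can hold nothing else costly.
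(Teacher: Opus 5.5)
There is a genuine gap. It lies in the step you flagged as the main obstacle: the claim that every fair output pins down the position of $b$ to $O(n)$ candidates is false for your instances, not merely unproved. Keep the huge chores at their fixed positions. The allocation that gives $h_j$ to agent $j$ for $j\le n-1$ and every remaining chore to agent $n$ uses no queries. It is exactly MMS, since agent $n$ pays $1+(m'-1)\delta\le H=\MMS_n$. It is EF1: agent $n$ drops $b$ and is left with $(m'-1)\delta\le H$, and every other agent drops her huge chore. It is PROP1, since $c(M)/n\ge (n-1)H/n$. Your own requirement that ``the agent without a huge chore must receive $b$ together with essentially all of the tiny chores'' already makes the output independent of where $b$ is.

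Removing the huge chores does not help. A count-balanced split of $b$ and the tiny chores is EF1 and PROP1 wherever $b$ lies. The holder of $b$ simply removes $b$, and what remains costs no more than any other bundle and no more than the share. Adding medium chores meets the same obstruction, because a single hidden costly chore can always be the item forgiven by the ``up to one'' relaxation. So the counting inequality $3^q\cdot O(n)\ge m'$ has nothing to act on.

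The missing idea is to hide several costly chores, and to replace ``the output must locate $b$'' by ``the output cannot keep all candidates apart.'' The paper gives all agents the same cost: $1$ on an unknown $k$-set $H$ and $0$ elsewhere. It takes $k=2$ for EF1, $k=3$ for PROP1 (so that $2>3/n$ holds even for $n=2$), and $k=\lfloor\alpha\rfloor+1\le n$ for $\alpha$-MMS (so $\MMS_i=1$ while $k>\alpha$).

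The adversary keeps a candidate set $G$ such that every $k$-subset of $G$ is consistent with all answers so far. A query $\Comp_i(X,Y)$ splits $G$ into four classes by membership in $X$ and $Y$. Once $H$ lies inside one class, the answer is determined, so keeping the largest class gives $|G|\ge m/4^q$. If $|G|>nk$, some output bundle contains $k$ candidates. Placing $H$ there is consistent with the transcript and violates the notion, so $q\ge\log_4(m/(nk))=\Omega(\log(m/n))$.

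Your first half, that each comparison leaks $O(1)$ bits about the hidden configuration, is the right ingredient. What has to change is the hidden configuration itself, together with a pigeonhole step applied to the output.
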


\begin{proof}
We give one adversary argument that applies to all three guarantees. Fix an
integer $k\geq 2$, chosen later. The adversary hides a set $H\subseteq M$ of
$k$ costly chores. All agents have the same additive cost function: for every
agent $i\in N$,
$$
    c_i(e)=
    \begin{cases}
    1, & e\in H,\\
    0, & e\notin H .
    \end{cases}
$$
The algorithm does not know $H$.
The adversary maintains a candidate set $G\subseteq M$ with the invariant that
every $k$-subset of $G$ is still a possible choice for $H$. Initially $G=M$.
Consider a comparison query $\mathrm{Comp}_i(X,Y)$. Partition $G$ into four
classes:
$$
    G\setminus(X\cup Y),\qquad
    G\cap X\cap Y,\qquad
    G\cap(X\setminus Y),\qquad
    G\cap(Y\setminus X).
$$
Let $G'$ be a largest class, so $|G'|\geq |G|/4$. The adversary replaces $G$ by
$G'$ and answers consistently with the hypothesis that all costly chores lie in
$G'$. Namely, if $G'\subseteq G\setminus(X\cup Y)$ or $G'\subseteq X\cap Y$,
then $X$ and $Y$ have equal cost, so the adversary answers according to the
tie-breaking rule. If $G'\subseteq X\setminus Y$, then $X$ is more costly than
$Y$, so the adversary answers $Y$. If $G'\subseteq Y\setminus X$, then the
adversary answers $X$.
Thus, after $q$ comparisons, $|G|\geq  m/4^q$.
If
$$
    q<\log_4\!\left(\frac{m}{nk}\right),
$$
then $|G|>nk$. 
When the algorithm outputs an allocation $(X_1,\ldots,X_n)$, some bundle $X_i$ contains at least $k$ chores from $G$. 
Choose $H\subseteq X_i\cap G$ with $|H|=k$. By the invariant, this choice of $H$ is consistent with the entire comparison transcript. Since the algorithm is deterministic, it outputs the same allocation on this instance, and all $k$ costly chores are assigned to agent $i$.
We now choose $k$ according to the fairness notion.

For EF1, take $k=2$. Then agent $i$ receives both costly chores, while some other agent receives cost $0$. Even after removing one chore from $X_i$, agent $i$'s cost is at least $1$, so the allocation is not EF1.

For PROP1, take $k=3$. The proportional share is $3/n$. Agent $i$ receives all
three costly chores, and after removing any one chore her remaining cost is at
least $2$. 
Since $3/n < 2$ for $n \geq 2$, the allocation is not PROP1.

For $\alpha$-MMS, take
$k=\lfloor \alpha\rfloor+1$.
Since $1\leq \alpha<n$, we have $k\leq n$ and $k>\alpha$. In the constructed
instance, the MMS value of every agent is $1$: the $k$ costly chores can be placed in distinct bundles, and some bundle must contain a costly chore. 
But agent $i$ receives cost $k>\alpha\cdot 1$, so the allocation is not $\alpha$-MMS. 
Therefore, any deterministic comparison-based algorithm that is guaranteed to compute one of these allocations must use at least $\log_4\!\left({m}/(nk)\right)$
comparisons in the worst case. 
For EF1 and PROP1, $k$ is an absolute constant.
For $\alpha$-MMS, $k=\lfloor\alpha\rfloor+1$, which is constant for every fixed
approximation factor $\alpha$. Hence, the lower bound is $\Omega(\log(m/n))$.
When $n$ is constant, this becomes $\Omega(\log m)$.
\end{proof}

\section{Limitations of the Residual Algorithm}\label{app:mms-discussion}
In this appendix, we provide the obstruction behind the residual algorithm in the following lemma.

\begin{lemma}\label{lemma:hall-limit}
For every $n\geq 2$ and every $\alpha<2n/(n+2)$, the residual Hall approach
cannot guarantee an $\alpha$-MMS allocation, even when cardinal costs are
available.
\end{lemma}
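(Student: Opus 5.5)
We must show: for every $n\ge 2$ and $\alpha<2n/(n+2)$, the residual Hall approach (each round: pick agents who receive a bundle within their $\alpha$-MMS threshold while every remaining agent views the bundle as costing more than her own $\alpha$-MMS threshold, then recurse on the residual items) can fail. The plan is a single explicit identical-cost instance, tracked through the residual dynamics.

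Use identical additive costs, normalized so every agent's MMS value is $1$. Take $n$ large chores of cost $2/(n+2)$ each and, in addition, many tiny chores (or enough divisible-like small chores) of total cost $n\cdot n/(n+2)$. Any MMS partition then places one large chore plus small chores of total $n/(n+2)$ in each bundle, so $\mu=1$; the total cost is $n$. Intuitively, the extremal loss ratio is $2n/(n+2)=2-4/(n+2)$, and the construction aims to make the last agent's forced bundle approach that.

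The key step is the residual invariant. With identical costs, a bundle allocated in an early round must cost strictly more than $\alpha$ to each remaining agent (that is the "no longer useful" condition) yet at most $\alpha$ to the recipient. Since all costs are identical these two requirements clash. I would therefore take the approach's natural reading: allocated bundles cost at most $\alpha$ to the receiver, and the remaining agents must still be able to get at most $\alpha$ from the residual. I then argue that an adversarially chosen valid execution (the approach only guarantees \emph{some} Hall-feasible choice, not a careful one) allocates to the first $n-1$ agents bundles each consisting of roughly $\alpha$ worth of small chores and no large chore. This is possible whenever $\alpha\ge 2/(n+2)$ and the small chores suffice. The last agent is then left with all $n$ large chores plus the remaining small chores, of cost at least $n-(n-1)\alpha$. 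Tuning the split between large and small chores shows that this cost exceeds $\alpha$ exactly when $\alpha<2n/(n+2)$, so the procedure cannot certify an $\alpha$-MMS allocation.

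To make the argument robust against tie-breaking, I would also perturb the costs slightly so that every step of the bad execution is forced by strict inequalities. This requires checking that the Hall matching condition selects only the bad bundles, i.e., that no alternative bundle passes the "greater than $\alpha$ for all others" test in a way that would rescue the execution.

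The main obstacle is pinning down the precise residual rule, which is only given in the appendix's informal description. The counterexample must defeat every execution the rule permits, not just a bad one. If the rule instead lets the algorithm choose any valid bundle, I would switch to non-identical costs. In that version, the first $n-1$ agents value the large chores low relative to their MMS and the small ones high, so each of their $\alpha$-MMS-feasible bundles greedily absorbs small chores, while all remaining agents see those bundles as exceeding their threshold. The last agent is then left with too much. The calibration of the two cost scales to hit $2n/(n+2)$ is the step I expect to require the most care.
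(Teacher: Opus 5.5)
There is a genuine gap. The proposal never exhibits an execution that the residual rule actually permits, and the mechanism you aim for cannot produce the threshold $2n/(n+2)$.

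With identical costs, as you note yourself, no bundle can cost at most $\alpha\mu$ to its recipient and more than $\alpha\mu$ to the remaining agents, so your bad run does not exist. The substitute rule you adopt is not the residual Hall approach. Under that substitute rule the bad run is invalid anyway, since it ends with an agent who cannot obtain cost at most $\alpha$.

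More fundamentally, peeling off $n-1$ agents and overloading the last one is impossible under the actual rule for every $\alpha\ge 1$ and any cost functions. The last agent $j$ sees each of the $n-1$ removed bundles as costing more than $\alpha\MMS_j$. If her leftover also exceeded $\alpha\MMS_j$, these $n$ bundles would partition $M$ with $c_j(M)>n\alpha\MMS_j\ge n\MMS_j\ge c_j(M)$, a contradiction. This is exactly why your bound $n-(n-1)\alpha$ exceeds $\alpha$ only when $\alpha<1$, not when $\alpha<2n/(n+2)$. A total-cost count cannot detect this threshold, because it comes from indivisibility.

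Your non-identical fallback has the same doomed structure. It also leaves the calibration, which is the whole content of the proof, undone. And it is described backwards: agents who find the small chores expensive cannot cheaply absorb them.

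The missing idea is to peel off a \emph{single} agent who is indifferent to the small chores, and then apply pigeonhole to the large chores. This is the paper's construction:
\begin{itemize}
\item Take large chores $\ell_1,\dots,\ell_n$ and small chores $s_1,\dots,s_n$.
\item Agent $1$ has cost $x$ on each large chore and $0$ on each small one.
\item Agents $2,\dots,n$ have cost $x$ on large chores and $1-x$ on small chores.
\item Choose $\alpha/2<x<1-\alpha/n$, which is satisfiable exactly when $\alpha<2n/(n+2)$.
\end{itemize}
Then $\MMS_1=x$ and $\MMS_i=1$ for $i\ge 2$. The bundle of all small chores costs $0\le\alpha\MMS_1$ to agent $1$ but $n(1-x)>\alpha$ to every other agent. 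So the residual step may assign it to agent $1$. Afterwards $n$ large chores must be split among $n-1$ agents, so some $i\ge 2$ receives two of them, at cost $2x>\alpha=\alpha\MMS_i$.

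A single permitted failing run is all that ``cannot guarantee'' requires. So your worry about having to defeat every execution the rule allows is unnecessary.
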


\begin{proof}
Choose $x$ such that
$$
    \frac{\alpha}{2}<x<1-\frac{\alpha}{n},
$$
which is possible because $\alpha<2n/(n+2)$.  Let
$L=\{\ell_1,\ldots,\ell_n\}$ be the large chores and
$S=\{s_1,\ldots,s_n\}$ be the small chores.  The instance is summarized in
Table~\ref{table:residual-hall-instance}.

\begin{table}[!ht]
\caption{Hard instance for the residual Hall approach.}
\label{table:residual-hall-instance}
\centering
\begin{tabular}{@{}ccccccc@{}}
\toprule
\makecell{Agent} & $\ell_1$ & $\cdots$ & $\ell_n$ & $s_1$ & $\cdots$ & $s_n$ \\
\midrule
$1$ & $x$ & $\cdots$ & $x$ & $0$ & $\cdots$ & $0$ \\
$2$ & $x$ & $\cdots$ & $x$ & $1-x$ & $\cdots$ & $1-x$ \\
$\vdots$ & $\vdots$ & & $\vdots$ & $\vdots$ & & $\vdots$ \\
$n$ & $x$ & $\cdots$ & $x$ & $1-x$ & $\cdots$ & $1-x$ \\
\bottomrule
\end{tabular}
\end{table}

It is easy to verify that $\MMS_1 = x$ and $\MMS_i = 1$ for $i \geq 2$.  
Now consider the bundle $S$ of all small chores.
For agent $1$, $c_1(S)=0\leq \alpha x=\alpha\cdot \MMS_1$,
whereas for every agent $i\geq 2$,
$
    c_i(S)=n(1-x)>\alpha=\alpha\cdot \MMS_i,
$
so the residual algorithm may allocate $S$ to agent $1$ and remove her from the instance.

After this residual step, all $n$ large chores remain and must be allocated among only $n-1$ agents.
Then by the pigeonhole principle, at least one remaining agent must receive at least two large chores. 
If this agent is some $i \geq 2$, her cost becomes at least $2x > \alpha = \alpha \cdot \MMS_i$ by the choice of $x$. 
Consequently, the resulting allocation violates the $\alpha$-MMS requirement. 
This demonstrates that the residual algorithm faces an intrinsic barrier of $2n/(n+2)$, even with access to cardinal information.
\end{proof}

\section{The Computation of Witness EF1 Allocation}\label{app:one-witness-ef1}
In this appendix, we prove Lemma~\ref{lemma: first_EF1}.  
Fix an additive cost function $c$ and let $\Comp$ be its comparison oracle.  Since the subproblem has three identical copies of the same cost function, it suffices to compute a three-partition $(X_1,X_2,X_3)$ and a witness item $e_i\in X_i$ for every
nonempty bundle such that
$$
    c(X_i \setminus \{e_i\})\leq \min_{j\in[3]} c(X_j)
    \qquad\text{for every nonempty }X_i.
$$
Afterwards, each nonempty bundle is ordered internally by placing its witness
first.

We first give the partial allocation subroutine used by the construction.  
In each round, the remaining chores are split into three balanced partitions and the minimum-cost tentative bundle is kept.

\begin{algorithm}[!ht]
\caption{MinBundleKept}
\KwIn{Instance $\mathcal{I} = ([3], M, c)$}
Initialize $P_i\gets\emptyset$ for all $i\in\{1,2,3\}$, and $R\gets M$\;
\While{$|R|\geq 3$}{
    Compute a partition $(R_1,R_2,R_3)$ of $R$ where
    $\big| |R_i|-|R_j| \big|\leq 1$ for all $i,j \in [3]$\;
    \For{$i\in\{1,2,3\}$}{
        $T_i\gets P_i\cup R_i$\;
    }
    $j\gets \arg \min_{i \in [3]} c(T_i)$ \;
    $P_j\gets T_j$\;
    $R\gets R\setminus R_j$\;
}
\Return $(P_1,P_2,P_3),R$\;
\KwOut{Partial allocation ($P_1$, $P_2$, $P_3$) and remaining items $R$}
\end{algorithm}

We next describe the remaining construction in steps.

\paragraph{Moving-knife Procedure.}
Given an ordered sequence $S$ and the benchmark bundle $B$, define
$\textsc{MovingKnife}(S,B)$ as follows.
It binary searches for the smallest $\ell$ satisfying
$$
    \Comp(B,\Pre(S,\ell))=B.
$$
It outputs $C=\Pre(S,\ell)$, updates the remaining sequence to $\Suf(S,\ell)$,
and records the last item of $C$ as its witness.
The predicate is monotone because prefix costs only increase.

\paragraph{Step 1: \textsc{MinBundleKept}.}
We first run $\textsc{MinBundleKept}$ and let the output be partial allocation $(P_1,P_2,P_3)$ and remaining set $R$.
If $R=\emptyset$, output $(P_1,P_2,P_3)$ and use any item of each nonempty bundle as its witness.

\paragraph{Step 2: Handle the Zero Benchmark.}
Relabel the partial bundles so that $c(P_1)\leq c(P_2)\leq c(P_3)$.   
If $c(P_3)=0$, allocate the items in $R$ one by one to a current minimum-cost bundle, and use the newly inserted item as the witness of the updated bundle.

\paragraph{Step 3: Roll Back to the Tentative Partition.}
Assume $c(P_3)>0$.  If $|R|=1$, let $r_2$ be the unique item in
$R$; if $|R|=2$, write $R=\{r_1,r_2\}$ with
$c(r_1)\leq c(r_2)$.  Consider the last round of $\textsc{MinBundleKept}(M)$
in which the final bundle $P_3$ was selected.  Let $(B_1,B_2,P_3)$ be the tentative
partition formed in that round, where $P_3$ is the selected bundle.  Relabel
$B_1,B_2$ so that
$r_2\in B_2$, order $B_2$ internally so that $r_2$ is its last item, and order
$B_1$ arbitrarily.  Let $S$ be the ordered sequence of $B_1\cup B_2$ in which
all items of $B_1$ appear before all items of $B_2$.

\paragraph{Step 4: Cut Two Witnessed Bundles.}
Run $\textsc{MovingKnife}(S,P_3)$ twice.  This creates bundles $C_1,C_2$ with
witness items $e_1,e_2$, respectively, and leaves a residual sequence still denoted by $S$.
Choose any item of $P_3$ as the witness of $P_3$.  If $S=\emptyset$, output
$(C_1,C_2,P_3)$.

\paragraph{Step 5: Cleanup Exchange.}
Assume $S\neq\emptyset$.  If $|R|=2$ and $r_1\notin S$, let $C_i$ be the
bundle among $C_1,C_2$ that contains $r_1$, and let $e_i$ be its current
witness item.  If $c(C_i\setminus \{r_1\})\geq c(P_3)$, replace $C_i$ by
$C_i\setminus \{r_1\}$, keep $e_i$ as its witness, and append $r_1$ to $S$.  Otherwise,
find the shortest prefix $I=\Pre(S,\ell)$ such that
$c((C_i\setminus \{r_1\})\cup I)\geq c(P_3)$, replace $C_i$ by $(C_i\setminus \{r_1\})\cup I$, use the
last item of $I$ as its witness, update
$S=\Suf(S,\ell)$, and append $r_1$ to $S$.

\paragraph{Step 6: Allocate the Remaining Chores.}
Let $S_0=S\setminus R$.  Add all chores in $S_0$ to $P_3$.  Then let
$Q=S\cap R$.  Allocate each item in $Q$ to a current minimum-cost bundle,
using the inserted item as the new witness.  Finally, order every nonempty
output bundle internally so that its witness is first.

\begin{proof}[Proof of Lemma~\ref{lemma: first_EF1}]
We prove the correctness and query complexity of the above construction.

First, \textsc{MinBundleKept} leaves at most two unallocated chores.
Indeed, in every round the selected balanced block has size at least a constant
fraction of the current remaining set.  Hence, after $O(\log m)$ rounds, the
remaining set $R$ has size at most $2$.  Each round uses only $O(1)$
comparisons, since finding the minimum of three tentative bundles requires a
constant number of comparisons.

If $R=\emptyset$, the returned partial bundles already form a
witness EF1 partition.  To see this, observe that every time
\textsc{MinBundleKept} updates a bundle, it keeps a minimum-cost bundle among
three tentative bundles that partition $M$.  Thus, every final partial bundle has
cost at most $c(M)/3$.  Since the three partial bundles now partition all of
$M$, their costs must all be exactly $c(M)/3$.  Therefore, for any nonempty
bundle $P_i$ and any item $e\in P_i$,
$$
    c(P_i \setminus \{e\})\leq c(P_i)=c(P_j)
    \qquad\text{for every }j\in\{1,2,3\}.
$$

Next suppose $R\neq\emptyset$.  After relabeling, if $c(P_3)=0$, then all three partial bundles have
zero cost.  Since at most two chores remain, we allocate them one by one to a
current minimum-cost bundle and use the newly added chore as the witness of the
updated bundle.  If a chore $e$ is added to a bundle $X$ of minimum cost, then
$$
    c((X\cup \{e\})\setminus \{e\})=c(X),
$$
and the minimum bundle cost cannot decrease after adding a chore.  Hence, the
witness condition is preserved at each insertion.

It remains to consider the case $c(P_3)>0$.  Consider the last round in which the
final bundle $P_3$ was selected by \textsc{MinBundleKept}.  Let
$(B_1,B_2,P_3)$ be the tentative partition formed in that round.  Since $P_3$ was the
selected minimum-cost tentative bundle, we have
$$
    c(P_3)\leq c(B_1)
    \qquad\text{and}\qquad
    c(P_3)\leq c(B_2).
$$
Moreover, $(B_1,B_2,P_3)$ is a partition of $M$, and the final remaining set
$R$ is contained in $B_1\cup B_2$.

Now order $B_1$ and $B_2$ as in the algorithm, and let $S$ be the resulting
ordered sequence of $B_1\cup B_2$ with all items of $B_1$ before all items of
$B_2$.  The first call to \textsc{MovingKnife} finds a prefix $C_1$ whose cost
reaches
$c(P_3)$ for the first time.  Since $c(B_1)\geq c(P_3)$, this cut exists and ends
within $B_1$.  If $e_1$ is the last item of $C_1$, then by the minimality of the
prefix,
$$
    c(C_1)\geq c(P_3)
    \qquad\text{and}\qquad
    c(C_1\setminus \{e_1\})<c(P_3).
$$
After removing this prefix, the remaining sequence still contains all of $B_2$.
Since $c(B_2)\geq c(P_3)$, the second call to \textsc{MovingKnife} produces
$C_2$ and witness item $e_2$ with
$$
    c(C_2)\geq c(P_3)
    \qquad\text{and}\qquad
    c(C_2\setminus \{e_2\})<c(P_3).
$$
Thus, $(C_1,C_2,P_3)$ already has the desired witness structure on the allocated chores. 
If the remaining sequence $S$ is empty, the partition is witness EF1.

Assume from now on that $S\neq\emptyset$.  Because $r_2$ was placed as the last
item of $B_2$, and because the two cuts are prefixes of the initial ordered
sequence, the
remaining sequence contains $r_2$.  If $|R|=2$ and $r_1\notin S$, then
$r_1$ lies in one of $C_1,C_2$, say $C_i$.  If
$c(C_i\setminus \{r_1\})\geq c(P_3)$, we remove $r_1$ from $C_i$ and put it into the remaining
sequence.  The old witness of $C_i$ is still valid: it cannot be $r_1$, because
otherwise $c(C_i\setminus \{r_1\})<c(P_3)$, contradicting $c(C_i\setminus \{r_1\})\geq c(P_3)$.

If instead $c(C_i\setminus \{r_1\})<c(P_3)$, then the ordered sequence $S$ contains $r_2$,
and $c(r_1)\leq c(r_2)$.  Since $c(C_i)\geq c(P_3)$, adding $r_2$ to
$C_i\setminus \{r_1\}$ also reaches cost $c(P_3)$.  Therefore, some prefix
$I$ of $S$ makes $(C_i\setminus \{r_1\})\cup I$ reach cost $c(P_3)$.  
The algorithm chooses the shortest such prefix.
If $e_I$ is the last item of $I$, then
$$
    c((C_i\setminus \{r_1\})\cup I)\geq c(P_3)
    \qquad\text{and}\qquad
    c((C_i\setminus \{r_1\})\cup (I\setminus \{e_I\}))<c(P_3).
$$
Thus, replacing $C_i$ by $(C_i\setminus \{r_1\})\cup I$ preserves the witness property, with
$e_I$ as the new witness.  After this cleanup step, either $S\setminus R=\emptyset$
or $R\subseteq S$.

We claim that all remaining chores in $S_0=S\setminus R$ have zero cost.
If $S_0=\emptyset$, this is immediate.  Otherwise, by the conclusion of the
cleanup step, $R\subseteq S$.  Recall that the final output of
\textsc{MinBundleKept} before the cleanup is $(P_1,P_2,P_3)$ with remaining set
$R$, where
$$
    c(P_1)\leq c(P_2)\leq c(P_3).
$$
Hence, we have
$$
    c(R)
    =c(M)-c(P_1)-c(P_2)-c(P_3)
    \geq c(M)-3\cdot c(P_3).
$$
On the other hand, after the moving-knife and cleanup steps, the bundles
$C_1,C_2,P_3$ together with the remaining sequence $S=R\cup S_0$ partition
$M$.  Since $c(C_1)\geq c(P_3)$ and $c(C_2)\geq c(P_3)$, we obtain
$$
    c(M)
    =c(C_1)+c(C_2)+c(P_3)+c(R)+c(S_0)
    \geq 3 \cdot c(P_3) + c(R)+c(S_0)
    \geq c(M)+c(S_0).
$$
Therefore, $c(S_0)=0$.  Adding $S_0$ to $P_3$ does not change the cost of $P_3$ or
any witness inequality.

Finally, the only unallocated chores are the items in
$Q=S\cap R$, so $|Q|\leq 2$.  We allocate them one by one to a current
minimum-cost bundle.  Suppose an item $e\in Q$ is added to a current
minimum-cost bundle $X$.  Using $e$ as the new witness gives
$$
    c((X\cup \{e\})\setminus \{e\})=c(X).
$$
After adding a chore, the minimum bundle cost cannot decrease, so this remains
at most the new minimum bundle cost.  All other bundles keep their old witnesses.
Thus, the witness EF1 invariant is preserved until all chores are allocated.

Finally, we show that the overall query complexity of the construction is $O(\log m)$. 
Specifically, the \textsc{MinBundleKept} phase requires $O(\log m)$ comparisons, following a query complexity analysis similar to that of Lemma~\ref{lemma: phase1}. 
Since each execution of \textsc{MovingKnife} performs a binary search on a monotone prefix-cost predicate, it also consumes $O(\log m)$ comparisons. 
Furthermore, the cleanup exchange needs at most one additional binary search, while all remaining operations deal with only a constant number of bundles or at most two chores. 
Consequently, the algorithm computes a witness EF1 partition using $O(\log m)$ comparison queries in total.
\end{proof}
\end{document}